\newtheorem{theorem}{Theorem}[section]
\newtheorem{lemma}{Lemma}[section]
\theoremstyle{definition}
\newtheorem{definition}{Definition}[section]
\newtheorem*{remark}{Remark}
\theoremstyle{condition}
\newtheorem{condition}{Condition}[section]
\DeclareMathAlphabet{\bi}{OML}{cmm}{b}{it}
\DeclareMathAlphabet{\bcal}{OMS}{cmsy}{b}{n}
\DeclareMathAlphabet{\brmn}{OT1}{cmr}{bx}{n}
\DeclareMathSymbol{\R}{\mathalpha}{AMSb}{"52}
\newcommand{\bdelta}{\boldsymbol{\delta}}
\newcommand{\bnu}{\boldsymbol{\nu}}
\newcommand{\brho}{\boldsymbol{\rho}}
\newcommand{\Lc}{\mathcal{L}}
\newcommand{\Rb}{\mathbb{R}}
\renewcommand{\o}{\omega}
\DeclareMathAlphabet{\bi}{OML}{cmm}{b}{it}
\DeclareMathAlphabet{\bcal}{OMS}{cmsy}{b}{n}
\DeclareMathAlphabet{\brmn}{OT1}{cmr}{bx}{n}
\def \x{\mathbf{x}}
\def \v{\mathbf{v}}
\def \y{\mathbf{y}}
\def \e{\mathbf{e}}
\def \d{\mathbf{d}}
\def \a{\mathbf{a}}
\title{A Generalization of Wirtinger Flow for Exact Interferometric Inversion}
\author{Bariscan Yonel, Birsen Yazici}
\affil{Electrical \& Computer Systems Engineering, Rensselaer Polytechnic Institute, Troy, NY}
\providecommand{\keywords}[1]{\textbf{\textit{Key words---}} #1}
\begin{document}
\date{\vspace{-3ex}}
\maketitle

\begin{abstract}
\small{Interferometric inversion 
involves recovery of a signal from \emph{cross}-correlations of its linear transformations. 
A close relative of interferometric inversion is the generalized phase retrieval problem, which 
consists of recovering a signal from the \emph{auto}-correlations of its linear transformations. 
Recently, significant advancements have been made in phase retrieval methods 
despite the ill-posed, and non-convex nature of the problem. 
One such method is Wirtinger Flow (WF) \cite{candes2015phase}, a non-convex optimization framework that provides high probability guarantees of exact recovery under certain measurement models, such as coded diffraction patterns, and Gaussian sampling vectors. 
In this paper, we develop a generalization of WF for interferometric inversion, which we refer to as Generalized Wirtinger Flow (GWF). 
GWF theory extends the probabilistic exact recovery results in \cite{candes2015phase} to arbitrary measurement models characterized in the equivalent lifted problem, hence covers a larger class of measurement models. 
GWF framework unifies the theory of low rank matrix recovery (LRMR) and the non-convex optimization approach of WF, thereby establishes theoretical advantages of the non-convex approach over LRMR.  
We show that the conditions for exact recovery via WF can be derived through a low rank matrix recovery formulation. 
We identify a new sufficient condition on the lifted forward model that directly implies exact recovery conditions of standard WF.
This condition is less stringent than those of LRMR, which is the state of the art approach for exact interferometric inversion.
We next establish our sufficient condition for the cross-correlations of linear measurements collected by complex Gaussian sampling vectors. 
In the particular case of the Gaussian model, we show that the exact recovery conditions of standard WF imply our sufficient condition, and that the regularity condition of WF is redundant for the interferometric inversion problem. 
Finally, we demonstrate the effectiveness of GWF numerically in a deterministic multi-static radar imaging scenario.} 


\end{abstract}

\keywords{interferometric inversion, Wirtinger Flow, phase retrieval, wave-based imaging, interferometric imaging, low rank matrix recovery, PhaseLift} 


\section{Introduction} \label{sec:Intro}

\emph{Interferometric inversion} involves the recovery of a signal of interest from the cross-correlations of its linear measurements, each collected by a different sensing process.
Let $\mathbf{L}^m_i, \mathbf{L}^m_j \in \mathbb{C}^N$ denote the $m^{th}$ sampling vectors of the $i^{th}$ and $j^{th}$ sensing processes and $\brho_t \in \mathbb{C}^N$ be the ground truth/signal of interest.
We define
\begin{equation}
f^m_i = \langle \mathbf{L}^m_i , \brho_t \rangle, \quad f^m_j = \langle \mathbf{L}^m_j , \brho_t \rangle, \quad m = 1, \cdots, M,
\end{equation}
as the linear measurements and describe the cross-correlated measurements as
\begin{equation}\label{eq:interferom}
d^m_{ij}  = f^m_i  \overline{f^m_j} =  (\mathbf{L}^m_i )^H \brho_t \brho_t^H \mathbf{L}^m_j \quad m = 1, \cdots M,
\end{equation}
where $\overline{( \cdot )}$ denotes complex conjugation. Thus, interferometric inversion involves recovery of $\brho_t \in \mathbb{C}^N$  from $d^m_{ij} \in \mathbb{C}, \ m=1,...,M$ using the model in \eqref{eq:interferom}.

Interferometric inversion problem arises in many applications in different disciplines. 
These include radar and sonar interferometry \cite{goldstein1987interferometric, saebo2010seafloor, bamler1998synthetic}, passive imaging in acoustic, electromagnetic and geophysical applications \cite{Yarman08, Wang11, Wang10, yarman2008bistatic, Yarman10, LWang12_b, Mason2015, son2015passive, Wang13_3, Wang14,  LWang12, wang12, son2017passive, Wacks14, Wacks14_2, ammari2013passive}, interferometric microscopy \cite{ralston2007interferometric}, beamforming and sensor localization in large area networks \cite{stoica2007probing}, \cite{patwari2005locating, gezici2005localization, guivant2001optimization} among others. 
Additionally, cross-correlations were shown to provide robustness to statistical fluctuations in scattering media or incoherent sources in wave-based imaging \cite{garnier2005imaging, lobkis2001emergence}, and with respect to phase errors in the correlated linear transformations \cite{blomgren2002super, gough2004displaced, flax1988phase, mason2016robustness}. 
Therefore, in applications such as passive imaging \cite{Yarman08, Wang11, Wang10, yarman2008bistatic, Yarman10, LWang12_b, Mason2015, son2015passive, Wang13_3, Wang14,  LWang12, wang12, son2017passive, Wacks14, Wacks14_2, ammari2013passive} and interferometry \cite{goldstein1987interferometric, saebo2010seafloor, bamler1998synthetic}, cross-correlations are formed as a part of the inference process after acquiring linear measurements by sensors that are configured differently in space, time or frequency. 
Additionally, the cross-correlated measurement model arises naturally from the underlying physical sensing processes in certain applications such as optical and radio astronomy \cite{jain2003cross, copeland2006dynamics}, or quantum optical imaging \cite{schotland2010quantum}.

A special case of the interferometric inversion problem is when $i = j$ in \eqref{eq:interferom}, in which the model becomes the auto-correlations of linear measurements collected by a single sensing process.
In this case, the interferometric inversion problem reduces to the well-known \emph{phase retrieval} problem.
Notably, both problems are non-convex due to the quadratic equality constraints enforced by the correlated measurement model.
In recent years, several phase retrieval methods with exact recovery guarantees have been developed despite the non-convex nature of the problem.
These methods are characterized by either one or both of the following two principles: convexification of the solution set, which includes \emph{lifting} based approaches \cite{Candes13a, Candes13b, Waldspurger2015}, or a provably accurate initialization, followed by an algorithmic map that refines the initial estimate which is most prominently established by Wirtinger Flow (WF) \cite{candes2015phase} and its variants \cite{chen2015solving, chen2017, Zhang2017a, zhang2016reshaped, Zhang2017b, wang2018solving, bendory2018non, soltanolkotabi2019structured}. 

In methods that deploy lifting, such as PhaseLift \cite{Candes13a, Candes13b}, signal recovery from quadratic measurements is reformulated as a low rank matrix recovery (LRMR) problem.
While the LRMR approach offers convergence guarantees via convexification, it has limited practical applicability in typical sensing and imaging problems since lifting increases the dimension of the inverse problem by an order of magnitude and requires demanding memory storage in implementation.
The WF framework, on the other hand, avoids lifting hence offers considerable advantages in computational complexity and memory requirements.
Despite solving the non-convex problem directly, convergence to a global solution at a geometric rate is guaranteed by WF for coded diffraction patterns and Gaussian sampling vectors \cite{candes2015phase}, and more recently for short time Fourier transforms \cite{bendory2018non}.  
These advantages promote WF as a theoretical framework suitable for large scale imaging problems.

Conventionally, interferometric inversion in imaging applications has been approached by Fourier based techniques, such as time or frequency difference of arrival (TDOA/FDOA) backprojection \cite{Yarman08, Yarman10, Wang11, wang12, Wang13_3, wacks2018doppler, Wacks14, Wacks14_2}.
While these methods are practical and computationally efficient, their applicability is limited to scenes composed of well-separated point targets due to underlying assumptions. 
As an alternative, LRMR theory has been explored for interferometric inversion \cite{Mason2015, Demanet13}.
In particular, \cite{Demanet13} assesses the robustness of solutions to several convex semi-definite programs (SDPs) in a deterministic framework after lifting the interferometric measurements. 
Notably, these SDP solvers are inspired by the PhaseLift method \cite{Candes13a, Candes13b, Demanet14}, hence suffer from the same drawbacks in practice. 
In \cite{Mason2015}, an iterative optimization approach to LRMR was developed for interferometric passive imaging to circumvent the poor scaling properties of SDP approaches. 
The methodology in \cite{Mason2015} is based on Uzawa's method for matrix completion \cite{cai2010singular, recht2010guaranteed}. 
While this method is computationally more efficient than the SDP solvers, it still operates on the lifted domain, hence requires significant memory and computational resources. 
Additionally, these convexified lifting based solvers require stringent theoretical conditions on the measurement model, 
which poses a major theoretical barrier for interferometric inversion problems with deterministic forward models. 



In this paper, motivated by its advantages over lifting based methods, 
we develop a generalization of WF applicable to interferometric inversion problems with exact recovery guarantees.  We refer to this method as the Generalized Wirtinger Flow (GWF). 
Beyond the immediate extension of algorithmic principles of WF to the interferometric inversion, our mathematical framework differs from that of WF in two significant ways. 
\emph{i}) The GWF theory is established in the lifted domain. 
This lifting-based perspective allows us to develop a novel approach that bridges the theory between LRMR and the non-convex methodology of WF, and culminates into derivation of a new sufficient condition for exact recovery
in the context of interferometric inversion.   
Namely, GWF guarantees exact recovery to a general class of problems that are characterized over the equivalent lifted domain by the restricted isometry property (RIP) on the set of rank-1, positive semi-definite (PSD) matrices. 
Specifically, we provide an upper bound for the \emph{restricted isometry constant} (RIC) over the set of rank-1, PSD matrices so that RIP directly implies convergence to a global solution. 
\emph{ii}) Given the sufficient condition, we derive exact recovery guarantees for an arbitrary mapping over the lifted domain using solely geometric arguments in a deterministic framework, rather than model specific probabilistic arguments of standard WF. 
Notably, our analysis proves that the regularity condition, which forms the basis of convergence in standard WF, is redundant for the case of interferometric inversion. 


In addition, developing the GWF framework through the equivalent lifted problem allows us to identify the key theoretical advantages of the non-convex optimization approach over lifting based convex solvers beyond the immediate gains in computation and applicability. 
Our exact recovery condition for GWF proves to be a less stringent restricted isometry property than that of the sufficient conditions of LRMR methods.  
As a result of its fundamental differences from standard WF, and less stringent exact recovery requirements than those of LRMR methods, GWF offers a promising step towards exact recovery guarantees for interferometric inversion problems which are governed by deterministic measurement models. 
One such application is array imaging, for which RIP was shown to hold over rank-1 matrices for sufficiently high central frequencies, albeit asymptotically in the number of receivers \cite{Chai11}. 
Accordingly, we consider GWF as an exact interferometric imaging method alternative to LRMR for multi-static radar imaging, and design imaging system parameters for the lifted forward model to satisfy of our sufficient condition in the non-asymptotic regime \cite{son2019exact}.



In our key results, 
we first prove that the RIP over rank-1, PSD matrices on the lifted forward model implies an accurate initialization by the spectral method, and ensures that the regularity condition is satisfied in the $\epsilon$-neighborhood defined by the initialization if RIC is less than $0.214$.
Next, although not applicable to the case of auto-correlations in \cite{candes2015phase}, we show that this sufficient condition is satisfied for the cross-correlation of linear measurements collected by i.i.d. Gaussian sampling vectors. 
Essentially, our results establish that the probabilistic arguments used for the Gaussian sampling model in \cite{candes2015phase} are fully captured by the single event that the RIP is satisfied over rank-1, PSD matrices in the case $i \neq j$ in \eqref{eq:interferom}. 
To validate our analysis we conduct numerical experiments for the Gaussian sampling model by counting empirical probability of exact recovery.
We then demonstrate the effectiveness of GWF in a realistic passive multi-static radar imaging scenario. 
Our preliminary numerical simulations confirm our theoretical results and show that the interferometric inversion is solved in an exact manner by GWF.


The rest of our paper is organized as follows. 
We first introduce the problem formulation for interferometric inversion on the signal domain, and discuss lifting based prior art in Section \ref{sec:Sect2}. 
We then formulate the GWF algorithm in Section \ref{sec:Alg}, present key definitions and terminology followed by the main theorem statements in Section \ref{sec:Results}. 
The proofs of the main theorems are provided in Section \ref{sec:Proofs}. 
The numerical simulations for Gaussian sampling model and interferometric multi-static radar imaging are presented in Section \ref{sec:NumSim}.
Section \ref{sec:Conc} concludes our paper. Appendices \ref{sec:AppDerivs}, \ref{sec:AppThm1}, and \ref{sec:AppThm2} include proofs of lemmas used in Section \ref{sec:Results}.

\section{Problem Formulation and Prior Art}\label{sec:Sect2}
In this section, we introduce the non-convex formulation of the interferometric inversion problem and its key challenges. Next we discuss lifting based, convex formulations which address these key challenges in solving quadratic systems of equations via low rank matrix recovery theory. 
\subsection{The Non-Convex Objective Function}\label{sec:NonConv}

To address the interferometric inversion problem, we define the following objective function, and set up the corresponding optimization problem:
\begin{eqnarray}
& \mathcal{J}(\brho)  :=  \frac{1}{2M}  \sum_{m = 1}^M | \left(\mathbf{L}_i^m\right)^H \brho \brho^H \mathbf{L}_j^m - d_{ij}^m |^2, \label{eq:GWF_opt} \\
& \ \ \ \hat{\brho}  = \underset{\brho}{\text{argmin}} \ \mathcal{J}(\brho). \label{eq:GWF_argmin}
\end{eqnarray}
Let $\y = [ y_{ij}^1,  y_{ij}^2, \cdots  y_{ij}^M]^T$ and $\mathcal{L}: \mathbb{C}^N \rightarrow \mathbb{C}^M$ be the cross-correlated measurement map defined as
\begin{equation}\label{eq:MeasMap}
\mathcal{L}(\brho) = \y, \ \text{where} \ \brho \in \mathbb{C}^N, \  \ y_{ij}^m = \left(\mathbf{L}_i^m\right)^H \brho \brho^H \mathbf{L}_j^m.
\end{equation}

The objective function in \eqref{eq:GWF_opt} is the $\ell_2$ mismatch in the range of $\mathcal{L}$, i.e., the space of cross-correlated measurements, which is solved over the signal domain $\mathbb{C}^N$ in \eqref{eq:GWF_argmin}. 
The difficulty in \eqref{eq:GWF_argmin} is that the objective function $\mathcal{J}$ is non-convex over the variable $\brho$ due to the invariance of cross-correlated measurements to global phase factors. 
Essentially \eqref{eq:GWF_argmin} has a non-convex solution set with infinitely many elements, which casts interferometric inversion a challenging, ill-posed problem.  
\begin{definition}[Global Solution Set]\label{eq:SolutionSet}
We say that the points
\begin{displaymath}
\mathit{P} := \{ e^{\mathrm{i} \phi} \brho_t : \phi \in [0, 2\pi) \},
\end{displaymath}
form the global solution set for the interferometric inversion from the cross-correlated measurements \eqref{eq:interferom}.
\end{definition}

More generally, 
for any $\brho \in \mathbb{C}^N$, let $\mathcal{E}_{\brho} = \{ \mathbf{z} \in \mathbb{C}^N: \mathcal{L}(\mathbf{z}) = \mathcal{L}(\brho) \}$ be the equivalence class of $\brho$ under $\mathcal{L}$. 
We then define the following collection of signals as the \emph{equivalence set} of $\brho$. 

\begin{definition}[Equivalence Set]\label{def:EqSet}
Let $\brho \in \mathbb{C}^N$ and 
\begin{equation}
\mathit{P}_{\brho} := \{ e^{\mathrm{i} \phi} \brho, \phi \in [0, 2\pi) \},
\end{equation}
We refer to $\mathit{P}_{\brho}$ as the equivalence set of $\brho$.
\end{definition}

\begin{remark}
Note that $\mathit{P}_{\brho} \subset \mathcal{E}_{\brho}$; and $\mathit{P}_{\brho_t}$ is identical to the global solution set in Definition \eqref{eq:SolutionSet}.
\end{remark}

Alleviating the non-injectivity 
of the measurement map is a key step in formulating methods that guarantee exact recovery in phase retrieval literature \cite{bandeira2014saving}, and offers us a blueprint in addressing \eqref{eq:GWF_argmin}. 
A key observation is that one can consider \eqref{eq:MeasMap} as a mapping from a rank-1, positive semi-definite matrix $\brho \brho^H \in \mathbb{C}^{N \times N}$ instead of a quadratic map from the signal domain in $\mathbb{C}^N$, and attempt to recover $\brho_t \brho_t^H$.
This approach is known as the \emph{lifting} technique which is the main premise of LRMR based phase retrieval \cite{Candes13a, Candes13b, Waldspurger2015, candes2015phase2} and interferometric inversion methods \cite{Mason2015, Demanet13, Demanet14}. 

\subsection{Low Rank Matrix Recovery via Lifting}\label{sec:Lifting}
We adopt the concepts of the LRMR approach to the interferometric inversion problem based on PhaseLift \cite{Mason2015}, \cite{Candes13b}, and introduce the following definitions.
\begin{definition}{\emph{Lifting}.}
Each correlated measurement in \eqref{eq:interferom} can be written in the form of an inner product of two rank-1 operators, $\tilde{\brho}_t = \brho_t \brho_t^H$ and $\overline{\mathbf{F}}^m = \mathbf{L}_j^m (\mathbf{L}_i^m)^H$ such that\footnote{$(\bar{\cdot})$ denotes element-wise complex conjugation}
\begin{equation}\label{eq:PhaseLift}
d_{ij}^m = \langle {\mathbf{F}}^m , \tilde{\brho}_t \rangle_F \quad m = 1, . . ., M
\end{equation}
where $\langle \cdot, \cdot \rangle_F$ is the Frobenius inner product. We refer to the procedure of transforming interferometric inversion over $\mathbb{C}^N$ to the recovery of the rank-1 unknown $\tilde{\brho}_t$ in $\mathbb{C}^{N \times N}$ as {lifting}.
\end{definition}

The lifting technique introduces a new linear measurement map 
which we define as follows:
\begin{definition}{\emph{Lifted Forward Model}.}
Let $\d = [d_{ij}^1, d_{ij}^2, \cdots d_{ij}^M ] \in \mathbb{C}^M$ denote the vector obtained by stacking the cross-correlated measurements in \eqref{eq:interferom}. Then using \eqref{eq:PhaseLift}, we define $\mathcal{F}: \mathbb{C}^{N \times N} \rightarrow \mathbb{C}^M$ as follows
\begin{equation}\label{eq:LiftedLinMod}
\mathbf{d} = \mathcal{F} (\tilde{\brho}_t)
\end{equation}
and refer to $\mathcal{F}$ as the {lifted forward model}/map.
\label{def:Def2.5}
\end{definition}
\begin{remark}
The map $\mathcal{F}$ can be interpreted as a $M \times N^2$ matrix with $\bar{\mathbf{F}}^m$ as its rows in the vectorized problem in which $N \times N$ variable $\tilde{\brho}_t$ is concatenated into a vector.
\end{remark}

We refer to the problem of recovering of $\tilde{\brho}_t$ from $\d$ using the model \eqref{eq:LiftedLinMod} as the lifted problem, or interferometric inversion in the lifted domain.

The main advantage of lifting is that for all $\brho \in \mathbb{C}^N$, each non-convex equivalence set $\mathit{P}_{\brho}$ is now mapped to a set with a single element $\tilde{\brho} = \brho \brho^H$.
Using the definition of the lifted forward model, quadratic equality constraints reduce to affine equality constraints to define a convex manifold in $\mathbb{C}^{N \times N}$.
Using the \emph{rank-1 positive semi-definite (PSD)} structure of the unknown $\tilde{\brho}_t$, interferometric inversion in the lifted domain can be formulated as the following optimization problem:
\begin{equation}\label{eq:PLift_orig}
\text{find:} \ \mathbf{X}  \quad \text{s.t. } \ \mathbf{d} = \mathcal{F} ( \mathbf{X} ), \ \mathbf{X} \succeq 0, \  \text{rank}(\mathbf{X}) = 1.
\end{equation}
Here, we refer to $\mathbf{X} \in \mathbb{C}^{N \times N}$ as the lifted variable.
Due to the fact that there surely exists a rank-1 solution from \eqref{eq:LiftedLinMod}, \eqref{eq:PLift_orig} is equivalent to:
\begin{equation}\label{eq:PLift}
\text{minimize:} \ \text{rank}(\mathbf{X}) \quad \text{s.t. } \ \mathbf{d} = \mathcal{F} ( \mathbf{X} ), \ \mathbf{X} \succeq 0,
\end{equation}
which is known to be an NP-hard problem \cite{cai2010singular, recht2010guaranteed}.
Given its rank-minimization form, \eqref{eq:PLift} is approached by low rank matrix recovery theory analogous to compressive sensing \cite{donoho2006compressed, candes2006stable}. 
Most prominently, the non-convex rank term of the objective function is relaxed by a convex surrogate, which under the positive semi-definite constraint, corresponds to the trace norm. This results in the following formulation \cite{Chai11, Candes13a, Candes13b}:
\begin{equation}\label{eq:PLiftConv}
\text{minimize:} \ \text{tr}(\mathbf{X}) \quad \text{s.t. } \ \mathbf{d} = \mathcal{F} ( \mathbf{X} ), \ \mathbf{X} \succeq 0,
\end{equation}
which can be solved in polynomial time via semi-definite programming.  
To obtain a robust program, \eqref{eq:PLiftConv} is commonly \emph{perturbed} using the least squares criterion, which under the additive i.i.d. noise assumption, can be written as
\begin{equation}\label{eq:PLiftConv_2}
{\text{minimize:}} \  \text{tr}(\mathbf{X}), \quad \text{s.t. }  \frac{1}{2} \| \mathcal{F}(\mathbf{X}) - \d \|_2^2 \leq \mathcal{O}(\sigma^2), \ \mathbf{X} \succeq 0,
\end{equation}
where $\sigma^2$ corresponds to the variance in the case of Gaussian noise, and the order in the threshold can be tuned to obtain a desired lower bound on the log-likelihood of observing $\d$. 
Alternatively, one can equivalently formulate the Lagrangian of \eqref{eq:PLiftConv_2} with a proper choice of the regularization parameter $\lambda$ in relation to the threshold, as follows: 
\begin{equation}\label{eq:PerturbedLift}
\underset{\mathbf{X} \succeq 0 }{\text{minimize:}} \ \frac{1}{2} \| \mathcal{F}(\mathbf{X}) - \d \|_2^2 + \lambda \text{tr}(\mathbf{X}).
\end{equation}
\eqref{eq:PerturbedLift} can be solved by Uzawa's method \cite{cai2010singular, recht2010guaranteed} which is analogous to the singular value thresholding algorithm with a PSD constraint \cite{Mason2015} with the following iterations:
\begin{eqnarray}
\mathbf{X}_k & = & \mathcal{P}_{+} \circ \mathcal{P}_{\tau_k} ( \mathcal{F}^H \bnu_{k-1} ) \label{eq:SVT} \\
\bnu_k  & = &   \bnu_{k-1} + \mu_k (\mathbf{d} - \mathcal{F} (\mathbf{X}_k) ) \label{eq:SVT2}.
\end{eqnarray}
In \eqref{eq:SVT} and \eqref{eq:SVT2}, $\bnu = [\nu^1, \cdots \nu^M]^T$ denote the Lagrange multipliers initialized as $\bnu_0 = 0$, $\mu_k$ is the step size, subscript $k$ denotes the iteration number. 
$\mathcal{P}_{\tau}$ is the shrinkage operator acting on the singular values of its argument with threshold $\tau_k = \mu_k \lambda$, which through the parameter $\lambda$ of trace regularization, enforces the low-rank constraint.
 $\mathcal{P}_+$ is the projection operator onto the PSD cone. 

As $M \ll N^2$ in typical estimation problems, the lifted forward model has a non-trivial null space.
LRMR theory encapsulates identifying necessary and sufficient conditions on $\mathcal{F}$ in order to guarantee exact recovery despite having an under-determined system of equations in \eqref{eq:LiftedLinMod}.
The key conditions on $\mathcal{F}$ are primarily characterized by its null space \cite{recht2008necessary, recht2011null, oymak2011simplified} or restricted isometry properties \cite{recht2010guaranteed, cai2013sharp, bhojanapalli2016global} on low rank matrices.
Methods such as PhaseLift \cite{Candes13a, Candes13b} and PhaseCut \cite{Waldspurger2015} assert conditions on the mapping $\mathcal{F}$ such that there exists no feasible element in the PSD cone with a smaller trace than the true solution $\tilde{\brho_t}$, from which exact recovery results to a unique minimizer are directly implied by the standard arguments of semi-definite programming \cite{Candes13b}.
Iterative optimization via Uzawa's method on the other hand requires RIP over rank-5 matrices with a sufficiently small RIC ($\leq {1}/{10}$) for the convex problem in \eqref{eq:PerturbedLift} to have the identical solution to the original non-convex problem in \eqref{eq:PLift}. 
Notably, in this scenario the PSD constraint can simply be dropped due to the guaranteed uniqueness of the solution to the original problem, and the convexity of the problem in \eqref{eq:PerturbedLift}. 
Furthermore, under the properties on $\mathcal{F}$ asserted by PhaseLift, it is observed in \cite{Demanet14} that the lifted problem can be robustly solved as a convex feasibility problem by Douglas-Rachford splitting by eliminating the trace minimization step completely.
These indicate a level of redundancy between trace regularization and the PSD constraint, which arises from the convexification of the problem. 

Altogether, lifting based approaches provide a profound perspective to the interferometric inversion problem. 
Our observation is that the key principles of lifting based methods in establishing exact recovery guarantees are reciprocated in the non-convex framework of WF.
In fact, WF corresponds to solving a perturbed non-convex feasibility problem over the lifted domain, and in this sense, is reminiscent of the optimizationless PhaseLift method of \cite{Demanet14}, and Uzawa's iterations in \cite{Mason2015}.
To observe this, we introduce the GWF iterations for interferometric inversion, and develop the method as a solver in the lifted problem framework.
The basis of our extension from WF to GWF framework is the identification of conditions on the lifted forward model for the exactness of a solution in the lifted domain.

It is worth noting that low rank models and LRMR theory cover a rich area of research with wide range of applications in fields such as control theory and machine learning. 
Several of the LRMR methods discussed in this subsection were first proposed for matrix completion, and relate strongly to the problem of low rank matrix factorization.  
Non-convex optimization theory for low rank matrix factorization has undergone notable developments in recent years, with methods that offer exact recovery results for quadratic or bilinear equations of rank-$r$ matrices in the random Gaussian model \cite{zheng2015convergent}, or if the measurement map satisfies RIP over rank-$6r$ with a RIC $\leq 1/10$ \cite{tu2015low}.  
For further discussion on advances in non-convex low rank matrix recovery, we refer the reader to \cite{wang2016unified, chi2018nonconvex}.  




\label{sec:Obj}


\section{Generalized Wirtinger Flow Framework for Exact Interferometric Inversion}\label{sec:Alg}
In this section, we present the algorithmic principles of the GWF framework for exact interferometric inversion. We specifically identify and present the theoretical advantages obtained from viewing the algorithm in the lifted domain, despite operating on the signal domain. 
We first present the GWF iterations, then proceed with generalizing the spectral method for initialization using our lifted formulation of GWF. 
We finally provide an algorithm summary with specifications of computational complexity of each step. 

\subsection{GWF Iterations}
In presenting the GWF iterations, we begin by extending the iterative scheme of WF to the case of interferometric measurement model in \eqref{eq:interferom}, with $i \neq j$. 
We next introduce an equivalent, novel interpretation of these iterations in the lifted domain, which yields the basis of our GWF framework for exact recovery with arbitrary lifted forward models. 

\subsubsection{Extending WF}
In contrast to the lifting based approaches, the non-convex form of the problem in \eqref{eq:GWF_argmin} is preserved in WF. 
Given an \emph{accurate} initial estimate $\brho_0$, WF involves using the following updates to refine the current estimate $\brho_{k}$:
\begin{equation}\label{eq:GWF_Updates}
\brho_{k+1} = \brho_{k} - \frac{\mu_{k + 1}}{\| \brho_0 \|^2} \nabla \mathcal{J}(\brho_{k}).
\end{equation}
Notably, $\mathcal{J}$ 
is a real-valued function of a complex variable $\brho \in \mathbb{C}^N$, and therefore, non-holomorphic.
Hence, the gradient over $\mathcal{J}$ is defined by the means of Wirtinger derivatives 
\begin{eqnarray}\label{eq:compGrad}
\nabla \mathcal{J} &=& \left(\frac{\partial \mathcal{J}}{\partial \brho}\right)^H = \left(\frac{\partial \mathcal{J}}{\partial \overline{\brho}}\right)^T, \ \text{where} \\
\frac{\partial}{\partial \brho} &= &\frac{1}{2}( \frac{\partial}{\partial \brho_R} - \mathrm{i} \frac{\partial}{\partial \brho_I} ), \quad \frac{\partial}{\partial \overline{\brho}} = \frac{1}{2}( \frac{\partial}{\partial \brho_R} + \mathrm{i} \frac{\partial}{\partial \brho_I} ),
\end{eqnarray}
and $\brho = \brho_R + \mathrm{i} \brho_I$, with $\brho_R, \brho_I \in \mathbb{R}^N$.
Thus, the iterations in \eqref{eq:GWF_Updates} correspond to that of the steepest descent method \cite{candes2015phase}, where $\mu_{k+1}$ is the step size.
For interferometric inversion by solving \eqref{eq:GWF_argmin}, $\nabla \mathcal{J}$ evaluated at $\brho_k$ is given by
\begin{equation}\label{eq:UpdateTerm}
\nabla \mathcal{J}( \brho_k ) = \frac{1}{2M} \sum_{ m = 1}^M \left[ \overline{e_{ij}^m} (\mathbf{L}_j^m \left(\mathbf{L}_i^m \right)^H \brho_{k}) +  e_{ij}^m (\mathbf{L}_i^m \left(\mathbf{L}_j^m \right)^H \brho_{k}) \right],
\end{equation}
where $e_{ij}^m = (\left(\mathbf{L}_i^m \right)^H \brho_{k} \brho_{k}^H \mathbf{L}_j^m - d_{ij}^m)$ is the mismatch between the synthesized and cross-correlated measurements.
Note that in the case of phase retrieval we have $i = j$. Letting $\mathbf{L}^m = \mathbf{L}_i^m = \mathbf{L}_j^m$ and $ d^m = | \langle \mathbf{L}^m ,  \brho_t \rangle |^2$, for $m = 1 \cdots M$, \eqref{eq:UpdateTerm} reduces to the standard WF iterations of \cite{candes2015phase}.
See Appendix \ref{sec:App0} for the derivation of $\nabla \mathcal{J}$ for $i \neq j$. 



\subsubsection{Interpretation of GWF Updates in the Lifted Domain}
Our formulation of GWF framework resides in the lifted domain, and reveals an illuminating interpretation of the updates in \eqref{eq:UpdateTerm}. 
Moving the common term of the current iterate $\brho_k$ outside the summation, \eqref{eq:UpdateTerm} can be expressed as
\begin{equation}\label{eq:GradTerm}
\nabla \mathcal{J}( \brho_k ) = \frac{1}{M} \left[ \frac{1}{2} \left(\sum_{ m = 1}^M \overline{e_{ij}^m} (\mathbf{L}_j^m \left(\mathbf{L}_i^m \right)^H) + \sum_{ m = 1}^M e_{ij}^m (\mathbf{L}_i^m \left(\mathbf{L}_j^m \right)^H) \right) \right] \brho_k.
\end{equation}
From the definition of the lifted forward model in \eqref{eq:LiftedLinMod}, the second term inside the brackets in \eqref{eq:GradTerm} becomes the \emph{backprojection} of the measurement error onto 
the adjoint space of $\mathcal{F}$. 
\begin{definition}{\emph{Backprojection.}} Let $\y = [y^1, y^2, \cdots y^M ] \in \mathbb{C}^M$. For the lifted forward model $\mathcal{F}: \mathbb{C}^{N \times N} \rightarrow \mathbb{C}^M$ in Definition \ref{def:Def2.5}, we define the adjoint operator $\mathcal{F}^H: \mathbb{C}^M \rightarrow \mathbb{C}^{N \times N}$ as, 
\begin{equation}\label{eq:BackProject}
\mathcal{F}^H (\y ) = \sum_{ m = 1}^M y^m  (\mathbf{L}_i^m \left(\mathbf{L}_j^m \right)^H),
\end{equation}
and refer to \eqref{eq:BackProject} as the backprojection of $\y$. 
\end{definition}
Since \eqref{eq:GradTerm} consists of the average of terms that are Hermitian transposes of each other, the update term in \eqref{eq:GWF_Updates} corresponds to backprojecting the mismatch between synthesized and true measurements, and then projecting it onto the set of symmetric matrices $S$, i.e.,
\begin{equation}\label{eq:UpdtTerm_Lift}
\nabla \mathcal{J} ( \brho_k ) = \frac{1}{M} \mathcal{P}_{S} \left( \mathcal{F}^{H} ( {\mathbf{e}} ) \right) \brho_k,
\end{equation}
where $\mathbf{e} = [ e_{ij}^1, e_{ij}^2, \cdots e_{ij}^M ] \in \mathbb{C}^M$ is the measurement mismatch vector and $\mathcal{P}_{S}( \cdot )$ denotes the projection operator onto $S$. 

The representation in \eqref{eq:UpdtTerm_Lift} provides a novel perspective in interpreting GWF as a solver of the lifted problem.
Consider the structure enforced by 
\eqref{eq:GWF_Updates} for the non-relaxed form of the low rank recovery problem in \eqref{eq:PLift}.
In the GWF updates, the rank of the lifted variable $\mathbf{X}$ in \eqref{eq:PLift} is merely fixed at one, and the rank minimization problem is converted to its original non-convex feasibility problem in \eqref{eq:PLift_orig}. 
Knowing that \eqref{eq:GWF_opt} corresponds to the $\ell_2$ mismatch in the space of cross-correlated measurements, the GWF solver of \eqref{eq:GWF_opt} is equivalently the solver of the following lifted problem:
\begin{equation}\label{eq:perturbRank1}
\text{minimize: } \ \frac{1}{2M} \| \mathcal{F} ( \mathbf{X} ) - \mathbf{d} \|_2^2 \quad \text{s.t. } \ \text{rank} (\mathbf{X} ) = 1 \ \text{and} \ \mathbf{X} \succeq 0.
\end{equation}


Observe that the rank-1, PSD constraint precisely corresponds to minimization over the set of elements of $\mathcal{X} = \{ \brho \brho^H, \brho \in \mathbb{C}^N \}$ 
and that \eqref{eq:perturbRank1} can be equivalently cast as:
\begin{equation}\label{eq:WFform}
\text{minimize: } \ \frac{1}{2M} \| \mathcal{F} ( \mathbf{X} ) - \mathbf{d} \|^2_2 \quad \text{s.t.} \quad \mathbf{X} = \brho \brho^H.
\end{equation}
The updates to solve \eqref{eq:WFform} then can be performed on the leading eigenspace of the lifted variable $\mathbf{X}$ directly by means of the \emph{Jacobian} $\frac{\partial \bar{\mathbf{X}}}{\partial \bar{\brho}}$.
Since $\mathbf{X} = \mathbf{X}^H$, this yields\footnote{We use the property of Wirtinger derivatives in writing the update \eqref{eq:alternativeInt}, such that the derivative of a real valued function of a complex variable has the property that $\overline{(\frac{\partial \mathcal{J}}{\partial \mathbf{X}} )} = {(\frac{\partial\mathcal{J}}{\partial \overline{\mathbf{X}}} )}$}
\begin{equation}\label{eq:alternativeInt}
\brho_{k+1} = \brho_k - \mu_{k+1} \left( \frac{\partial \overline{\mathbf{X}}}{\partial \bar{\brho}} (\frac{\partial \mathcal{J}}{\partial \overline{\mathbf{X}}} + \frac{\partial \mathcal{J}}{\partial \mathbf{X}}) \right)^T, \text{where} 
\end{equation}
\begin{equation} \label{eq:alternativeInt2}
\left(\frac{\partial \overline{\mathbf{X}}}{\partial \overline{\brho}} \frac{\partial \mathcal{J}}{\partial \overline{\mathbf{X}}} \right)^T =  \frac{1}{2M} ( \mathcal{F}^H \mathcal{F} ( \mathbf{X} ) - \mathcal{F}^H \mathbf{d}) \brho, \ \left(\frac{\partial \overline{\mathbf{X}}}{\partial \overline{\brho}} \frac{\partial \mathcal{J}}{\partial {\mathbf{X}}} \right)^T =  \frac{1}{2M} ( \mathcal{F}^H \mathcal{F} ( \mathbf{X} ) - \mathcal{F}^H \mathbf{d})^H \brho.
\end{equation}
Initializing the algorithm with a proper estimate in the constraint set $\mathcal{X}$, i.e. $\mathbf{X}_0 = \brho_0 \brho_0^H$, and substituting $\mathbf{X} = \brho \brho^H$, we precisely obtain $\nabla \mathcal{J}(\brho_k)$ derived in \eqref{eq:UpdtTerm_Lift} when \eqref{eq:alternativeInt} is evaluated at $\brho = \brho_k$. 


\label{sec:Upd}

\subsection{The Distance Metric and Equivalence of Convergence in Signal and Lifted Domains}

While the view of GWF in the lifted problem and the formulation in \eqref{eq:WFform} are illuminating, the algorithmic map of GWF operates exclusively on the signal domain in $\mathbb{C}^{N}$.
The duality between the lifted domain and the signal domain is established by the distance metric of WF framework which is defined as follows \cite{candes2015phase}:
\begin{definition}\label{def:DistMet}
Let $\brho_t \in \mathbb{C}^N$ be an element of the global solution set in \eqref{eq:SolutionSet}. The distance of an element $\brho \in \mathbb{C}^N$ to $\brho_t$ is defined as \cite{candes2015phase}
\begin{equation}\label{eq:distance}
\mathrm{dist}(\brho, {\brho}_t) = \| \brho - e^{\mathrm{i}\Phi(\brho)} {\brho}_t \|, \ \text{where} \ \ \Phi(\brho) : = \underset{\phi \in [0,2\pi]}{\mathrm{arg min}} \| \brho - {\brho}_t e^{\mathrm{i}\phi} \|.
\end{equation}
\end{definition}
For the purpose of convergence, the metric implies that we are primarily interested in the distance of an estimate $\brho$ to \emph{any} of the elements in the non-convex solution set $\mathit{P}$.
In technical terms, invoking Definition \ref{def:EqSet}, \eqref{eq:distance} is a measure of distance between the equivalence sets $\mathit{P}_{\brho}$ and $\mathit{P}$. 
By Definition \ref{def:DistMet}, the ambiguity due to the invariance of the cross-correlation map, $\mathcal{L}$, to the global phase factors is evaded on $\mathbb{C}^N$, without lifting the problem.
Observe that the phase ambiguity is indeed removed analytically, since the $\ell_2$ norm is minimized when $\mathrm{Re} (\langle \brho , e^{\mathrm{i} \Phi(\brho) } \brho_t  \rangle ) = | \langle \brho , e^{j \Phi(\brho) } \brho_t  \rangle | =  | \langle \brho , \brho_t  \rangle | $, which is achieved at $e^{\mathrm{i} \Phi(\brho) }  = \frac{ \langle \brho_t , \brho \rangle}{| \langle \brho , \brho_t  \rangle | }$.
Hence, the squared distance becomes 
\begin{equation}\label{eq:distmetric}
\mathrm{dist}^2 (\brho, \brho_t) =  \| \brho \|^2 + \| \brho_t \|^2 - 2 | \langle \brho , \brho_t  \rangle |,
\end{equation}
and is independent of any global phase factor on $\brho$ or $\brho_t$.
Geometrically, the metric suggests that the cosine of the angle $\theta$ between the elements $\brho$ and $\brho_t$ is given by
\begin{equation}\label{eq:CosTh}
\cos \theta = \frac{ | \langle \brho , \brho_t  \rangle | }{\| \brho \| \| \brho_t \|},
\end{equation}
which can be viewed as the angle between the subspaces spanned by the elements $\brho \brho^H$ and $\brho_t \brho_t^H$ in the lifted domain $\mathbb{C}^{N \times N}$. This can be seen by evaluating the error between the lifted terms as follows:
\begin{equation}\label{eq:LiftDst}
\| \brho \brho^H - \brho_t \brho_t^H \|_F^2 = \| \brho \brho^H \|_F^2 +  \| \brho_t \brho_t^H \|_F^2 - 2 \mathrm{Re} (\langle \brho \brho^H ,  \brho_t \brho_t^H \rangle_F ).
\end{equation}
For two rank-1 arguments, the Frobenius inner product in \eqref{eq:LiftDst} reduces to $| \langle \brho , \brho_t \rangle |^2$, and the cosine of the angle between the elements becomes equal to $\cos^2 (\theta)$. Since \eqref{eq:CosTh} is non-negative, the relationship between the two angles is one-to-one.
Therefore, WF distance metric can be interpreted as a measure of distance between the lifted variables, and the convergence with respect to the metric \eqref{eq:distance} in signal domain is equivalent to convergence with respect to \eqref{eq:LiftDst} in the lifted space.

\label{sec:DMet}

\subsection{Theoretical Advantages of GWF via the Lifted Perspective}

The lifted perspective of GWF reveals its connection to convex low rank matrix recovery methods, such as those proposed in \cite{Demanet14} and \cite{Mason2015}, as the objective in \eqref{eq:perturbRank1} simply corresponds to solving the quadratic data-fit over a more exclusive search space.  
In the scope of lifting based approaches, we discussed Uzawa's method as a solver for the trace regularized problem in \eqref{eq:PerturbedLift}.
Consider solving \eqref{eq:perturbRank1} by Uzawa's method described in \eqref{eq:SVT}-\eqref{eq:SVT2}, which essentially reduces to projected gradient descent: a gradient step over the smooth $\ell_2$ mismatch term is followed by a projection onto the intersection of the PSD cone and the set of rank-1 matrices.
While in general projections to the intersection of two sets is an optimization problem on its own, and the rank-1 constraint constitutes a non-convex manifold, there exists a simple projection onto the set $\mathcal{X} = \{\mathbf{X} \in \mathbb{C}^{N \times N} :  \text{rank} (\mathbf{X} ) = 1 \cap \ \mathbf{X} \succeq 0\}$ such that 
\begin{equation}\label{eq:Uzaw_proj}
\mathcal{P}_{\mathcal{X}} = \mathcal{P}_{r = 1} \circ \mathcal{P}_{+},
\end{equation}
which is the successive operation of the projection onto the PSD cone, followed by a rank-1 approximation. 

Despite that a solver can be formulated, recovery guarantees of Uzawa's method do not cover this case due to the non-convexity of the projected set. 
Simply, a gradient step over the convex PSD cone is not guaranteed to improve the rank-1 approximation of an estimate, which is after all the original motivation behind pursuing the convex solvers. 
Essentially, GWF framework circumvents convexification and presents an alternative update scheme to Uzawa's for minimizing the objective in \eqref{eq:perturbRank1}.
This alternative update form stems from the fact PSD rank-1 constraint of the lifted unknown can be enforced by a variable transformation in the update equation, rather than by projections given in \eqref{eq:Uzaw_proj}.

Clearly, the immediate advantage of the GWF formulation in \eqref{eq:WFform} over the convex relaxations is the dimensionality reduction of the search space, attributed to iterating in the signal domain.
There is yet another significant advantage to the GWF formulation relating to exact recovery guarantees.
By \eqref{eq:WFform}, an iterative scheme for the unrelaxed, non-convex form of the lifted problem is formulated, which enforces the rank-1, PSD structure on the iterates.
This allows the constraint set to be considerably \emph{smaller} than that of the trace relaxation or the convex feasibility problems.
Formally, the problem \eqref{eq:WFform} has a unique solution if the equivalence set of any $\brho \in \mathbb{C}^N$ is its equivalence class under the correlation map $\mathcal{L}$ as defined in Definition \ref{def:EqSet}.
\begin{condition}{\emph{Uniqueness Condition}.}\label{con:Cond1}
There exists a unique solution $\brho_t \brho_t^H \in \mathcal{X}$ for the problem in \eqref{eq:WFform} if
\begin{equation}
\mathit{P}_{\brho} = \mathcal{E}_{\brho}, \quad \forall \brho \in \mathbb{C}^N.
\end{equation}
\end{condition}
In other words, there should exist no element $H$ in the null space of $\mathcal{F}$, such that $\brho_t \brho_t^H + H$ is a rank-1, PSD matrix.
Therefore, for exact interferometric inversion by GWF, the null space condition of the lifted forward model has to hold over a much less restrictive set than any of the approaches discussed in Section \ref{sec:Lifting}.

\subsection{The Spectral Method for GWF Initialization}\label{sec:Spectral}

Having to solve a non-convex problem, exact recovery guarantees of the WF framework depend on the accuracy of the initial estimate $\brho_0$.
The initial estimate of the standard WF algorithm is computed by the spectral method which corresponds to the leading eigenvector of the following positive semi-definite matrix:
\begin{equation}\label{eq:spectral}
\mathbf{Y} = \frac{1}{{M}}\sum_{m = 1}^{{M}} d^{m} \mathbf{L}^m (\mathbf{L}^m)^H,
\end{equation}
where $\mathbf{L}^m = \mathbf{L}_i^m = \mathbf{L}_j^m$, and $d^m = | \langle \mathbf{L}^m, \brho_t \rangle|^2$ for $i = j$.
The leading eigenvector is scaled by the square root of the corresponding largest eigenvalue $\lambda_0$ of $\mathbf{Y}$.
In \cite{candes2015phase}, the spectral method is described from a stochastic perspective.
By the strong law of large numbers, under the assumption that we have $\mathbf{L}^m \sim \mathcal{N}(0, \frac{1}{2} \mathbf{I}) + \mathrm{i} \mathcal{N}(0, \frac{1}{2} \mathbf{I} ), m = 1, \cdots, M$, the spectral matrix $\mathbf{Y}$ becomes equal to
\begin{equation}
\mathbb{E} \big[ \frac{1}{{M}}\sum_{m = 1}^{{M}} d^{m} \mathbf{L}^m (\mathbf{L}^m)^H \big] = \| \brho_t \|^2 \mathbf{I} +  \brho_t \brho_t^H,
\end{equation}
as $M \rightarrow \infty$, which has the true solution $\brho_t$ as its leading eigenvector.
The concentration of the spectral matrix around its expectation is used to show that the leading eigenvector of $\mathbf{Y}$ is \emph{sufficiently} accurate, such that the sequence of iterates $\{ \brho_k \}$ of \eqref{eq:GWF_Updates} converges to an element in the global solution set $P$.

In developing the GWF framework, we view the spectral method as a procedure in the lifted domain.
In fact, we observe that the spectral matrix of phase retrieval in \eqref{eq:spectral} is the backprojection estimate of the lifted unknown $\tilde{\brho}_t ={\brho}_t{\brho}_t^H$. 
Having different measurement vectors $\mathbf{L}_i^m$ and $\mathbf{L}_j^m$ in the cross-correlated measurement case, using the definition of the backprojection operator in \eqref{eq:BackProject},  we extend \eqref{eq:spectral} and redefine $\mathbf{Y}$ as follows:
\begin{equation} \label{eq:CrossCorrSpectral}
\mathbf{Y} = \frac{1}{M} \mathcal{F}^H (\d) =  \frac{1}{M} \sum_{m = 1}^M d_{ij}^m \mathbf{L}_i^m \left( \mathbf{L}_j^m \right)^H.
\end{equation}

As noted, the true solution $\tilde{\brho}_t = \brho_t \brho_t^H$ of the lifted problem lies in the positive semi-definite (PSD) cone.
In standard WF for phase retrieval, the spectral matrix $\mathbf{Y}$ is formed by summation of positive semi-definite outer products that are scaled by $\mathbb{R}^{+}$ valued measurements $\{d^m\}_{m=1}^M$, as auto-correlations are by definition squared magnitudes.
Hence, the WF spectral method generates an estimate of the lifted unknown within the constraint set by default.
This obviously is not the case for the backprojection estimate \eqref{eq:CrossCorrSpectral} with the cross-correlated measurement model.
Therefore, the extension of the spectral method to cross-correlations includes a projection step onto the PSD cone.
Since the PSD cone is convex, its projection operator is non-expansive and yields a closer estimate to $\brho_t \brho_t^H$ than that of \eqref{eq:CrossCorrSpectral}.
The GWF spectral matrix then becomes
\begin{equation} \label{eq:spectralInitGWF}
\hat{\mathbf{X}} := \frac{1}{2M }\sum_{m = 1}^M d_{ij}^m \mathbf{L}_i^m \left( \mathbf{L}_j^m \right)^H + \overline{d_{ij}^m} \mathbf{L}_j^m \left( \mathbf{L}_i^m \right)^H.
\end{equation}
We discard the positive semi-definitivity in \eqref{eq:spectralInitGWF}, and only project onto the set of symmetric matrices, which is also convex.
This is simply because only the leading eigenvector will be kept from the generated lifted estimate, which is unaffected by the projection onto the PSD cone\footnote{Unless the leading eigenvalue is negative, a scenario that is excluded due to the conditions for exact recovery in Section \ref{sec:Results}.}.
Letting $ \lambda_0, \mathbf{v}_0$ denote the leading eigenvalue-eigenvector pair of $\hat{\mathbf{X}}$, the GWF initial estimate $\brho_0$ is given by\footnote{Note that \eqref{eq:spectralInitGWF} yields a symmetric matrix, hence have eigenvalues $\lambda_i \in \mathbb{R}$.}
\begin{equation}\label{eq:GWF_initp}
\brho_0 = \sqrt{\lambda_0} \mathbf{v}_0.
\end{equation}

Using the representation in the lifted problem in \eqref{eq:CrossCorrSpectral} and plugging in \eqref{eq:LiftedLinMod} for the measurements, the GWF spectral matrix $\hat{\mathbf{X}}$ can be written as
\begin{equation}\label{eq:spectral_init_Fin}
\hat{\mathbf{X}} := \frac{1}{M} \mathcal{P}_{S} \left( \mathcal{F}^{H} ( {\mathbf{d}} ) \right) = \frac{1}{M} \mathcal{P}_{S} \left( \mathcal{F}^H \mathcal{F} ( \brho_t \brho_t^H )  \right).
\end{equation}
Hence, the leading eigenvalue-eigenvector extraction corresponds to keeping the rank-1, PSD approximation, $\tilde{\brho}_0 := \brho_0 \brho_0^H$, of the backprojection estimate in the lifted domain. 
Furthermore, by definition in \eqref{eq:spectral_init_Fin}, the accuracy of the spectral estimate fully hinges on the properties of the normal operator of the lifted problem, i.e., $\mathcal{F}^H \mathcal{F}$ over the set of rank-1, PSD matrices, $\mathcal{X}$.




As a comparison, consider the LRMR approach for the interferometric inversion by the PSD constrained singular value thresholding (SVT) algorithm in \cite{Mason2015}.
It can be seen that, setting $\mu_k = 1/M$ in \eqref{eq:SVT2}, the first iterate generated by \eqref{eq:SVT} is $\hat{\mathbf{X}} = \mathcal{P}_{\tau} ( \frac{1}{M} \mathcal{F}^H  \mathbf{d})$, which, prior to the singular value thresholding, is identical to the backprojection estimate computed by the spectral method. 
Hence, the spectral method simply differs from the first Uzawa iteration by keeping the rank-1 approximation via the projection operator $\mathcal{P}_{\mathcal{X}}$ defined in \eqref{eq:Uzaw_proj}, instead of a low-rank approximation.
This precisely corresponds to the first Uzawa iteration to solve the non-convex rank-1 constrained problem in \eqref{eq:perturbRank1}. 


\label{sec:Spect}

\subsection{Algorithm Summary and Computational Complexity}

%

Compared to the lifting based LRMR approaches, GWF provides significant reductions in computational complexity, and memory requirements per iteration. 
As shown in Section \ref{sec:Spectral}, GWF uses the first iteration of the Uzawa's method to compute an initial estimate $\brho_0$, and replaces the following iterations over the lifted domain with iterations on the leading eigenspace. 
The GWF algorithm is summarized as follows:
\begin{itemize}
\item \textbf{Input:} Interferometric measurements $d_{ij}^m$ and measurement vectors $\mathbf{L}_i^m, \mathbf{L}_j^m$, $\forall i \neq j$, $m = 1, \cdots, M$. 

\item \textbf{Initialization:} Run Uzawa's method in \eqref{eq:SVT} initialized with $\y = 0$, $\mu_0 = 1/M$, and $\lambda = 0$, i.e., trace regularization free, for $1$ iteration, yielding 
$$
\hat{\mathbf{X}} = \frac{1}{M} \mathcal{P}_{S} \left(\mathcal{F}^H ( \d ) \right).
$$
Keep the rank-1 approximation $\lambda_0 \brho_0 \brho_0^H$. The initialization step consists of the outer product of the two measurement vectors for each of the $M$ samples, resulting in $\mathcal{O}(MN^2)$ multiplications, followed by an eigenvalue decomposition with $\mathcal{O}(N^3)$ complexity.

\item \textbf{Iterations:} Perform gradient descent updates as $\brho_{k+1} = \brho_{k} - \frac{\mu_{k + 1}}{\| \brho_0 \|^2} \nabla \mathcal{J}(\brho_{k})$, with
$$
\nabla \mathcal{J}( \brho_k ) = \frac{1}{M} \mathcal{P}_S \left( \mathcal{F}^H(\mathbf{e}_k) \right) \brho_{k},
$$
where $(\mathbf{e}_k)^m = (\left(\mathbf{L}_i^m \right)^H \brho_{k} \brho_{k}^H \mathbf{L}_j^m - d_{ij}^m)$. 

Each iteration requires the following operations:
\begin{enumerate}
\item Computing and storing the linear terms $(\mathbf{L}_{i,j}^m)^H \brho_k$, requiring $M$ number of $N$ multiplications for each, resulting in $\mathcal{O}(MN)$ multiplications. 
\item Computing the error by cross correlating linear terms, requiring $\mathcal{O}(M)$ multiplications.
\item Multiplication of the linear terms $(\mathbf{L}_{i,j}^m)^H \brho_k$ and the error $e_{ij}^m$ for each $m = 1, \cdots M$, requiring $\mathcal{O}(M)$ multiplications.
\item Multiplication of the result in $3$ with vectors $\{\mathbf{L}_i^m\}_{m=1}^M$ and $\{\mathbf{L}_j^m\}_{m=1}^M$, requiring $\mathcal{O}(MN)$ multiplications.
\end{enumerate}
These operations result in $\mathcal{O}(MN)$ multiplications for each iteration. 
\end{itemize}
%
%


\section{Theory of Exact Interferometric Inversion via GWF} \label{sec:Results}

In this section we present our exact recovery guarantees for interferometric inversion by GWF. 
Notably, we merge the exact recovery conditions of standard Wirtinger Flow that rely on statistical properties of the sampling vectors into a single condition on the lifted forward model in the context of interferometric inversion ($i \neq j$). 
In Theorem \ref{thm:Theorem1}, we state that if $\mathcal{F}$ satisfies the restricted isometry property on the set of rank-1, positive semidefinite matrices with a sufficiently small restricted isometry constant, GWF is guaranteed to recover the true solution upto a global phase factor, for any $\brho_t \in \mathbb{C}^N$.  
Following Theorem \ref{thm:Theorem1}, we establish the validity of our condition in Theorem \ref{thm:Theorem2} for the case of $\mathcal{O}(N \log N)$ measurements that are cross-correlations of i.i.d. complex Gaussian sampling vectors.

We begin by introducing the definitions of some concepts that appear in our theorem statements.

\subsection{The $\epsilon$-Neighborhood of $\mathit{P}$ and the Regularity Condition}

\begin{definition}{\emph{$\epsilon$-Neighborhood of $\mathit{P}$}.}
We denote the $\epsilon$-neighborhood of the global solution set $\mathit{P}$ in \eqref{eq:SolutionSet} by $\mathit{E}(\epsilon)$ and define it as follows \cite{candes2015phase}:
\begin{equation}
\mathit{E}(\epsilon) = \{ \brho \in \mathbb{C}^N : \mathrm{dist}( \brho, \mathit{P} ) \leq \epsilon \}.
\end{equation}
\end{definition}

The set $E(\epsilon)$ is determined by the distance of the spectral initialization to the global solution set, i.e., $\epsilon = \text{dist}(\brho_0, \brho_t)$. 
The main result of the standard WF framework is that, for Gaussian and coded diffraction pattern measurement models \cite{candes2015phase}, $\epsilon$ is sufficiently small so that the objective function $\mathcal{J}$ satisfies the following regularity condition:
\begin{condition}{\emph{Regularity Condition}.}\label{con:RegCon}
The objective function $\mathcal{J}$ satisfies the regularity condition if, for all $\brho \in \mathit{E}(\epsilon)$, the following holds
\begin{equation}\label{eq:RegCon}
\text{\emph{Re}} \left( \langle \nabla \mathcal{J}(\brho),  (\brho - {\brho}_t e^{\mathrm{i} \Phi(\brho)}) \rangle \right) \geq \frac{1}{\alpha} \text{\emph{dist}}^2 (\brho, {\brho}_t) + \frac{1}{\beta} \|  \nabla \mathcal{J}(\brho) \|^2
\end{equation}
for fixed $\alpha > 0$ and $\beta > 0$ such that $\alpha \beta > 4$. 
\end{condition}

The regularity condition guarantees that the iterations in \eqref{eq:UpdateTerm} are contractions with respect to the distance metric \eqref{eq:distance}, which ensures that all gradient descent iterates remain in $E(\epsilon)$, which is established by the following lemma from \cite{candes2015phase}:
\begin{lemma}\label{lem:Lemma1}
Assume that $\mathcal{J}$ obeys the regularity condition in \eqref{eq:RegCon} for some fixed $\alpha, \beta$ for all $\brho \in E(\epsilon)$. Having $\brho_0 \in E(\epsilon)$, and assuming $\mu \leq {2}/{\beta}$, consider the following update
\begin{equation}\label{eq:altUpdtWF}
\brho_{k+1} = \brho_k - \mu \nabla \mathcal{J}(\brho_k).
\end{equation}
Then, for all $k$ we have $\brho_k \in E(\epsilon)$ and
$$
\mathrm{dist}^2 ( \brho_k, \brho_t ) \leq  ( 1 - \frac{2 \mu}{\alpha} )^k \mathrm{dist}^2 ( \brho_0, \brho_t ).
$$
\end{lemma}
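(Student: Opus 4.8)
The plan is to prove the statement by induction on $k$, using the regularity condition \eqref{eq:RegCon} together with the update rule \eqref{eq:altUpdtWF} to show that a single gradient step simultaneously contracts the distance by the factor $1-2\mu/\alpha$ and keeps the iterate inside $E(\epsilon)$. The base case is immediate since $\brho_0 \in E(\epsilon)$ by hypothesis. For the inductive step I would assume $\brho_k \in E(\epsilon)$, so that the regularity condition is applicable at $\brho_k$, and then analyze $\mathrm{dist}^2(\brho_{k+1}, \brho_t)$.

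The first key step is to handle the phase ambiguity. Since $\Phi(\brho_{k+1})$ is, by definition, the phase minimizing $\|\brho_{k+1} - \brho_t e^{\mathrm{i}\phi}\|$, evaluating the norm at the \emph{suboptimal} phase $\Phi(\brho_k)$ only increases it; hence
\begin{equation}
\mathrm{dist}^2(\brho_{k+1}, \brho_t) \leq \|\brho_{k+1} - e^{\mathrm{i}\Phi(\brho_k)}\brho_t\|^2 .
\end{equation}
This is the crucial device that decouples the phase minimization from the recursion, pinning down the fixed reference point $e^{\mathrm{i}\Phi(\brho_k)}\brho_t$, which is precisely the one appearing in the regularity condition evaluated at $\brho_k$. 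Substituting $\brho_{k+1} = \brho_k - \mu\nabla\mathcal{J}(\brho_k)$ and expanding the squared norm gives
\begin{equation}
\|\brho_{k+1} - e^{\mathrm{i}\Phi(\brho_k)}\brho_t\|^2 = \mathrm{dist}^2(\brho_k, \brho_t) - 2\mu\,\mathrm{Re}\langle \nabla\mathcal{J}(\brho_k), \brho_k - e^{\mathrm{i}\Phi(\brho_k)}\brho_t \rangle + \mu^2 \|\nabla\mathcal{J}(\brho_k)\|^2 ,
\end{equation}
where I have used $\|\brho_k - e^{\mathrm{i}\Phi(\brho_k)}\brho_t\|^2 = \mathrm{dist}^2(\brho_k, \brho_t)$.

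Next I would invoke the regularity condition \eqref{eq:RegCon} at $\brho_k$ to lower-bound the cross term, which upper-bounds its negative, yielding
\begin{equation}
\mathrm{dist}^2(\brho_{k+1}, \brho_t) \leq \left(1 - \frac{2\mu}{\alpha}\right)\mathrm{dist}^2(\brho_k, \brho_t) + \left(\mu^2 - \frac{2\mu}{\beta}\right)\|\nabla\mathcal{J}(\brho_k)\|^2 .
\end{equation}
The step-size hypothesis $\mu \leq 2/\beta$ renders the coefficient $\mu^2 - 2\mu/\beta = \mu(\mu - 2/\beta)$ nonpositive, so the gradient-norm term is discarded, leaving the one-step contraction $\mathrm{dist}^2(\brho_{k+1}, \brho_t) \leq (1-2\mu/\alpha)\,\mathrm{dist}^2(\brho_k, \brho_t)$. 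To close the induction I would note that $\mu \leq 2/\beta$ together with $\alpha\beta > 4$ forces $0 < 2\mu/\alpha \leq 4/(\alpha\beta) < 1$, so the contraction factor lies in $(0,1)$; hence $\mathrm{dist}^2(\brho_{k+1}, \brho_t) \leq \mathrm{dist}^2(\brho_k, \brho_t) \leq \epsilon^2$, which establishes $\brho_{k+1} \in E(\epsilon)$ and legitimizes applying the regularity condition at the next step. Iterating the one-step bound then delivers the claimed geometric decay.

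I expect the main obstacle to be the bookkeeping around the phase alignment rather than any analytic difficulty: one must ensure the reference phase used in the expansion is $\Phi(\brho_k)$ (the phase for which the regularity condition is stated) and not $\Phi(\brho_{k+1})$, and that the suboptimality inequality in the first step points in the correct direction. Everything else is a routine completion-of-squares estimate, with the two hypotheses doing the essential work: $\mu \leq 2/\beta$ to eliminate the gradient term, and $\alpha\beta > 4$ to guarantee a genuine contraction, equivalently that the iterates never escape $E(\epsilon)$.
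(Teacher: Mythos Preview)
Your proposal is correct and reproduces exactly the standard argument: the paper does not prove this lemma in-house but defers to \cite{candes2015phase}, Lemma~7.10, whose proof is precisely the induction you outline (phase suboptimality at $\Phi(\brho_k)$, square expansion, regularity condition to bound the cross term, $\mu\leq 2/\beta$ to kill the gradient term, and $\alpha\beta>4$ to ensure the contraction factor lies in $(0,1)$). There is nothing to add or correct.
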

\begin{proof}
See \cite{candes2015phase}, Lemma 7.10. 
\end{proof}

Furthermore, from the definition of $\nabla \mathcal{J}$ in \eqref{eq:UpdateTerm}, the regularity condition implies that there exists no $\brho \in E(\epsilon)$ that belongs to the equivalence class of $\brho_t$ under the mapping $\mathcal{L}$.
Hence, the uniqueness condition for exact recovery is satisfied locally, and \ref{con:RegCon} is a sufficient condition by Lemma 7.1 of \cite{candes2015phase} such that the algorithm iterates $\{ \brho_k \}$ converge to $\mathit{P}$ at a geometric rate. 
The spectral initialization is said to be {sufficiently} accurate, if \eqref{eq:RegCon} holds $\forall \brho \in \mathit{E}(\epsilon)$.

\subsection{Sufficient Conditions For Exact Recovery}

The condition we assert on the lifted forward model $\mathcal{F}$ is the restricted isometry property on the set of rank-1, PSD matrices, $\mathcal{X}$. 

\begin{definition}{\emph{Restricted Isometry Property} (RIP).}\label{def:Defn3_1}
Let $\mathcal{A}: \mathbb{C}^{K \times N} \rightarrow \mathbb{C}^M$ denote a linear operator. Without loss of generality assume $K \leq N$. For every $ 1 \leq r \leq K$, the $r-\text{restricted}$ isometry constant (RIC) is defined as the smallest $\delta_r < 1$ such that
\begin{equation}\label{eq:RIP1}
(1 - \delta_r ) \| \mathbf{X} \|_F^2 \leq \| \mathcal{A}(\mathbf{X}) \|^2 \leq (1 + \delta_r) \| \mathbf{X} \|_F^2
\end{equation}
holds for all matrices $\mathbf{X}$ of rank at most $r$, where $\| \mathbf{X} \|_F = \sqrt{\text{Tr}(\mathbf{X}^H \mathbf{X})}$ denotes the Frobenius norm. 
\end{definition}

Suppose \eqref{eq:RIP1} holds for all  $\mathbf{X} \in \mathcal{D} \subset \mathbb{C}^{K \times N}$ that have rank-r with some constant $0 < \delta_{r} < 1$, then $\mathcal{A}$ is said to satisfy the RIP$_\mathcal{D}$ with RIC-$\delta_{r}$. 
For the interferometric inversion problem, having $K = N$, if there exists $\delta_1 < 1$ such that $\mathcal{F}$ in \eqref{eq:LiftedLinMod} satisfies the restricted isometry property on the PSD cone, we say that the lifted forward model satisfies RIP$_+$ with RIC-$\delta_{1}$, i.e., RIP over the set of rank-1, PSD matrices, $\mathcal{X}$.  
%


\eqref{eq:RIP1} quantifies how close $\mathcal{F}^H \mathcal{F}$ is to an identity over rank-1, PSD matrices through the following lemma:

\begin{lemma} \label{cor:Corr1}
Suppose $\mathcal{F}$ satisfies RIP on the set rank-1, PSD matrices of size $N \times N$, i.e., $\mathcal{X} = \{ \brho \brho^H : \brho \in \mathbb{C}^N \}$ with RIC-$\delta_1$. Then, for any $\mathbf{X} \in \mathcal{X}$ we have
\begin{equation}\label{eq:perturbmod}
(\mathcal{F}^H \mathcal{F} - \mathbf{I})(\mathbf{X}) = \bdelta (\mathbf{X})
\end{equation}
where $\bdelta : \mathcal{X} \rightarrow \mathbb{C}^M$ is a bounded operator such that
$$
\| \bdelta \|_{\mathcal{X}} :=  \underset{\mathbf{\brho \in \mathbb{C}^N \setminus \{0 \}}}{\text{max}} \frac{\| {\bdelta} (\brho \brho^H) \|}{\| \brho \brho^H \|} = \delta_1,
$$
where $\| \cdot \|$ denotes the spectral norm. 
\end{lemma}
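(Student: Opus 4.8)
The plan is to read the restricted isometry property as a statement about the normal operator $\mathcal{T} := \mathcal{F}^H\mathcal{F} - \mathbf{I}$ and to show that its worst-case amplification over $\mathcal{X}$, as measured by the ratio in the statement, equals the restricted isometry constant. First I would rewrite \eqref{eq:RIP1}: for rank-$1$ PSD $\mathbf{X}$ one has $\|\mathcal{F}(\mathbf{X})\|^2 = \langle\mathcal{F}^H\mathcal{F}(\mathbf{X}),\mathbf{X}\rangle_F$, so the two-sided bound is equivalent to $|\langle\mathcal{T}(\mathbf{X}),\mathbf{X}\rangle_F|\le\delta_1\|\mathbf{X}\|_F^2$ for every $\mathbf{X}\in\mathcal{X}$, with $\delta_1$ the smallest such constant. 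I would also record two structural facts used throughout: $\mathcal{T}$ is self-adjoint with respect to $\langle\cdot,\cdot\rangle_F$ (since $\mathcal{F}^H\mathcal{F}$ is) and maps Hermitian matrices to Hermitian matrices; and for a rank-$1$ PSD matrix $\mathbf{X}=\brho\brho^H$ the Frobenius, spectral, and nuclear norms all coincide with $\|\brho\|^2$. Setting $\bdelta:=\mathcal{T}|_{\mathcal{X}}$ makes \eqref{eq:perturbmod} a definition, so the entire content is the value of $\|\bdelta\|_{\mathcal{X}}$.

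The easy inequality $\delta_1\le\|\bdelta\|_{\mathcal{X}}$ follows from the Schatten--H\"older bound: for $\mathbf{X}=\brho\brho^H$ one has $|\langle\mathcal{T}(\mathbf{X}),\mathbf{X}\rangle_F|\le\|\mathcal{T}(\mathbf{X})\|\,\|\mathbf{X}\|_*=\|\mathcal{T}(\mathbf{X})\|\,\|\mathbf{X}\|_F$, so dividing by $\|\mathbf{X}\|_F^2$ and taking the supremum over $\brho$ shows that the diagonal deviation governing $\delta_1$ is dominated by the spectral ratio defining $\|\bdelta\|_{\mathcal{X}}$.

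The reverse inequality $\|\bdelta\|_{\mathcal{X}}\le\delta_1$ is the crux. Here I would fix $\brho$, write $\mathbf{M}:=\mathcal{T}(\brho\brho^H)$ (Hermitian by the structural fact), and combine the variational form of the spectral norm with self-adjointness: for any unit $\mathbf{u}$, $\langle\mathcal{T}(\brho\brho^H),\mathbf{u}\mathbf{u}^H\rangle_F=\langle\mathcal{T}(\mathbf{u}\mathbf{u}^H),\brho\brho^H\rangle_F$, so $\|\mathbf{M}\|=\max_{\|\mathbf{u}\|=1}|\mathbf{u}^H\mathbf{M}\mathbf{u}|=\max_{\|\mathbf{u}\|=1}|S(\mathbf{u},\brho)|$, where $S(\mathbf{u},\brho):=\langle\mathcal{T}(\mathbf{u}\mathbf{u}^H),\brho\brho^H\rangle_F$ is real and symmetric in its two arguments. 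Thus $\|\bdelta\|_{\mathcal{X}}=\sup_{\mathbf{u},\brho}|S(\mathbf{u},\brho)|$ while $\delta_1=\sup_{\mathbf{w}}|S(\mathbf{w},\mathbf{w})|$, and the claim reduces to showing that this symmetric bi-quadratic form attains its extreme modulus on the diagonal $\mathbf{u}=\brho$. I would attempt this by a polarization/symmetry argument: $S$ is the diagonal restriction of the symmetric $4$-linear form associated with $\mathcal{T}$, and I would exploit its self-adjoint, Hermitian-preserving (hence swap-symmetric) structure to pass from the mixed evaluation $S(\mathbf{u},\brho)$ back to pure evaluations $S(\mathbf{w},\mathbf{w})$ without loss in the constant.

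The main obstacle is exactly this last reduction. The rank-$1$ RIP supplies only the diagonal bound on $\langle\mathcal{T}(\mathbf{X}),\mathbf{X}\rangle_F$, whereas the spectral norm $\|\mathcal{T}(\brho\brho^H)\|$ a priori probes directions $\mathbf{u}\neq\brho$; the usual device of polarizing $\mathbf{u}\mathbf{u}^H\pm\brho\brho^H$ is unavailable because those combinations have rank $2$ and leave $\mathcal{X}$. The real work therefore lies in proving that the mixed form $S(\mathbf{u},\brho)$ cannot exceed the diagonal supremum $\delta_1$, which is where the self-adjointness of $\mathcal{F}^H\mathcal{F}$, the Hermitian-preserving property, and the coincidence of the Frobenius, spectral, and nuclear norms on $\mathcal{X}$ are all essential --- these are precisely the features that force the worst-case output direction of $\bdelta$ to align with its input.
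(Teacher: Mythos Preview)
Your setup and the inequality $\delta_1\le\|\bdelta\|_{\mathcal{X}}$ are correct, and you rightly isolate the hard direction: passing from the diagonal bound $|S(\brho,\brho)|\le\delta_1$ supplied by the rank-$1$ RIP to the mixed bound $|S(\mathbf{u},\brho)|\le\delta_1$ needed for the spectral norm. But you do not close this step. You propose a polarization/symmetry argument, then immediately concede that the natural polarization $\mathbf{u}\mathbf{u}^H\pm\brho\brho^H$ produces rank-$2$ matrices outside $\mathcal{X}$, and stop there. That is the gap in your proposal: the ``real work'' you describe is not carried out, and the structural features you list (self-adjointness, Hermitian-preserving, norm coincidence on $\mathcal{X}$) are not assembled into an actual bound on $S(\mathbf{u},\brho)$.

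The paper's device is a Cauchy--Schwarz decoupling through an operator square root. Since $\bdelta=\mathcal{F}^H\mathcal{F}-\mathbf{I}$ is self-adjoint on $(\mathbb{C}^{N\times N},\langle\cdot,\cdot\rangle_F)$, one factors $\bdelta=\sqrt{\bdelta}\,\sqrt{\bdelta}$ and obtains, for unit $\v$ and $\brho$,
\[
|\langle\bdelta(\brho\brho^H),\v\v^H\rangle_F|
=|\langle\sqrt{\bdelta}(\brho\brho^H),\sqrt{\bdelta}(\v\v^H)\rangle_F|
\le\|\sqrt{\bdelta}(\brho\brho^H)\|_F\,\|\sqrt{\bdelta}(\v\v^H)\|_F.
\]
The point is that each factor squared is itself a \emph{diagonal} evaluation, $\|\sqrt{\bdelta}(\brho\brho^H)\|_F^2=\langle\bdelta(\brho\brho^H),\brho\brho^H\rangle_F\le\delta_1$, which is exactly what the rank-$1$ RIP controls. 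Hence $|S(\mathbf{u},\brho)|\le\sqrt{\delta_1}\cdot\sqrt{\delta_1}=\delta_1$ with no recourse to rank-$2$ combinations. This square-root/Cauchy--Schwarz factorization is precisely the mechanism that replaces the polarization you found unavailable; without it (or an equivalent), your outline does not reach the conclusion. One caveat worth flagging: the paper's use of $\sqrt{\bdelta}$ tacitly presumes $\bdelta\succeq 0$ as an operator, which is not automatic from the hypotheses; a fully rigorous version would work with $|\bdelta|$ or split $\bdelta$ into its positive and negative parts, but the decoupling idea is the same.
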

\begin{proof}
See Appendix \ref{sec:App1}. 
\end{proof}


In the GWF framework, we identify RIP over rank-1, PSD matrices with RIC-$\delta_1 \leq 0.214$ as a sufficient condition for an arbitrary lifted forward model $\mathcal{F}$, which guarantees that the spectral initialization of GWF provides an initial estimate that is sufficiently accurate, i.e., GWF iterates are guaranteed to converge to a global solution in $\mathit{P}$ starting from $\brho_0$. 


\begin{theorem}{\emph{Exact Recovery by GWF}.}\label{thm:Theorem1}
Assume the lifted forward model $\mathcal{F}$\footnote{Upto a normalization factor, such as $\frac{1}{\sqrt{M}}$.} satisfies the RIP condition over rank-1, PSD matrices with the restricted isometry constant (RIC)-$\delta_1$. Then, the initial estimate $\brho_0$ obtained from the spectral method by \eqref{eq:spectralInitGWF} and \eqref{eq:GWF_initp} satisfies
\begin{equation}
\mathrm{dist}^2 (\brho_0, \brho_t ) \leq \epsilon^2 \| \brho_t \|^2,
\end{equation}
where $\epsilon$ is an $\mathcal{O}(1)$ constant with $\epsilon^2 = (2 + \delta_1)(1 - \sqrt{1 - \frac{\delta_1}{1- \delta_1}}) + \frac{\delta_1^2}{8}$, and the regularity condition in \eqref{eq:RegCon} surely holds for the objective function in \eqref{eq:GWF_opt} if $\delta_1 \leq 0.214$, with any $\alpha$, $\beta$ satisfying 
\begin{equation}\label{eq:suffcond_c}
\frac{1}{\alpha \| \brho_t \|^2} + \frac{c^2(\delta_1) \| \brho_t \|^2}{\beta} \leq h(\delta_1) := (1 - \delta_2)(1-\epsilon)(2-\epsilon),
\end{equation}
where $\delta_2 = \frac{\sqrt{2}(2 + \epsilon) \delta_1}{\sqrt{(1-\epsilon)(2-\epsilon})}$ and $c(\delta_1) = (2+\epsilon)(1+\epsilon)(1 + \delta_1)$. Thus, for the iterations of \eqref{eq:altUpdtWF} with the update term in \eqref{eq:UpdateTerm} and the fixed step size $\mu \leq {2}/{\beta}$, we have
\begin{equation}
\mathrm{dist}^2 ( \brho_k, \brho_t ) \leq \epsilon^2 ( 1 - \frac{2 \mu}{\alpha} )^k \| \brho_t \|^2. 
\end{equation}
\end{theorem}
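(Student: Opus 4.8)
The plan is to establish the three assertions in sequence: (i) the spectral estimate $\brho_0$ lands in the $\epsilon$-neighborhood $E(\epsilon)$ with the stated $\epsilon^2$, (ii) the regularity condition \eqref{eq:RegCon} holds throughout $E(\epsilon)$ whenever $\delta_1 \leq 0.214$, and (iii) the geometric decay then follows by a direct appeal to Lemma \ref{lem:Lemma1}. The whole argument is carried out through the lifted representation, so the first move is to invoke Lemma \ref{cor:Corr1} to write the (suitably normalized) normal operator as $\mathcal{F}^H \mathcal{F} = \mathbf{I} + \bdelta$ on $\mathcal{X}$, with $\|\bdelta\|_{\mathcal{X}} = \delta_1$.

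For the initialization bound, I would start from $\hat{\mathbf{X}} = \mathcal{P}_S(\mathcal{F}^H \mathcal{F}(\brho_t \brho_t^H)) = \brho_t \brho_t^H + \mathcal{P}_S(\bdelta(\brho_t \brho_t^H))$, using linearity of $\mathcal{P}_S$ and the fact that $\brho_t \brho_t^H$ is already Hermitian. Since $\mathcal{P}_S(A) = \tfrac12(A + A^H)$ is non-expansive in the spectral norm, the perturbation $E := \hat{\mathbf{X}} - \brho_t \brho_t^H$ obeys $\|E\| \leq \delta_1 \|\brho_t\|^2$. Weyl's inequality then pins the leading eigenvalue to $\lambda_0 \in [(1-\delta_1),(1+\delta_1)]\|\brho_t\|^2$, while an eigenvector (sin-theta) estimate, obtained by projecting the eigenrelation $\hat{\mathbf{X}} \mathbf{v}_0 = \lambda_0 \mathbf{v}_0$ onto the orthogonal complement of $\brho_t$, controls the angle $\theta$ between $\mathbf{v}_0$ and $\brho_t$. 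Substituting $\|\brho_0\|^2 = \lambda_0$, $|\langle \brho_0, \brho_t\rangle| = \sqrt{\lambda_0}\,\|\brho_t\|\cos\theta$, and these two bounds into $\mathrm{dist}^2(\brho_0,\brho_t) = \lambda_0 + \|\brho_t\|^2 - 2\sqrt{\lambda_0}\,\|\brho_t\|\cos\theta$ yields the claimed $\epsilon^2$ after simplification.

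The core of the theorem is the regularity condition, where I would exploit the lifted form $\nabla \mathcal{J}(\brho) = \mathcal{P}_S(\mathcal{F}^H \mathcal{F}(\Delta))\brho$ with $\Delta := \brho \brho^H - \brho_t \brho_t^H$. Because $\Delta$ has rank at most two, the central technical step is to upgrade the rank-1 isometry constant $\delta_1$ to a rank-2 constant controlling $\|(\mathcal{F}^H \mathcal{F} - \mathbf{I})(\Delta)\|$ for $\brho \in E(\epsilon)$; the stated $\delta_2 = \frac{\sqrt{2}(2+\epsilon)\delta_1}{\sqrt{(1-\epsilon)(2-\epsilon)}}$ is the output of this estimate, the $\epsilon$-dependence entering because the size of $\Delta$ relative to $\mathrm{dist}(\brho,\brho_t)$ is governed by membership in $E(\epsilon)$. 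With this in hand I would split $\nabla \mathcal{J}$ into an ideal part $\Delta \brho$ plus a perturbation, lower-bound $\mathrm{Re}\langle \nabla \mathcal{J}(\brho), \brho - e^{\mathrm{i}\Phi(\brho)}\brho_t\rangle$ by relating the ideal part to $\mathrm{dist}^2(\brho,\brho_t)$ while absorbing the perturbation, and separately upper-bound $\|\nabla \mathcal{J}(\brho)\| \leq c(\delta_1)\|\brho_t\|\,\mathrm{dist}(\brho,\brho_t)$ with $c(\delta_1) = (2+\epsilon)(1+\epsilon)(1+\delta_1)$. Collecting these produces \eqref{eq:suffcond_c}, and admissible $(\alpha,\beta)$ with $\alpha\beta > 4$ exist precisely when $c(\delta_1) \leq h(\delta_1)$: setting $A = 1/\|\brho_t\|^2$, $B = c^2\|\brho_t\|^2$, the AM-GM bound $\frac{A}{\alpha}+\frac{B}{\beta} \geq 2\sqrt{AB/(\alpha\beta)}$ at $\alpha\beta = 4$ has minimum $\sqrt{AB} = c(\delta_1)$, and this last inequality is verified numerically to hold up to $\delta_1 \approx 0.214$.

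Finally, the geometric decay is immediate: $\brho_0 \in E(\epsilon)$ from step (i), the regularity condition holds on $E(\epsilon)$ from step (ii), and so Lemma \ref{lem:Lemma1} applies with the chosen $\mu \leq 2/\beta$, giving $\mathrm{dist}^2(\brho_k,\brho_t) \leq (1 - \frac{2\mu}{\alpha})^k \mathrm{dist}^2(\brho_0,\brho_t) \leq \epsilon^2(1 - \frac{2\mu}{\alpha})^k\|\brho_t\|^2$. I expect the main obstacle to be the regularity step, and within it the passage from the rank-1 constant $\delta_1$ to the rank-2 control $\delta_2$ on $\Delta$: this is where the non-convex geometry is genuinely used, where the $\epsilon$-dependence must be tracked so the perturbation does not overwhelm the ideal descent term, and where the precise threshold $0.214$ is ultimately determined.
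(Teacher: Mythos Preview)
Your overall structure matches the paper's: spectral perturbation for the initialization bound, a rank-1 to rank-2 RIP upgrade for the regularity condition, then Lemma~\ref{lem:Lemma1} for the decay. Two points need correction.

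First, and most seriously, your AM--GM argument for the threshold is wrong. You claim that admissible $(\alpha,\beta)$ with $\alpha\beta>4$ exist precisely when $c(\delta_1)\le h(\delta_1)$, but the infimum of $\tfrac{1}{\alpha}+\tfrac{c^2}{\beta}$ over the open region $\{\alpha\beta>4\}$ is $0$, not $c$: nothing prevents taking both $\alpha$ and $\beta$ arbitrarily large. The correct existence condition is simply $h(\delta_1)>0$, i.e.\ $\delta_2<1$ (together with $\epsilon<1$), and \emph{that} is what fixes the numerical threshold $\delta_1\le 0.214$. Your proposed condition $c\le h$ in fact fails for every $\delta_1>0$, since $c(0)=h(0)=2$ and $c$ increases while $h$ decreases as $\delta_1$ grows; it would give a trivial threshold.

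Second, for the lower bound in the regularity condition the paper does not estimate $\mathrm{Re}\langle\nabla\mathcal J(\brho),\brho-\hat\brho_t\rangle$ directly by splitting into an ideal part plus perturbation. Instead, after the Lipschitz bound $\|\nabla\mathcal J\|\le c\,\mathrm{dist}$ (Lemma~\ref{lem:Lemma7}), it observes (Lemma~\ref{lem:Lemma8}) that the remaining inequality is implied by the restricted strong convexity form $\mathcal J(\brho)\ge\tfrac{\eta}{2}\,\mathrm{dist}^2(\brho,\brho_t)$. This last inequality follows cleanly from the local RIP-2 lower bound $\|\mathcal F(\Delta)\|^2\ge(1-\delta_2)\|\Delta\|_F^2$ (Lemma~\ref{lem:Lemma6}) together with $\|\Delta\|_F^2\ge(1-\epsilon)(2-\epsilon)\,\mathrm{dist}^2$ (Lemma~\ref{lem:Lemma5}), which is exactly where the product form $h(\delta_1)=(1-\delta_2)(1-\epsilon)(2-\epsilon)$ comes from. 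Your direct route would require separate control of $\mathrm{Re}\langle\Delta\brho,\brho_e\rangle$ and of the $\bdelta$-perturbation term, which is workable but messier and does not immediately produce the same constant.
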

\begin{proof}
See Section \ref{sec:Proof_Thm1}. 
\end{proof}

We refer to $\delta_2$ as the restricted isometry constant of the \emph{local} RIP-2 condition by Lemma \ref{lem:Lemma6} which is exclusively over elements of the form $\{ \brho \brho^H - \brho_t \brho_t^H : \brho \in E(\epsilon) \}$ , and $c$ as the local Lipschitz constant of $\nabla \mathcal{J}$ by Lemma \ref{lem:Lemma7}, both stated in Section \ref{sec:Proofs}.

\begin{remark}
We summarize the implications of Theorem \ref{thm:Theorem1} by the following remarks.
\begin{enumerate}
\item Theorem \ref{thm:Theorem1} establishes a regime in which the regularity condition holds by default as a result of the RIP over rank-1, PSD matrices, captured by \eqref{eq:suffcond_c}. 
This regime also defines the range of values the RIC-$\delta_1$ can attain, through the quantity $\delta_2$ which must satisfy $\delta_2 < 1$, as shown in Figure \ref{fig:Figure_1}. 
Notably, having $\delta_2 < 1$ guarantees the uniqueness of a solution locally in $E(\epsilon)$, and the restricted strong convexity of $\mathcal{J}$ since there exists a fixed $\alpha, \beta$ satisfying \eqref{eq:suffcond_c}. 
Numerically, plugging in the $\epsilon$ constant defined by $\delta_1$, this is satisfied for $\delta_1 \leq 0.214$.


\item Figure \ref{fig:Figure_1} demonstrates the values the constants $c$ and $h$ attain in the valid range for $\delta_1$. 
These $\mathcal{O}(1)$ constants directly impact the convergence rate of the algorithm, as $\alpha$ and $\beta$ are required to be sufficiently large values for \eqref{eq:suffcond_c} to hold. 
Observe that \eqref{eq:suffcond_c} implies setting $\alpha = \alpha'/{\| \brho_t \|^2}$, and $\beta = \beta' \| \brho_t \|^2$, where $\alpha', \beta' = \mathcal{O}(1)$, hence $\alpha \beta = \mathcal{O}(1)$. 
Since clearly $h < 2$, and $c^2 > 4$, and the regularity condition can at best hold when $\alpha' \beta' > 4$ by definition. 
Therefore, the RIP over rank-1, PSD matrices with RIC-$\delta_1 \leq 0.214$ is a sufficient condition for the exact recovery via GWF. 

\item Since $\| \brho_t \|$ is unknown a priori, a suitable approximate scaling in setting constants $\alpha, \beta$ is  $\| \brho_0 \|^2$. 
This scaling factor is merely the leading eigenvalue $\lambda_0$ of the spectral matrix in \eqref{eq:spectralInitGWF}. 
From Lemma \ref{lem:Lemma3}, for $\mathcal{F}$ satisfying RIP over rank-1, PSD matrices with the RIC-$\delta_1$, $\lambda_0$ is lower bounded by $1-\delta_1$, hence the condition in \eqref{eq:suffcond_c} can be enforced by setting $\alpha', \beta' = \mathcal{O}(1)$ such that
$$
\frac{1+\delta_1}{\alpha'} + \frac{c^2(\delta_1)}{(1-\delta_1) \beta'} \leq h(\delta_1).
$$
Similarly, picking $\mu \leq \frac{2}{\beta}$ to yield convergence rate of $\frac{2\mu}{\alpha} \leq \frac{4}{\alpha' \beta'}$, the iterations in \eqref{eq:altUpdtWF} must have a step size $\mu$ that is $\mathcal{O}({1}/{\| \brho_t \|^2})$. 
This is essentially where the normalization term in \eqref{eq:GWF_Updates} originates from, with $\| \brho_0 \|^2$ serving as an approximation to $\| \brho_t \|^2$. 
As a result, Lemma \ref{lem:Lemma1} for \eqref{eq:GWF_Updates} simply holds for $\mu_k \leq (1-\delta_1)\frac{2}{\beta'}$ due the effect of the mismatch in the scaling factors, in agreement to what is noted in \cite{candes2015phase}.  




\item Overall, Theorem \ref{thm:Theorem1} establishes that the convergence speed of GWF algorithm is controlled by the restricted isometry constant $\delta_1$.  
As $\delta_1$ approaches to the critical limit of $0.214$, $c$ and $\delta_2$ values increase super-linearly as shown in Figure \ref{fig:Figure_1}. 
This has strong implications on the convergence speed, as $\beta'$ is inversely proportional to the choice of step sizes $\mu_k$, and quadratically related to the magnitude of $c$. 

\item A consequence of our result is a universal upper bound on $\epsilon$ under our sufficient condition of RIP over rank-1, PSD matrices. As depicted in Figure \ref{fig:Figure_1}, Theorem \ref{thm:Theorem1} determines what sufficiently close means numerically. 

\end{enumerate}
\end{remark}


\begin{remark}
Despite solving the identical perturbed problem, GWF iterations provably converge, whereas the convergence guarantees of Uzawa's method vanish due to inclusion of a non-convex constraint. 
Similarly, the special structure of the constraint set and the GWF iterates suffice the RIP condition to be satisfied only over rank-1 matrices in the PSD cone, whereas the uniqueness condition of Uzawa's method requires RIP over the set of rank-2 matrices in the non-convex rank minimization problem. 
\end{remark}




\begin{figure}
\centering
\includegraphics[scale=0.325]{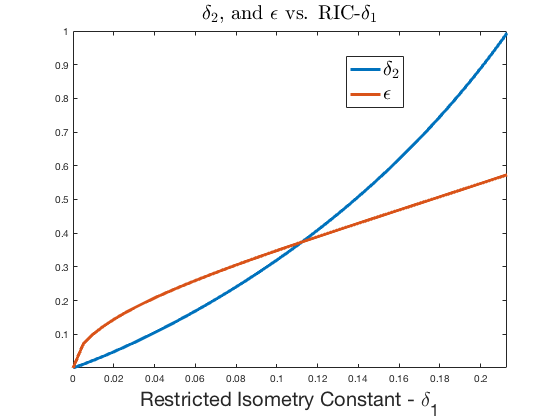}
\includegraphics[scale=0.325]{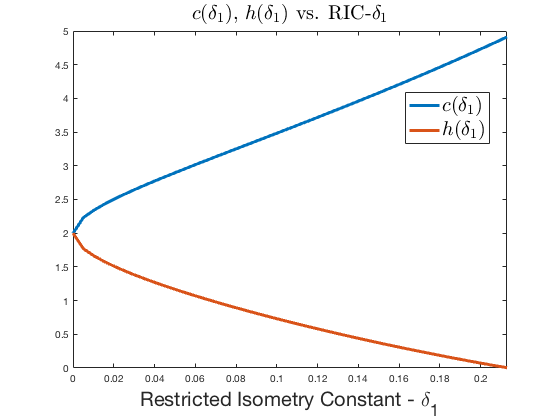}
\caption{\emph{Numerical evaluation of $\epsilon$, RIC-$\delta_2$ and $c$, $h$ values with respect to the RIC-$\delta_1 \leq 0.214$ of the restricted isometry property on the set of rank-1 PSD matrices.}}
\label{fig:Figure_1}
\end{figure}

%

\subsection{Restricted Isometry Property for Cross-Correlation of Gaussian Measurements}

In standard WF for phase retrieval with the i.i.d. complex Gaussian model, i.e., $\mathbf{L}_m \sim \mathcal{N}(0, \mathbf{I}/2) + \mathrm{i} \mathcal{N}(0, \mathbf{I}/2)$, the accuracy of the spectral estimate \eqref{eq:spectral} is established as
\begin{equation}\label{eq:WF_init}
\| \mathbf{Y}  - ( \brho_t \brho_t^H + \| \brho_t \|^2 \mathbf{I} ) \| \leq \delta \| \brho_t \|^2
\end{equation}
with probability $1-10e^{\gamma N} - 8/N^2$ where $\gamma$ is a fixed positive numerical constant. 
This result is derived from the concentration bound of the Hessian of the objective function around its expectation at a global minimizer $\brho_t$ such that
\begin{equation}
\| \nabla^2 \mathcal{J} ( \brho_t ) - \mathbb{E} [ \nabla^2 \mathcal{J} ( \brho_t ) ] \| \leq \delta \| \brho_t \|^2
\end{equation}
where $\delta$ is the concentration bound. 

In the problem of phase retrieval, plugging in the definition of phaseless measurements into \eqref{eq:spectral}, the auto-correlations yield the $4^{th}$ moments of the elements of the Gaussian measurement vectors.
This introduces a bias of $\| \brho_t \|^2 \mathbf{I}$ in the spectral estimate $\mathbf{Y}$ as can be seen in \eqref{eq:WF_init}. 
Moving from the auto-correlations to cross-correlations removes this bias component from the spectral matrix of GWF. 
Hence, for the Gaussian model, cross-correlated measurement map, i.e., $\mathcal{F}$ when $i \neq j$, satisfies the RIP over the set of rank-1, PSD matrices. 

Without loss of generality and following \cite{candes2015phase}, we present our result for the case $\| \brho_t \| = 1$.
\begin{theorem}{\emph{RIP over Rank-1, PSD Matrices for Cross-Correlated Gaussian Measurements.}}\label{thm:Theorem2}
Let the measurement vectors $\mathbf{L}_i^m, \mathbf{L}_j^m$ in \eqref{eq:interferom} follow the i.i.d. complex Gaussian model, i.e. $\mathbf{L}_i^m, \mathbf{L}_j^m \sim \mathcal{N}(0, \mathbf{I}/2) + \mathrm{i} \mathcal{N}(0, \mathbf{I}/2)$. Then, the lifted forward model $\frac{1}{\sqrt{M}}\mathcal{F}$ in \eqref{eq:LiftedLinMod} for cross-correlated measurements (when $i \neq j$), satisfies RIP defined in \eqref{eq:RIP1} for $r = 1$ over the PSD cone, with probability $1-8e^{-\gamma N}-5/N^{2}$ given $\mathcal{O}(N \log N)$ measurements, where $\gamma$ is a fixed positive numerical constant. Moreover, the spectral matrix $\hat{\mathbf{X}}$ defined in \eqref{eq:spectralInitGWF} satisfies
\begin{equation}
\| \hat{\mathbf{X}}  -  \brho_t \brho_t^H  \| \leq \delta_1,
\end{equation}
where $\brho_t$ is the ground truth signal with $\| \brho_t \| = 1$ and $\delta_1$ is the restricted isometry constant of $\frac{1}{\sqrt{M}}\mathcal{F}$. 
\end{theorem}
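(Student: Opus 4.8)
The plan is to establish the restricted isometry property over $\mathcal{X}$ pointwise first, then upgrade it to a uniform statement by a covering argument, and finally read off the spectral-matrix bound as an immediate corollary of Lemma \ref{cor:Corr1}. The crucial structural observation is that for any rank-$1$ PSD matrix $\mathbf{X} = \brho \brho^H$ the squared measurement factorizes,
\begin{equation*}
\big| (\mathbf{L}_i^m)^H \brho \brho^H \mathbf{L}_j^m \big|^2
= \big| (\mathbf{L}_i^m)^H \brho \big|^2 \, \big| (\mathbf{L}_j^m)^H \brho \big|^2 =: A_m B_m,
\end{equation*}
where, for $\|\brho\| = 1$, the factors $A_m$ and $B_m$ are unit-mean and, since $\mathbf{L}_i^m$ and $\mathbf{L}_j^m$ are independent when $i \neq j$, mutually independent. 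Consequently $\mathbb{E}[A_m B_m] = \mathbb{E}[A_m]\,\mathbb{E}[B_m] = \|\brho\|^4 = \|\mathbf{X}\|_F^2$, so $\tfrac{1}{\sqrt{M}}\mathcal{F}$ is an isometry on $\mathcal{X}$ \emph{in expectation with constant exactly one}. This is precisely where the move from auto- to cross-correlations pays off: the bias term $\|\brho_t\|^2 \mathbf{I}$ of the standard WF spectral estimate in \eqref{eq:WF_init} originates from the fourth moment $\mathbb{E}[|(\mathbf{L})^H\brho|^4] = 2\|\brho\|^4$, whereas the factorized independent product carries no such excess, leaving the identity as the target operator.

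First I would prove pointwise concentration for a fixed unit vector $\brho$. The product $A_m B_m$ is heavier-tailed than sub-exponential — it is sub-Weibull of order $1/2$ and has no finite moment generating function on the positive axis — so a direct Bernstein bound is unavailable. The device I would use to circumvent this is conditioning: fixing the realization of $\{\mathbf{L}_j^m\}_m$, the sum $\tfrac{1}{M}\sum_m A_m B_m$ is a \emph{weighted} sum of i.i.d.\ unit-mean exponentials $A_m$ with deterministic weights $B_m$, hence conditionally sub-exponential, and a Bernstein inequality yields concentration around the conditional mean $\bar B := \tfrac{1}{M}\sum_m B_m$ with deviation governed by $\tfrac{1}{M}\|B\|_2$ and $\tfrac{1}{M}\|B\|_\infty$. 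A second, unconditional step then handles $\bar B \to 1$ (itself a sub-exponential average), together with $\|B\|_\infty = \mathcal{O}(\log M)$ and $\tfrac{1}{M}\|B\|_2^2 \to 2$. Combining the two gives a bound of the form $\mathbb{P}\!\left( \big| \tfrac{1}{M}\sum_m A_m B_m - 1 \big| > t \right) \leq 2\exp(-c\,M\,\min\{t^2,t\})$ for $t = \mathcal{O}(1)$ and a numerical constant $c>0$.

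Next I would upgrade this to a uniform bound over all of $\mathcal{X}$ by a covering net. Taking an $\eta$-net $\mathcal{N}_\eta$ of the complex unit sphere, of cardinality at most $(1+2/\eta)^{2N}$, a union bound over $\mathcal{N}_\eta$ requires $M\,\min\{t^2,t\} \gtrsim N\log(1/\eta)$, i.e.\ $M = \mathcal{O}(N)$ up to logarithmic factors, which is the source of the $\mathcal{O}(N\log N)$ sampling complexity. To pass from the net to an arbitrary $\brho$ I would bound the perturbation $\big| \tfrac{1}{M}\sum_m A_m(\brho)B_m(\brho) - \tfrac{1}{M}\sum_m A_m(\brho')B_m(\brho') \big|$ for the nearest net point $\brho'$ by a Lipschitz estimate controlled by $\max_m \|\mathbf{L}_i^m\|$ and $\max_m \|\mathbf{L}_j^m\|$; on the good event these extrema sit at their typical $\mathcal{O}(N + \log M)$ scale, so choosing $\eta$ a small inverse polynomial in $N$ absorbs the perturbation into $\delta_1$. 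Intersecting the pointwise good events over the net with the events controlling the measurement norms then produces the stated probability $1 - 8e^{-\gamma N} - 5/N^2$ and establishes RIP over $\mathcal{X}$ with constant $\delta_1$.

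The ``moreover'' claim is then immediate: by \eqref{eq:spectral_init_Fin} we have $\hat{\mathbf{X}} = \tfrac{1}{M}\mathcal{P}_S(\mathcal{F}^H\mathcal{F}(\brho_t\brho_t^H))$, and Lemma \ref{cor:Corr1} gives $\hat{\mathbf{X}} - \brho_t\brho_t^H = \bdelta(\brho_t\brho_t^H)$ with $\|\bdelta(\brho_t\brho_t^H)\| \leq \delta_1 \|\brho_t\brho_t^H\| = \delta_1$ since $\|\brho_t\| = 1$. The main technical obstacle throughout is the heavy-tailed product $A_m B_m$: the conditioning device reduces each ensemble to a sub-exponential concentration, but one must then control, \emph{uniformly over the net}, both the weighted-sum deviations and the auxiliary quantities $\bar B(\brho)$, $\|B(\brho)\|_\infty$, $\|B(\brho)\|_2$ that enter the conditional Bernstein bound, while ensuring the extreme measurement norms used in the Lipschitz extension remain at their typical scale. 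Balancing the sub-exponential tails against the $e^{2N\log(1/\eta)}$ net cardinality, and keeping the net-to-sphere perturbation below the target $\delta_1$, is the delicate part of the argument.
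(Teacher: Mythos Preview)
Your approach is correct but takes a genuinely different route from the paper's. The paper does not attack the scalar RIP quantity $\tfrac{1}{M}\sum_m A_m B_m$ directly. Instead, Lemma~\ref{lem:Lemma10} establishes the stronger operator-norm concentration $\|\mathbf{Y} - \brho_t\brho_t^H\| \leq \delta$ for a \emph{fixed} $\brho_t$: setting $\brho_t = \mathbf{e}_1$ by unitary invariance and partitioning both the test vector $\mathbf{y} = (\mathbf{y}_1,\tilde{\mathbf{y}})$ and the sampling vectors along their first coordinate, the quadratic form $I_0(\mathbf{y}) = |\mathbf{y}^H(\mathbf{Y} - \mathbf{e}_1\mathbf{e}_1^T)\mathbf{y}|$ splits into four cross-terms, each handled by a separate inequality (Chebyshev for the leading scalar term, Hoeffding for the two mixed terms, Bernstein for the pure $\tilde{\mathbf{y}}$ term), and a $\tfrac{1}{4}$-net over $\mathbf{y}$ delivers the spectral norm. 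The RIP is then read off from the diagonal $|\brho_t^H\bdelta(\brho_t\brho_t^H)\brho_t| \leq \|\bdelta(\brho_t\brho_t^H)\|$ together with unitary invariance. The logical flow is therefore reversed relative to yours: the paper proves the spectral-matrix bound first and extracts RIP from it, whereas you prove RIP first and invoke Lemma~\ref{cor:Corr1} to get the spectral bound. Your conditioning device (freeze one ensemble, Bernstein on the weighted exponential sum, then de-condition) exploits the $i \neq j$ independence more transparently and bypasses the four-term decomposition; the paper's coordinate partition, by contrast, yields the full operator norm of $\bdelta(\brho_t\brho_t^H)$ in one stroke, which is what Lemma~\ref{cor:Corr1} would otherwise have to supply. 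One small correction to your final step: since $\hat{\mathbf{X}} = \mathcal{P}_S\bigl(\tfrac{1}{M}\mathcal{F}^H\mathcal{F}(\brho_t\brho_t^H)\bigr)$, you have $\hat{\mathbf{X}} - \brho_t\brho_t^H = \mathcal{P}_S(\bdelta(\brho_t\brho_t^H))$ rather than $\bdelta(\brho_t\brho_t^H)$ itself; the bound still follows by non-expansiveness of $\mathcal{P}_S$ under the spectral norm, which is exactly how the paper closes its argument.
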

\begin{proof}
See Section \ref{sec:Proof_Thm2}. 
\end{proof}



\begin{remark}
Outcomes of Theorem \ref{thm:Theorem2} are explained with the following remarks.
\begin{enumerate}
\item Theorem \ref{thm:Theorem2} establishes the relationship between the concentration bound of the spectral matrix, and the RIP-1 condition for interferometric inversion. This indicates that the regularity condition of WF framework is \emph{redundant} for our problem if $\delta_1$ is picked properly, since by Theorem \ref{thm:Theorem1}, RIP$_+$ with RIC-$\delta_1 \leq 0.214$ directly implies the regularity condition. 

\item Note that the equivalent linear model in the lifted domain actually has $N^2$ unknowns. By our formulation of GWF in the lifted problem, having measurements of the order of $N \log N$, therefore, corresponds to an under-determined system of equations in which exact recovery guarantees of GWF hold. 

\item Note that our measurement complexity is identical to that of standard WF. This is an expected result, as the cross-correlations only impact the removal of the diagonal bias in the spectral matrix $\mathbf{Y}$, not the concentration of $\mathbf{Y}$ around its expectation. 

\item It should be re-iterated that when $i = j$, the backprojection estimate $\mathbf{Y}$ contains a diagonal bias of the form $\| \brho_t \|^2 \mathbf{I}$, which breaks our RIP condition over rank-1, PSD matrices. 
In such a case, the concentration bound then holds around the expectation term that includes this diagonal bias. 
Establishing a sufficient condition for the phaseless case under our framework is not in the scope of this paper, which we omit for future work. 
\end{enumerate}
\end{remark}

\begin{remark}
It should be noted that phaseless measurements in the Gaussian model is known to satisfy a restricted isometry property under the $\ell_1$ norm in the range of the lifted forward map. This is referred to as RIP-1 condition in literature, and is prominently featured in exact recovery theory of standard WF, and more recently in \cite{soltanolkotabi2019structured} for non-convex optimization via gradient descent, even in the presence of non-convex regularizers. 

A critical difference of our work is that such a RIP type condition isn't used as a tool to obtain local curvature and local smoothness conditions to assert the regularity condition with a high probability. 
Instead, under the RIP studied in this paper, a regime is derived in which the regularity condition is guaranteed to hold, deterministically. 
Notably, for the interferometric inversion problem with $i \neq j$, $\delta_1$ values tested in \cite{candes2015phase} are confidently within the range in which our GWF theory applies. 
\end{remark}

\section{Proofs of Theorems \ref{thm:Theorem1} and \ref{thm:Theorem2}}\label{sec:Proofs}

In this section, we present the proofs of Theorems \ref{thm:Theorem1} and \ref{thm:Theorem2}. 
We first present key lemmas, and next prove the theorems using these lemmas. 
We provide the detailed proofs of the lemmas in Appendix \ref{sec:AppThm1} and \ref{sec:AppThm2}.  

To prove Theorem \ref{thm:Theorem1}, we begin by showing that the RIC-$\delta_1$ of our RIP condition determines the distance $\epsilon$ of the spectral initialization in a one-to-one manner.
We then establish that the regularity condition is directly implied by RIP$_+$ with RIC-$\delta_1 \leq 0.214$. 
In achieving this result we first show that the structure of the rank-1, PSD set allows for restricted isometry property to hold locally for the difference of two rank-1 PSD matrices, i.e., a local RIP-2 condition similar to the one in \cite{li2018rapid}, with a RIC-$\delta_2$. 
The upper bound on $\delta_1$ ensures that RIC-$\delta_2$ of the local RIP-2 satisfies $\delta_2 < 1$. 
The local RIP-2 condition, in turn, ensures that restricted strong convexity holds in the $\epsilon$-neighborhood of the global solution set, which leads to exact recovery conditions of GWF. 

For Theorem \ref{thm:Theorem2}, we first show that the bias term in \eqref{eq:WF_init} resulting from the $4^{th}$ moments of the random Gaussian entries disappear when we have cross-correlations instead of auto-correlations of measurements. 
We then establish that the spectral matrix is concentrated around its expectation, using the machinery in \cite{candes2015phase}, adapted for cross-correlations. 
Finally, we use the definition of the spectral matrix to derive the RIP over rank-1, PSD matrices from the concentration bound, which yields the RIC-$\delta_1$. 

\subsection{Proof of Theorem \ref{thm:Theorem1}}\label{sec:Proof_Thm1}

Without the loss of generality, we assume $\brho_t$ is a solution with $\| \brho_t \| = 1$. 
In establishing the exact recovery guarantees for GWF, we take a two-step approach. 
For a lifted forward map $\mathcal{F}$ satisfying RIP$_+$ with RIC-$\delta_1$ , we first show that the initialization by spectral method yields an estimate that is in the set $E(\epsilon)$. 
We then establish the regularity condition \eqref{eq:RegCon} for the objective function \eqref{eq:GWF_opt} in the $\epsilon$-neighborhood defined by the initialization. 
These two results culminate into convergence to a global solution at a geometric rate as stated in Theorem \ref{thm:Theorem1}. 


\subsubsection{$\epsilon$-Neighborhood of Spectral Initialization}

Rather than the law of large numbers approach in \cite{candes2015phase}, we take the geometric point of view of \cite{wang2018solving} in establishing the $\epsilon$-neighborhood of the spectral initialization.  
We begin by evaluating the distance of the leading eigenvector $\mathbf{v}_0 \in \mathbb{C}^N$ of the spectral matrix in \eqref{eq:spectralInitGWF} to the global solution set \eqref{eq:SolutionSet}. 
Recall the definition of the distance metric 
\begin{equation}
\text{dist} (\v_0, \brho_t) = \| \v_0 - e^{\mathrm{i} \Phi(\v_0)} \brho_t \|
\end{equation}
which is essentially the Euclidean distance of $\mathbf{v}_0$ to the closest point in the solution set $\mathit{P}$ in \eqref{eq:SolutionSet}. 
Without loss of generality, we fix $\Phi(\v_0) = \Phi_0$ and incorporate it into $\brho_t$ such that $\hat{\brho}_t = e^{\mathrm{i} \Phi_0} \brho_t$ represents the closest solution to $\v_0$ in $\mathit{P}$.    
We breakdown the key arguments of our proof into the following three lemmas. 
 
\begin{lemma}\label{lem:Lemma2}
Let $\brho_t$ be a solution with $\| \brho_t \| = 1$ and $\hat{\brho}_t$ is the closest solution in $\mathit{P}$ to $\v_0$. Then, $\mathrm{Re} \langle \hat{\brho}_t, \v_0 \rangle = \langle \hat{\brho}_t, \v_0 \rangle$, and
\begin{equation}
\hat{\brho}_t = \cos(\theta) \v_0 + \sin(\theta) \v_0^{\perp}
\end{equation}
where $\| \v_0 \| = 1$, $\cos ( \theta )  = \mathrm{Re}  \langle \hat{\brho}_t, \v_0 \rangle$, and $\v_0^{\perp}$ is a unit vector lying in a plane whose normal is $\v_0$. 
Similarly, there exists a perpendicular unit vector, $\hat{\brho}_t^{\perp}$, to $\hat{\brho}_t$ such that
\begin{equation}
\hat{\brho}_t^{\perp} = - \sin(\theta) \v_0 + \cos(\theta) \v_0^{\perp}.
\end{equation} 
\end{lemma}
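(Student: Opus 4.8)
The plan is to recognize that, once both vectors are normalized, the entire statement is elementary two-dimensional geometry inside the complex $2$-plane $\mathrm{span}_{\mathbb{C}}\{\v_0, \hat{\brho}_t\}$; the only genuinely problem-specific ingredient is the realness of $\langle \hat{\brho}_t, \v_0\rangle$, which is forced by the optimal choice of the global phase $\Phi_0$.

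First I would establish the realness claim. By definition $\Phi_0 = \Phi(\v_0)$ is the phase minimizing $\|\v_0 - \brho_t e^{\mathrm{i}\phi}\|$, and the discussion following Definition \ref{def:DistMet} characterizes this minimizer by $\mathrm{Re}\langle \v_0, e^{\mathrm{i}\Phi_0}\brho_t\rangle = |\langle \v_0, \brho_t\rangle|$. Since a global phase does not affect the modulus, $|\langle \v_0, \hat{\brho}_t\rangle| = |\langle \v_0, \brho_t\rangle|$, so the real part of $\langle \v_0, \hat{\brho}_t\rangle$ equals its modulus; a complex number whose real part equals its modulus is real and nonnegative. Conjugating shows $\langle \hat{\brho}_t, \v_0\rangle$ is real and nonnegative as well, which is exactly $\mathrm{Re}\langle \hat{\brho}_t, \v_0\rangle = \langle \hat{\brho}_t, \v_0\rangle$, and this common value lies in $[0,1]$ by Cauchy--Schwarz since both vectors are unit norm.

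Next I would produce the decomposition by a single Gram--Schmidt step. Setting $\cos\theta := \langle \hat{\brho}_t, \v_0\rangle \in [0,1]$ and taking $\theta \in [0,\pi/2]$, the residual $\hat{\brho}_t - \cos\theta\,\v_0$ is orthogonal to $\v_0$ and has squared norm $1 - \cos^2\theta = \sin^2\theta$ by the Pythagorean identity (using $\|\hat{\brho}_t\| = \|\v_0\| = 1$ and realness of the coefficient). Defining $\v_0^{\perp} := (\hat{\brho}_t - \cos\theta\,\v_0)/\sin\theta$ when $\sin\theta \neq 0$ then yields a unit vector orthogonal to $\v_0$ with $\hat{\brho}_t = \cos\theta\,\v_0 + \sin\theta\,\v_0^{\perp}$; in the degenerate case $\sin\theta = 0$ any unit vector orthogonal to $\v_0$ serves as $\v_0^{\perp}$ and the identity holds trivially.

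Finally, for the second vector I would simply define $\hat{\brho}_t^{\perp} := -\sin\theta\,\v_0 + \cos\theta\,\v_0^{\perp}$ and verify its two asserted properties by direct expansion against the orthonormal pair $\{\v_0, \v_0^{\perp}\}$: since the coefficients are real, $\|\hat{\brho}_t^{\perp}\|^2 = \sin^2\theta + \cos^2\theta = 1$ and $\langle \hat{\brho}_t^{\perp}, \hat{\brho}_t\rangle = -\sin\theta\cos\theta + \cos\theta\sin\theta = 0$. Structurally the map sending the orthonormal frame $(\v_0, \v_0^{\perp})$ to $(\hat{\brho}_t, \hat{\brho}_t^{\perp})$ is just a planar rotation by $\theta$, so both claims are immediate. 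I do not anticipate a substantive obstacle; the only points needing care are invoking the phase-optimality identity to secure realness (without which the cross terms in the final verification would not cancel) and treating the parallel case $\sin\theta = 0$ separately. This lemma is purely a coordinate setup whose payoff appears in the subsequent lemmas, where $\cos\theta$ is estimated through the RIP on $\mathcal{X}$.
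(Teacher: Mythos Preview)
Your proposal is correct and follows essentially the same route as the paper: both obtain the realness of $\langle \hat{\brho}_t,\v_0\rangle$ from the phase-optimality identity $\mathrm{Re}\langle \v_0, e^{\mathrm{i}\Phi_0}\brho_t\rangle = |\langle \v_0,\brho_t\rangle|$, then decompose $\hat{\brho}_t$ in the two-dimensional span of $\{\v_0,\hat{\brho}_t\}$ and verify $\hat{\brho}_t^{\perp}$ by direct expansion. Your explicit Gram--Schmidt construction $\v_0^{\perp} = (\hat{\brho}_t - \cos\theta\,\v_0)/\sin\theta$ with separate handling of the case $\sin\theta = 0$ is slightly more concrete than the paper's appeal to a ``representation theorem,'' but the underlying argument is the same.
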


\begin{proof}
See Appendix \ref{sec:App2}.
\end{proof}

%

\begin{lemma}\label{lem:Lemma3}
Consider the spectral matrix $\hat{\mathbf{X}}$ given by \eqref{eq:spectral_init_Fin}, and denote the spectral matrix projected onto the positive semi-definite cone as $\hat{\mathbf{X}}_{PSD}$.  
Then, for a lifted mapping $\mathcal{F}$ satisfying RIP$_+$ with RIC-$\delta_1 = \delta$, $\hat{\mathbf{X}}$ and $\hat{\mathbf{X}}_{PSD}$ have the identical leading eigenvalue-eigenvector pair $\lambda_0$, $\v_0$ such that 
$$
1 - \delta \leq \lambda_0 \leq 1 + \delta.
$$
Furthermore, $\hat{\mathbf{X}}$ and $\hat{\mathbf{X}}_{PSD}$ generate identical spectral initialization, $\brho_0$. 
\end{lemma}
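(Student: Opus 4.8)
The plan is to realize $\hat{\mathbf{X}}$ as a spectral-norm perturbation of the rank-one target $\brho_t \brho_t^H$ and then read off both assertions from elementary eigenvalue perturbation theory. First, using the representation \eqref{eq:spectral_init_Fin} together with Lemma \ref{cor:Corr1}, I would write
\begin{equation}
\tfrac{1}{M}\mathcal{F}^H \mathcal{F}(\brho_t \brho_t^H) = \brho_t \brho_t^H + \bdelta(\brho_t \brho_t^H), \qquad \| \bdelta(\brho_t \brho_t^H) \| \leq \delta_1 \| \brho_t \brho_t^H \| = \delta_1,
\end{equation}
where the final equality uses $\| \brho_t \| = 1$, so that $\| \brho_t \brho_t^H \| = 1$ in spectral norm. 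Since $\brho_t \brho_t^H$ is already Hermitian, applying the projection $\mathcal{P}_S$ onto the symmetric matrices leaves it fixed and acts only on the perturbation, giving $\hat{\mathbf{X}} = \brho_t \brho_t^H + \mathbf{E}$ with $\mathbf{E} := \mathcal{P}_S(\bdelta(\brho_t \brho_t^H))$. I would then note that $\mathcal{P}_S(\mathbf{A}) = \tfrac{1}{2}(\mathbf{A} + \mathbf{A}^H)$ is non-expansive in spectral norm, so that $\mathbf{E}$ is Hermitian with $\| \mathbf{E} \| \leq \delta_1$; this is precisely the concentration-type bound $\| \hat{\mathbf{X}} - \brho_t \brho_t^H \| \leq \delta_1$.

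Given this decomposition, the eigenvalue bound is immediate from Weyl's inequality: the eigenvalues of the Hermitian matrix $\hat{\mathbf{X}}$ lie within $\| \mathbf{E} \| \leq \delta_1$ of those of $\brho_t \brho_t^H$, whose spectrum is $\{1, 0, \dots, 0\}$. Hence the leading eigenvalue $\lambda_0$ of $\hat{\mathbf{X}}$ obeys $1 - \delta_1 \leq \lambda_0 \leq 1 + \delta_1$, which is the claimed bound. The key consequence I would extract here is positivity: since $\delta_1 < 1$, we have $\lambda_0 \geq 1 - \delta_1 > 0$, so the top eigenvalue is strictly positive.

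Finally, to identify the two initializations I would use that $\hat{\mathbf{X}}_{PSD} = \mathcal{P}_{+}(\hat{\mathbf{X}})$ acts on the Hermitian eigendecomposition $\hat{\mathbf{X}} = \sum_i \lambda_i \v_i \v_i^H$ by truncating the negative eigenvalues to zero while leaving the nonnegative eigenpairs untouched. Because $\lambda_0 > 0$ is the largest eigenvalue, it and its eigenvector $\v_0$ are preserved verbatim by $\mathcal{P}_{+}$, and truncation cannot promote any other eigenvalue above $\lambda_0$; thus $\hat{\mathbf{X}}$ and $\hat{\mathbf{X}}_{PSD}$ share the leading pair $(\lambda_0, \v_0)$, and by \eqref{eq:GWF_initp} both yield the same initial estimate $\brho_0 = \sqrt{\lambda_0}\, \v_0$. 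The only genuinely delicate point---and the one the argument really hinges on---is establishing $\lambda_0 > 0$ before invoking the PSD projection, since if the leading eigenvalue could be negative the projection would annihilate it and the two matrices would disagree (exactly the scenario flagged in the earlier footnote); the spectral-norm bound $\| \mathbf{E} \| \leq \delta_1 < 1$ is what rules this out.
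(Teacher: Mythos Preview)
Your proposal is correct and follows essentially the same approach as the paper's proof: both establish the spectral-norm bound $\|\hat{\mathbf{X}} - \brho_t\brho_t^H\| \leq \delta_1$ via Lemma~\ref{cor:Corr1} and the non-expansiveness of the symmetrization, then deduce the eigenvalue sandwich and conclude that the PSD projection preserves the (strictly positive) leading pair. The only cosmetic difference is that you invoke Weyl's inequality by name, whereas the paper carries out the equivalent Rayleigh-quotient bounds $\lambda_0 \geq \brho_t^H \hat{\mathbf{X}} \brho_t \geq 1-\delta$ and $\lambda_0 = \v_0^H \hat{\mathbf{X}} \v_0 \leq |\v_0^H \brho_t|^2 + \delta \leq 1+\delta$ by hand.
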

\begin{proof}
See Appendix \ref{sec:App3}.
\end{proof}

Lemma \ref{lem:Lemma3} allows us to analyze the distance of the initial estimate $\brho_0$ to the solution set by the convenience of either the positive semi-definite $\hat{\mathbf{X}}_{PSD}$ or the symmetric spectral estimate $\hat{\mathbf{X}}$, since they generate the same initial estimate $\brho_0$.  

Next, using the Lemmas \ref{lem:Lemma2} and \ref{lem:Lemma3} we reach the following key result.
\begin{lemma}\label{lem:Lemma4}
In the setup of Lemmas \ref{lem:Lemma2} and \ref{lem:Lemma3}, for the angle $\theta$ between the one-dimensional sub-spaces spanned by $\hat{\brho}_t$ and $\v_0$ we have
\begin{equation}
\sin^2(\theta) \leq \frac{\delta}{1-\delta}
\end{equation}
where $\delta$ is the RIC-$\delta_1$ of the lifted map $\mathcal{F}$ satisfying RIP$_+$ . 
\end{lemma}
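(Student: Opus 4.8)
The plan is to realize the spectral matrix as a bounded Hermitian perturbation of the rank-one truth, and then extract the angle $\theta$ by testing the eigenvector relation against $\hat{\brho}_t^\perp$. First I would use Lemma \ref{cor:Corr1}: since $\frac{1}{\sqrt{M}}\mathcal{F}$ obeys RIP$_+$ with RIC-$\delta$, we have $\frac{1}{M}\mathcal{F}^H\mathcal{F}(\hat{\brho}_t\hat{\brho}_t^H) = \hat{\brho}_t\hat{\brho}_t^H + \bdelta(\hat{\brho}_t\hat{\brho}_t^H)$ with $\|\bdelta(\hat{\brho}_t\hat{\brho}_t^H)\| \le \delta\|\hat{\brho}_t\hat{\brho}_t^H\| = \delta$, using $\|\brho_t\| = 1$ and the fact that the global phase cancels so that $\hat{\brho}_t\hat{\brho}_t^H = \brho_t\brho_t^H$. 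Because $\hat{\brho}_t\hat{\brho}_t^H$ is already Hermitian and $\mathcal{P}_S(\mathbf{A}) = (\mathbf{A}+\mathbf{A}^H)/2$ is non-expansive in the spectral norm, the spectral matrix in \eqref{eq:spectral_init_Fin} can be written as $\hat{\mathbf{X}} = \hat{\brho}_t\hat{\brho}_t^H + \mathbf{E}$, where $\mathbf{E} := \mathcal{P}_S(\bdelta(\hat{\brho}_t\hat{\brho}_t^H))$ is Hermitian with $\|\mathbf{E}\| \le \delta$.

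Next I would exploit the eigenvector equation $\hat{\mathbf{X}}\v_0 = \lambda_0 \v_0$. By Lemma \ref{lem:Lemma2} we have $\hat{\brho}_t^H \v_0 = \cos\theta$, so $\hat{\brho}_t\hat{\brho}_t^H\v_0 = \cos\theta\,\hat{\brho}_t$, and the eigen-relation rearranges to $\mathbf{E}\v_0 = \lambda_0 \v_0 - \cos\theta\,\hat{\brho}_t$. The key step is to pair this identity with the unit vector $\hat{\brho}_t^\perp$ supplied by Lemma \ref{lem:Lemma2}: since $\langle \hat{\brho}_t^\perp, \hat{\brho}_t \rangle = 0$ and $\langle \hat{\brho}_t^\perp, \v_0 \rangle = -\sin\theta$, the right-hand side collapses to $\langle \hat{\brho}_t^\perp, \mathbf{E}\v_0 \rangle = -\lambda_0 \sin\theta$. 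Taking absolute values and applying Cauchy--Schwarz together with $\|\mathbf{E}\| \le \delta$ gives $\lambda_0 \sin\theta = |\langle \hat{\brho}_t^\perp, \mathbf{E}\v_0 \rangle| \le \|\mathbf{E}\| \le \delta$.

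Finally, Lemma \ref{lem:Lemma3} provides $\lambda_0 \ge 1 - \delta > 0$, whence $\sin\theta \le \delta/\lambda_0 \le \delta/(1-\delta)$; since $\sin\theta \le 1$, this yields $\sin^2\theta \le \sin\theta \le \delta/(1-\delta)$, which is the claimed bound. Note this argument in fact gives the angle bound directly, so the stated inequality for $\sin^2\theta$ follows with room to spare; I keep it in the $\sin^2\theta$ form since this is exactly what feeds the $\cos\theta \ge \sqrt{1 - \delta_1/(1-\delta_1)}$ estimate used in the $\epsilon$ formula of Theorem \ref{thm:Theorem1}.

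I expect the only genuinely delicate point to be the first paragraph: carefully tracking the $1/M$ normalization so that the RIC of $\frac{1}{\sqrt{M}}\mathcal{F}$ is what controls $\|\mathbf{E}\|$, and verifying that projecting the perturbation onto the Hermitian matrices does not enlarge its spectral norm. The remainder is a short perturbation computation whose crux is the decision to test the eigen-relation against $\hat{\brho}_t^\perp$, which is precisely what isolates $\sin\theta$; a plain Rayleigh-quotient comparison of $\langle \v_0, \hat{\mathbf{X}}\v_0\rangle$ and $\langle \hat{\brho}_t, \hat{\mathbf{X}}\hat{\brho}_t\rangle$ would only deliver the weaker $\sin^2\theta \le 2\delta$.
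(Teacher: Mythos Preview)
Your argument is correct, and it is genuinely different from the paper's proof. The paper works with the PSD projection $\hat{\mathbf{X}}_{PSD}$, takes its PSD square root $\mathbf{S}_0$, and derives the intermediate Rayleigh-quotient bound
\[
\sin^2\theta \ \le \ \frac{(\hat{\brho}_t^{\perp})^H \hat{\mathbf{X}}_{PSD}\,\hat{\brho}_t^{\perp}}{\hat{\brho}_t^H \hat{\mathbf{X}}_{PSD}\,\hat{\brho}_t}
\]
via the algebraic identity for $\sin^2\theta\,\|\mathbf{S}_0\hat{\brho}_t\|^2-\|\mathbf{S}_0\hat{\brho}_t^{\perp}\|^2$; it then plugs in $\hat{\mathbf{X}}_{PSD}=\brho_t\brho_t^H+\tilde{\mathbf e}$ with $\|\tilde{\mathbf e}\|\le\delta$ to obtain numerator $\le\delta$ and denominator $\ge 1-\delta$. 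Your route avoids the square root and the PSD projection altogether: you test the eigen-relation $\hat{\mathbf{X}}\v_0=\lambda_0\v_0$ directly against $\hat{\brho}_t^{\perp}$, which isolates $\lambda_0\sin\theta$ in one line and then uses only $\|\mathbf E\|\le\delta$ and $\lambda_0\ge 1-\delta$ from Lemma~\ref{lem:Lemma3}. This is more elementary and in fact yields the stronger inequality $\sin\theta\le\delta/(1-\delta)$; the paper's route gives $\sin^2\theta\le\delta/(1-\delta)$ directly, so neither step of ``$\sin^2\theta\le\sin\theta$'' nor any comment about Rayleigh quotients being ``weaker'' is needed there. Both arguments hinge on the same two ingredients (the Hermitian perturbation bound and the eigenvalue lower bound), so the downstream $\epsilon$-formula in Theorem~\ref{thm:Theorem1} is unaffected by which proof you choose.
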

\begin{proof}
See Appendix \ref{sec:App4}.
\end{proof}

From Lemma \ref{lem:Lemma4}, we can now lower bound the inner product of $\hat{\brho}_t$ and $\v_0$ such that
$$
(\text{Re} \langle \hat{\brho}_t, \v_0 \rangle )^2 = \cos^2 (\theta) = 1 - \sin^2(\theta) \geq 1 - \kappa,
$$
where $\kappa = \frac{\delta}{1-\delta}$. 
Writing the distance of the spectral initialization $\brho_0 = \sqrt{\lambda_0} \v_0$ to the solution set, we have
$$
\text{dist}^2 (\brho_0, \brho_t) =  \lambda_0 + 1 - 2 \text{Re} \langle e^{\mathrm{i} \Phi(\brho_0)} {\brho}_t, \sqrt{\lambda_0} \v_0 \rangle. 
$$
It is easy to see that $\text{Re} \langle e^{\mathrm{i} \Phi(\brho_0)} {\brho}_t, \sqrt{\lambda_0} \v_0 \rangle$ is maximized when $\Phi(\brho_0) = \Phi(\v_0)$, hence we get 
\begin{equation}\label{eq:distInit}
\text{dist}^2 (\brho_0, \brho_t) =  \lambda_0 + 1 - 2 \sqrt{\lambda_0} \text{Re} \langle  \hat{\brho}_t , \v_0 \rangle  \leq \lambda_0 + 1 - 2 \sqrt{\lambda_0} \sqrt{1 - \kappa}. 
\end{equation}
From Lemma \ref{lem:Lemma3}, we know that $1-\delta \leq \lambda_0 \leq 1 + \delta$. 
Moreover, the upper bound on the right hand side of \eqref{eq:distInit} is simply a quadratic term with respect to $\sqrt{\lambda_0}$ since $\sqrt{\lambda_0} (\sqrt{\lambda_0} - 2 \sqrt{1 - \kappa} ) + 1$, which is maximized at the boundary of the domain of values $\sqrt{\lambda_0}$ takes. 
Since the quadratic equation is minimized at $\sqrt{1 - \kappa}$ and we have ${1 - \kappa} \leq {1 - \delta}$ for the domain of possible values of $0 \leq \delta < 1$, $\lambda_0 = 1 + \delta$ is an upper bound for the right-hand-side of \eqref{eq:distInit}. 
Hence, we obtain
$$
\text{dist}^2 (\brho_0, \brho_t) \leq 2 + \delta  - 2 \sqrt{1+\delta} \sqrt{1 - \kappa}.
$$
Writing the Taylor series expansion of $\sqrt{1+\delta}$ around $0$, and discarding the components of order $\mathcal{O}(\delta^3)$ and higher, we have the final upper bound 
\begin{equation}
\text{dist}^2 (\brho_0, \brho_t) \leq (2 + \delta) (1 - \sqrt{1 - \kappa}) + \frac{\delta^2}{8},
\end{equation}
which sets the $\epsilon$-neighborhood as 
\begin{equation}
\epsilon^2 =  (2 + \delta) (1 - \sqrt{1 - \kappa}) + \frac{\delta^2}{8}.
\end{equation} 

\subsubsection{Proof of the Regularity Condition}

Recall that we seek a solution to the interferometric inversion problem by minimizing the following loss function
\begin{equation}\label{eq:objfun_repr}
\mathcal{J}(\brho) = \frac{1}{2M} \sum_{m = 1}^M | \left(\mathbf{L}_i^m\right)^H \brho \brho^H \mathbf{L}_j^m - d_{ij}^m |^2
\end{equation}
and address the optimization by forming the steepest descent iterates
\begin{equation}
\brho^{k+1} = \brho^{k} - \mu \nabla \mathcal{J}(\brho^k)
\end{equation}
where $\mu$ is the learning rate and $\nabla \mathcal{J}$ is the complex gradient defined by the {Wirtinger} derivatives. 

As shown in Section \ref{sec:Upd}, the gradient evaluated at a point $\brho$ can be expressed as
\begin{equation}\label{eq:GradRep}
\nabla \mathcal{J}(\brho)  =  \mathcal{Y}(\brho) \brho,
\end{equation}
where
\begin{equation}\label{eq:repsS6}
\mathcal{Y}(\brho) = \mathcal{P}_{S} \left( \mathcal{F}^H \mathcal{F} (\tilde{\brho} - \tilde{\brho_t}) \right)
\end{equation}
with $\tilde{\brho}$ and $\tilde{\brho}_t$ denoting the lifted variables $\tilde{\brho} = \brho \brho^H$ and $\tilde{\brho}_t = \brho_t \brho_t^H$, respectively. 
Invoking Lemma \ref{cor:Corr1} and 
the linearity of $\mathcal{F}^H \mathcal{F}$ and $\bdelta$ of \eqref{eq:perturbmod}, \eqref{eq:repsS6} can be represented as
$$
\mathcal{Y}(\brho) =  \mathcal{P}_{S} \left( \tilde{\brho} - \tilde{\brho_t} + \bdelta (\tilde{\mathbf{e}})\right)
$$
where $\tilde{\mathbf{e}} = \tilde{\brho} - \tilde{\brho_t}$ is the error in the lifted problem. 
Since the lifted variables are already symmetric we can take them out of the projection operator due its linearity. Hence, for the update term we obtain
\begin{eqnarray}\label{eq:updt_term}
\nabla \mathcal{J}(\brho) & = & \mathcal{Y}(\brho) \brho = \tilde{\brho} \brho - \tilde{\brho_t} \brho +  \mathcal{P}_{S}(\bdelta(\tilde{\mathbf{e}})) \brho, \\
 & = & \| \brho \|^2 \brho - (\brho_t^H \brho) \brho_t  +  \mathcal{P}_{S}(\bdelta(\tilde{\mathbf{e}})) \brho.
\end{eqnarray}
Reprising the regularity condition under consideration, we need to establish that there exists constants $\alpha$ and $\beta$, such that $\alpha \beta > 4$ for all $\brho \in E(\epsilon)$ and
\begin{equation}
\label{eq:RegularityCond}
\mathrm{Re} \left( \langle \nabla \mathcal{J}(\brho),  (\brho - {\brho}_t e^{\mathrm{i} \Phi(\brho)}) \rangle \right) \geq \frac{1}{\alpha} \text{dist}^2 (\brho, {\brho}_t) + \frac{1}{\beta} \|  \nabla \mathcal{J}(\brho) \|^2.
\end{equation}

To show the existence of constants $\alpha$ and $\beta$ that satisfy \eqref{eq:RegularityCond}, we upper bound the gradient term, which converts the regularity condition to a \emph{restricted strong convexity condition} \cite{chen2017solving, sun2018geometric}. 
We begin the proof by introducing the following key lemmas. 
\begin{lemma}\label{lem:Lemma5}
Let $\brho_t$ be the ground truth signal with $\| \brho_t \| = 1$, and $\hat{\brho}_t$ denote the global solution closest to $\brho$ such that $\hat{\brho}_t = e^{\mathrm{i} \Phi(\brho)} \brho_t$. Then, for any $ \brho \in E(\epsilon)$, we have
\begin{equation}
(\sqrt{(1-\epsilon)(2-\epsilon)}) \| \brho - \hat{\brho}_t \| \leq \| \brho \brho^H - \brho_t \brho_t^H \|_F \leq  (2+\epsilon) \| \brho - \hat{\brho}_t \|.
\end{equation}
\end{lemma}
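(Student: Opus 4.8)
The plan is to reduce the two-sided estimate to a one-dimensional calculation by absorbing the optimal phase and expanding the lifted difference. First I would set $\hat{\brho}_t = e^{\mathrm{i}\Phi(\brho)}\brho_t$, so that $\brho_t\brho_t^H = \hat{\brho}_t\hat{\brho}_t^H$, and introduce the error vector $\mathbf{h} = \brho - \hat{\brho}_t$. By Definition \ref{def:DistMet}, $\|\mathbf{h}\| = \mathrm{dist}(\brho,\brho_t) \le \epsilon$ for $\brho \in E(\epsilon)$. The essential structural fact I would exploit is the phase-alignment identity established around \eqref{eq:distmetric}: since $\hat{\brho}_t$ realizes the minimizing phase, $\langle\brho,\hat{\brho}_t\rangle = |\langle\brho,\brho_t\rangle| \ge 0$ is real, and hence $\gamma := \langle\hat{\brho}_t,\mathbf{h}\rangle = \langle\hat{\brho}_t,\brho\rangle - 1$ is also real, with $|\gamma| \le \|\hat{\brho}_t\|\,\|\mathbf{h}\| = \|\mathbf{h}\|$ by Cauchy--Schwarz. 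Expanding then gives the exact rank-$\le 2$ decomposition
\[ \brho\brho^H - \brho_t\brho_t^H = \hat{\brho}_t\mathbf{h}^H + \mathbf{h}\hat{\brho}_t^H + \mathbf{h}\mathbf{h}^H, \]
whose Frobenius norm is the quantity to be bounded.

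For the right-hand inequality I would apply the triangle inequality to this decomposition and use $\|\mathbf{a}\mathbf{b}^H\|_F = \|\mathbf{a}\|\,\|\mathbf{b}\|$ for rank-one factors. With $\|\hat{\brho}_t\| = 1$ and $\|\mathbf{h}\| \le \epsilon$ this gives $\|\brho\brho^H - \brho_t\brho_t^H\|_F \le 2\|\mathbf{h}\| + \|\mathbf{h}\|^2 \le (2+\epsilon)\|\mathbf{h}\|$, which is precisely the claimed upper bound.

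For the left-hand inequality I would compute the squared norm exactly rather than bound term by term. Using the rank-one identity $\langle\brho\brho^H,\brho_t\brho_t^H\rangle_F = |\langle\brho,\brho_t\rangle|^2$ from \eqref{eq:LiftDst},
\[ \|\brho\brho^H - \brho_t\brho_t^H\|_F^2 = \|\brho\|^4 + 1 - 2|\langle\brho,\brho_t\rangle|^2, \]
and substituting $\|\brho\|^2 = 1 + 2\gamma + \|\mathbf{h}\|^2$ together with $|\langle\brho,\brho_t\rangle| = 1 + \gamma$ (both consequences of the reality of $\gamma$ and \eqref{eq:distmetric}), this collapses to the quadratic $2\|\mathbf{h}\|^2 + 2\gamma^2 + 4\gamma\|\mathbf{h}\|^2 + \|\mathbf{h}\|^4$ in $\gamma$. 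Its coefficient of $\gamma^2$ is positive and its minimizer is $\gamma = -\|\mathbf{h}\|^2$, which lies in the admissible interval $[-\|\mathbf{h}\|,\|\mathbf{h}\|]$ because $\|\mathbf{h}\| \le \epsilon < 1$; the minimum value is $\|\mathbf{h}\|^2(2 - \|\mathbf{h}\|^2) \ge (2-\epsilon^2)\|\mathbf{h}\|^2$. Finally, since $(2 - \epsilon^2) - (1-\epsilon)(2-\epsilon) = \epsilon(3 - 2\epsilon) \ge 0$ for $\epsilon < 1$, taking square roots yields the lower bound $\sqrt{(1-\epsilon)(2-\epsilon)}\,\|\mathbf{h}\|$.

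I expect the main obstacle to be the sign-indefiniteness of the cross term $4\gamma\|\mathbf{h}\|^2$. Because $\brho$ may be strictly shorter than $\hat{\brho}_t$ — for instance $\brho = (1-s)\hat{\brho}_t$ makes $\gamma = -s < 0$ — one cannot discard terms to extract the sharp constant; crude bounds (e.g.\ dropping $2\gamma^2$) only yield roughly $2(1-2\epsilon)\|\mathbf{h}\|^2$, which is too weak to reach $(1-\epsilon)(2-\epsilon)$. Retaining the full quadratic and minimizing it exactly in $\gamma$ is what preserves enough slack to dominate the target constant. The only side conditions requiring care are confirming that $\gamma$ is genuinely real, so that the cross terms combine into real scalars, and that the unconstrained minimizer $-\|\mathbf{h}\|^2$ remains feasible.
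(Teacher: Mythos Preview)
Your argument is correct. The upper bound is identical to the paper's: the same decomposition $\brho\brho^H-\brho_t\brho_t^H=\hat{\brho}_t\mathbf{h}^H+\mathbf{h}\hat{\brho}_t^H+\mathbf{h}\mathbf{h}^H$ followed by the triangle inequality and the rank-one identity $\|\mathbf{a}\mathbf{b}^H\|_F=\|\mathbf{a}\|\,\|\mathbf{b}\|$.

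For the lower bound you take a genuinely different route. The paper factors
\[
\|\brho\|^4+1-2|\langle\brho,\brho_t\rangle|^2
=\bigl(\|\brho\|^2+|\langle\brho,\brho_t\rangle|\bigr)\bigl(\|\brho\|^2-|\langle\brho,\brho_t\rangle|\bigr)
+\bigl(1+|\langle\brho,\brho_t\rangle|\bigr)\bigl(1-|\langle\brho,\brho_t\rangle|\bigr),
\]
observes that the two ``minus'' factors sum to $\mathrm{dist}^2(\brho,\brho_t)$, and pulls out the minimum of the two ``plus'' factors, which is then bounded below by $(1-\epsilon)^2+(1-\epsilon)$ via $|\langle\brho,\brho_t\rangle|\ge 1-\epsilon$ and $\|\brho\|\ge 1-\epsilon$. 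You instead parametrize by the single real scalar $\gamma=\langle\hat{\brho}_t,\mathbf{h}\rangle$, reduce the squared Frobenius norm to the explicit quadratic $2\|\mathbf{h}\|^2+2\gamma^2+4\gamma\|\mathbf{h}\|^2+\|\mathbf{h}\|^4$, and minimize it exactly in $\gamma$. This yields the sharper intermediate bound $\|\tilde{\mathbf{e}}\|_F^2\ge (2-\epsilon^2)\|\mathbf{h}\|^2$, which you then relax to the paper's constant via $(2-\epsilon^2)\ge(1-\epsilon)(2-\epsilon)$. Your approach is more self-contained and actually produces a tighter constant along the way; the paper's factoring trick avoids the explicit substitution but requires a slightly delicate case check to justify pulling out the minimum when one of the ``minus'' factors may be negative. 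Both arguments implicitly use $\epsilon<1$, which holds throughout the regime $\delta_1\le 0.214$ considered in Theorem~\ref{thm:Theorem1}.
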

\begin{proof}
See Appendix \ref{sec:App5}.
\end{proof}

\begin{lemma}\label{lem:Lemma6}
Let $\brho_t$ be the ground truth signal with $\| \brho_t \| = 1$, and let the linear map $\mathcal{F}$ satisfy RIP$_+$ with RIC-$\delta_1$. Then, for $\delta_1 \leq 0.214$ and any $\brho \in E(\epsilon)$, we have $\delta_2 < 1$ such that
\begin{equation}\label{eq:RIP2}
(1 - \delta_2 ) \| \brho \brho^H - \brho_t \brho_t^H \|_F^2 \leq \| \mathcal{F}(\brho \brho^H - \brho_t \brho_t^H) \|^2  \leq (1 + \delta_2) \| \brho \brho^H - \brho_t \brho_t^H \|_F^2,
\end{equation}
where $ \delta_2 = \frac{\sqrt{2}(2+\epsilon)}{\sqrt{(1-\epsilon)(2-\epsilon})} \delta_1$.
\end{lemma}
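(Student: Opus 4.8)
The plan is to set $\tilde{\mathbf{e}} := \brho\brho^H - \brho_t\brho_t^H$, write $\hat{\brho}_t = e^{\mathrm{i}\Phi(\brho)}\brho_t$ for the closest global solution and $\mathbf{h} := \brho - \hat{\brho}_t$, and compare $\|\mathcal{F}(\tilde{\mathbf{e}})\|^2$ with $\|\tilde{\mathbf{e}}\|_F^2$ through the perturbation operator $\bdelta$ of Lemma \ref{cor:Corr1}. The first observation is structural: since $\brho_t\brho_t^H = \hat{\brho}_t\hat{\brho}_t^H$, we have $\tilde{\mathbf{e}} = \hat{\brho}_t\mathbf{h}^H + \mathbf{h}\hat{\brho}_t^H + \mathbf{h}\mathbf{h}^H$, so $\tilde{\mathbf{e}}$ is a Hermitian matrix of rank at most two; in particular $\|\tilde{\mathbf{e}}\|_* \le \sqrt{2}\,\|\tilde{\mathbf{e}}\|_F$. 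Because $\brho\brho^H$ and $\brho_t\brho_t^H$ both lie in $\mathcal{X}$, Lemma \ref{cor:Corr1} and linearity give $\mathcal{F}^H\mathcal{F}(\tilde{\mathbf{e}}) = \tilde{\mathbf{e}} + \bdelta(\tilde{\mathbf{e}})$, and taking the Frobenius inner product with $\tilde{\mathbf{e}}$ produces the exact identity
\begin{equation}
\|\mathcal{F}(\tilde{\mathbf{e}})\|^2 = \|\tilde{\mathbf{e}}\|_F^2 + \mathrm{Re}\langle \tilde{\mathbf{e}}, \bdelta(\tilde{\mathbf{e}}) \rangle_F .
\end{equation}
Thus the entire lemma reduces to bounding the single perturbation term $|\langle \tilde{\mathbf{e}}, \bdelta(\tilde{\mathbf{e}}) \rangle_F|$ by $\delta_2 \|\tilde{\mathbf{e}}\|_F^2$, which immediately yields both the upper and lower inequalities of \eqref{eq:RIP2}.

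The decisive estimate is a spectral-norm control of the form $\|\bdelta(\tilde{\mathbf{e}})\| \le \delta_1 \|\tilde{\mathbf{e}}\|_F$. This is the only place where the rank-1, PSD-only nature of the hypothesis must be handled with care, since Lemma \ref{cor:Corr1} governs $\bdelta$ only on $\mathcal{X}$, whereas $\tilde{\mathbf{e}}$ is \emph{indefinite}; a naive split of $\tilde{\mathbf{e}}$ into PSD pieces of unit scale would lose the smallness of $\bdelta(\tilde{\mathbf{e}})$ as $\brho \to \hat{\brho}_t$. To exploit the cancellation, I would use self-adjointness of $\mathcal{F}^H\mathcal{F} - \mathbf{I}$ on $(\mathbb{C}^{N\times N}, \langle\cdot,\cdot\rangle_F)$ to rewrite, for any unit vector $\mathbf{a}$, $\mathbf{a}^H \bdelta(\tilde{\mathbf{e}}) \mathbf{a} = \langle \bdelta(\mathbf{a}\mathbf{a}^H), \tilde{\mathbf{e}} \rangle_F$, where now $\mathbf{a}\mathbf{a}^H \in \mathcal{X}$; applying Lemma \ref{cor:Corr1} to $\mathbf{a}\mathbf{a}^H$ together with nuclear/spectral duality and the rank-two bound $\|\tilde{\mathbf{e}}\|_* \le \sqrt{2}\|\tilde{\mathbf{e}}\|_F$ controls this quadratic form uniformly in $\mathbf{a}$, hence bounds $\|\bdelta(\tilde{\mathbf{e}})\|$.

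With this in hand, I would assemble the constant by combining the duality bound $|\langle \tilde{\mathbf{e}}, \bdelta(\tilde{\mathbf{e}}) \rangle_F| \le \|\tilde{\mathbf{e}}\|_*\,\|\bdelta(\tilde{\mathbf{e}})\| \le \sqrt{2}\,\|\tilde{\mathbf{e}}\|_F\,\|\bdelta(\tilde{\mathbf{e}})\|$ with the two-sided estimates of Lemma \ref{lem:Lemma5}. Feeding the upper bound $\|\tilde{\mathbf{e}}\|_F \le (2+\epsilon)\|\mathbf{h}\|$ into the spectral estimate $\|\bdelta(\tilde{\mathbf{e}})\| \le \delta_1\|\tilde{\mathbf{e}}\|_F$, and the lower bound $\|\tilde{\mathbf{e}}\|_F \ge \sqrt{(1-\epsilon)(2-\epsilon)}\,\|\mathbf{h}\|$ into the denominator when normalizing by $\|\tilde{\mathbf{e}}\|_F^2$, the factors $\|\mathbf{h}\|$ cancel and one lands exactly on $\delta_2 = \frac{\sqrt{2}(2+\epsilon)}{\sqrt{(1-\epsilon)(2-\epsilon)}}\delta_1$. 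It then remains to verify $\delta_2 < 1$ at the threshold $\delta_1 = 0.214$, which is a finite numerical check after substituting the closed-form $\epsilon(\delta_1) = \big[(2+\delta_1)\big(1-\sqrt{1-\tfrac{\delta_1}{1-\delta_1}}\big) + \tfrac{\delta_1^2}{8}\big]^{1/2}$ from Theorem \ref{thm:Theorem1}.

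The main obstacle is precisely the spectral bound on $\bdelta(\tilde{\mathbf{e}})$: the straightforward decomposition of $\tilde{\mathbf{e}}$ into PSD rank-1 summands gives a constant of order $\sqrt{2}\,\delta_1\|\tilde{\mathbf{e}}\|_F$, and keeping the constant at $\delta_1\|\tilde{\mathbf{e}}\|_F$ — which is what makes the stated $\delta_2$ come out with a single factor of $\sqrt{2}$ — is what forces the self-adjointness/duality argument rather than a term-by-term triangle inequality. Everything else (the exact identity, the rank-two nuclear bound, and the conversions via Lemma \ref{lem:Lemma5}) is routine once this estimate is secured.
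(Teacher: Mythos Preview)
Your overall architecture is right and mirrors the paper's: write $\|\mathcal{F}(\tilde{\mathbf{e}})\|^2=\|\tilde{\mathbf{e}}\|_F^2+\langle\tilde{\mathbf{e}},\bdelta(\tilde{\mathbf{e}})\rangle_F$ and bound the perturbation term. The paper, however, does \emph{not} try to bound $\|\bdelta(\tilde{\mathbf{e}})\|$. It instead decomposes $\tilde{\mathbf{e}}=\brho_e\brho_e^H+\brho_e\hat{\brho}_t^H+\hat{\brho}_t\brho_e^H$, eigendecomposes the Hermitian cross term as $\lambda_1\v_1\v_1^H+\lambda_2\v_2\v_2^H$, and applies Lemma~\ref{cor:Corr1} termwise on both sides of the inner product to obtain $|\langle\tilde{\mathbf{e}},\bdelta(\tilde{\mathbf{e}})\rangle_F|\le(\sum_i|\sigma_i|)\,\delta_1(\|\brho_e\|^2+|\lambda_1|+|\lambda_2|)\le\sqrt{2}\,\|\tilde{\mathbf{e}}\|_F\cdot\delta_1(2+\epsilon)\|\brho_e\|$, and only then invokes Lemma~\ref{lem:Lemma5} to trade $\|\brho_e\|$ for $\|\tilde{\mathbf{e}}\|_F$, producing the stated $\delta_2$.

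Your route has a genuine gap at the step ``hence bounds $\|\bdelta(\tilde{\mathbf{e}})\|$.'' In the interferometric setting $i\neq j$ the matrix $\bdelta(\tilde{\mathbf{e}})=(\mathcal{F}^H\mathcal{F}-\mathbf{I})(\tilde{\mathbf{e}})$ is \emph{not} Hermitian (the roles of $\mathbf{L}_i^m$ and $\mathbf{L}_j^m$ are asymmetric), so controlling $\sup_{\|\mathbf{a}\|=1}|\mathbf{a}^H\bdelta(\tilde{\mathbf{e}})\mathbf{a}|$ only bounds the numerical radius, not the spectral norm; moreover your own argument gives that quadratic form $\le\delta_1\|\tilde{\mathbf{e}}\|_*\le\sqrt{2}\,\delta_1\|\tilde{\mathbf{e}}\|_F$, not $\delta_1\|\tilde{\mathbf{e}}\|_F$. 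In fact the bound $\|\bdelta(\tilde{\mathbf{e}})\|\le\delta_1\|\tilde{\mathbf{e}}\|_F$ can fail under the hypotheses (take two rank-1 PSD pieces whose $\bdelta$-images align).

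That said, your self-adjointness idea is salvageable and actually \emph{improves} on the paper. Skip the spectral norm entirely: expand the Hermitian $\tilde{\mathbf{e}}=\sum_{i=1}^2\sigma_i\mathbf{u}_i\mathbf{u}_i^H$ and write $\langle\tilde{\mathbf{e}},\bdelta(\tilde{\mathbf{e}})\rangle_F=\sum_i\sigma_i\,\mathbf{u}_i^H\bdelta(\tilde{\mathbf{e}})\mathbf{u}_i=\sum_i\sigma_i\langle\bdelta(\mathbf{u}_i\mathbf{u}_i^H),\tilde{\mathbf{e}}\rangle_F$, then bound each term by $\|\bdelta(\mathbf{u}_i\mathbf{u}_i^H)\|\,\|\tilde{\mathbf{e}}\|_*\le\delta_1\|\tilde{\mathbf{e}}\|_*$. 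This yields $|\langle\tilde{\mathbf{e}},\bdelta(\tilde{\mathbf{e}})\rangle_F|\le\delta_1\|\tilde{\mathbf{e}}\|_*^2\le 2\delta_1\|\tilde{\mathbf{e}}\|_F^2$, i.e.\ an $\epsilon$-independent $\delta_2=2\delta_1$ that is at least as tight as the paper's constant (they coincide at $\epsilon=0$) and makes the detour through Lemma~\ref{lem:Lemma5} unnecessary.
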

We refer to \eqref{eq:RIP2} as the {local RIP-2 condition} in the lifted domain with RIC-$\delta_2$ for the mapping $\mathcal{F}$.
The two lemmas culminate into the local Lipschitz continuity of $\nabla \mathcal{J}$. 
\begin{proof}
See Appendix \ref{sec:App6}.
\end{proof}

\begin{lemma}\label{lem:Lemma7}
In the setup of Lemmas \ref{lem:Lemma5} and \ref{lem:Lemma6}, for any $\brho \in E(\epsilon)$, the objective function $\mathcal{J}$ in \eqref{eq:objfun_repr} is Lipschitz differentiable with
\begin{equation}
\| \nabla \mathcal{J}(\brho) \| \leq c \cdot \mathrm{dist} (\brho, \brho_t) 
\end{equation}
where $ c = (1+\epsilon)(2+\epsilon) (1+\delta_1)$ is the Lipschitz constant. Furthermore, to establish the regularity condition for $\mathcal{J}$, it is sufficient to show that
\begin{equation}
\label{eq:RegCond_new}
\mathrm{Re} \left( \langle \nabla \mathcal{J}(\brho),  (\brho - {\brho}_t e^{\mathrm{i} \Phi(\brho)}) \rangle \right) \geq (\frac{1}{\alpha} +  \frac{c^2}{\beta}) \mathrm{dist}^2 (\brho, {\brho}_t)
\end{equation}
for any $\brho \in E(\epsilon)$. 
\end{lemma}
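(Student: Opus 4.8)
The plan is to prove the two assertions in order: first the local Lipschitz bound $\|\nabla\mathcal{J}(\brho)\| \leq c\cdot\mathrm{dist}(\brho,\brho_t)$, and then obtain the reduced sufficient condition \eqref{eq:RegCond_new} as an essentially immediate corollary. The starting point for the Lipschitz bound is the lifted representation of the gradient from \eqref{eq:GradRep}--\eqref{eq:updt_term}, namely
$$
\nabla\mathcal{J}(\brho) = \mathcal{Y}(\brho)\brho = (\tilde{\brho}-\tilde{\brho}_t)\brho + \mathcal{P}_S(\bdelta(\tilde{\mathbf{e}}))\brho,
$$
where $\tilde{\mathbf{e}} = \tilde{\brho}-\tilde{\brho}_t$. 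Applying the triangle inequality and the submultiplicativity of the spectral norm against matrix--vector products gives
$$
\|\nabla\mathcal{J}(\brho)\| \leq \big(\|\tilde{\brho}-\tilde{\brho}_t\| + \|\mathcal{P}_S(\bdelta(\tilde{\mathbf{e}}))\|\big)\,\|\brho\|.
$$

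I would then control each factor separately, each contributing one factor of $c$. Since $\brho\in E(\epsilon)$ and $\|\brho_t\|=1$, the triangle inequality yields $\|\brho\| \leq 1+\epsilon$, the source of the $(1+\epsilon)$ factor. For the bracketed term, I would use that $\mathcal{P}_S$ is non-expansive in the spectral norm (as $\mathcal{P}_S(A)=\tfrac12(A+A^H)$) together with Lemma \ref{cor:Corr1}, which supplies $\|\bdelta(\tilde{\mathbf{e}})\| \leq \delta_1\|\tilde{\mathbf{e}}\|_F$; combined with $\|\tilde{\brho}-\tilde{\brho}_t\| \leq \|\tilde{\mathbf{e}}\|_F$ this collapses the bracket to $(1+\delta_1)\|\tilde{\mathbf{e}}\|_F$, producing the $(1+\delta_1)$ factor. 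Finally, the upper bound of Lemma \ref{lem:Lemma5}, $\|\tilde{\mathbf{e}}\|_F = \|\brho\brho^H-\brho_t\brho_t^H\|_F \leq (2+\epsilon)\,\mathrm{dist}(\brho,\brho_t)$, converts the Frobenius norm into the distance metric and contributes the $(2+\epsilon)$ factor. Chaining the three estimates gives $\|\nabla\mathcal{J}(\brho)\| \leq (1+\epsilon)(2+\epsilon)(1+\delta_1)\,\mathrm{dist}(\brho,\brho_t) = c\cdot\mathrm{dist}(\brho,\brho_t)$.

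The second assertion then follows directly. Squaring the Lipschitz bound gives $\|\nabla\mathcal{J}(\brho)\|^2 \leq c^2\,\mathrm{dist}^2(\brho,\brho_t)$, hence $\tfrac{1}{\beta}\|\nabla\mathcal{J}(\brho)\|^2 \leq \tfrac{c^2}{\beta}\,\mathrm{dist}^2(\brho,\brho_t)$. Substituting this upper bound for the gradient term in the regularity condition \eqref{eq:RegularityCond} shows that whenever the single inner-product lower bound \eqref{eq:RegCond_new} holds, the full regularity condition holds as well. This step effectively recasts the regularity condition as a restricted strong convexity statement, absorbing the gradient-norm penalty into the distance penalty and leaving only a lower bound on $\mathrm{Re}(\langle\nabla\mathcal{J}(\brho),\,\brho-\brho_t e^{\mathrm{i}\Phi(\brho)}\rangle)$ to be verified.

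The main obstacle I anticipate lies in the sub-bound $\|\bdelta(\tilde{\mathbf{e}})\| \leq \delta_1\|\tilde{\mathbf{e}}\|_F$. The difficulty is that $\tilde{\mathbf{e}} = \tilde{\brho}-\tilde{\brho}_t$ is a rank-2 Hermitian (and in general indefinite) matrix, whereas Lemma \ref{cor:Corr1} bounds $\bdelta$ only on rank-1 PSD inputs, with a spectral norm on the left and a spectral norm on the right. Passing from that statement to a bound in terms of $\|\tilde{\mathbf{e}}\|_F$ on the rank-2 difference requires either extending the perturbation estimate through the linearity of $\bdelta$ on the pair $\tilde{\brho},\tilde{\brho}_t$, or invoking the local RIP-2 structure of Lemma \ref{lem:Lemma6} to control $\bdelta(\tilde{\mathbf{e}}) = \mathcal{F}^H\mathcal{F}(\tilde{\mathbf{e}}) - \tilde{\mathbf{e}}$ on exactly this set of admissible differences; care must be taken in moving between spectral and Frobenius norms so that the clean constant $(1+\delta_1)$ is retained rather than a looser bound.
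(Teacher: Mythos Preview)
Your overall architecture matches the paper's: split $\nabla\mathcal{J}(\brho)=(\tilde{\brho}-\tilde{\brho}_t)\brho+\mathcal{P}_S(\bdelta(\tilde{\mathbf{e}}))\brho$, pull out $\|\brho\|\leq 1+\epsilon$, bound the two matrix pieces, and then observe that the Lipschitz bound immediately reduces the regularity condition to \eqref{eq:RegCond_new}. The second assertion is exactly as you describe and coincides with the paper's argument.

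The only real gap is the intermediate target $\|\bdelta(\tilde{\mathbf{e}})\|\leq\delta_1\|\tilde{\mathbf{e}}\|_F$. Lemma~\ref{cor:Corr1} controls $\bdelta$ only on rank-1 PSD inputs, and neither of the two fixes you float quite lands with the sharp constant. Splitting $\bdelta(\tilde{\mathbf{e}})=\bdelta(\tilde{\brho})-\bdelta(\tilde{\brho}_t)$ and applying Lemma~\ref{cor:Corr1} to each piece gives $\|\bdelta(\tilde{\mathbf{e}})\|\leq\delta_1(\|\brho\|^2+\|\brho_t\|^2)$, a bound that does \emph{not} vanish as $\brho\to\hat{\brho}_t$ and hence cannot yield a Lipschitz estimate in $\mathrm{dist}(\brho,\brho_t)$. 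Appealing to the local RIP-2 conclusion of Lemma~\ref{lem:Lemma6} would bring in $\delta_2>\delta_1$ and spoil the clean $(1+\delta_1)$ factor.

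The paper never tries to reach $\delta_1\|\tilde{\mathbf{e}}\|_F$. Instead it reuses the decomposition from the proof of Lemma~\ref{lem:Lemma5}, $\tilde{\mathbf{e}}=\brho_e\brho_e^H+\brho_e\hat{\brho}_t^H+\hat{\brho}_t\brho_e^H$, eigendecomposes the symmetric rank-2 cross-term as $\sum_{i=1}^2\lambda_i\v_i\v_i^H$, and applies Lemma~\ref{cor:Corr1} to each rank-1 PSD piece (exactly the device already used inside the proof of Lemma~\ref{lem:Lemma6}). This yields directly
\[
\|\bdelta(\tilde{\mathbf{e}})\|\leq\delta_1\big(\|\brho_e\|^2+|\lambda_1|+|\lambda_2|\big)\leq\delta_1\big(\|\brho_e\|^2+2\|\brho_e\|\big)\leq\delta_1(2+\epsilon)\,\mathrm{dist}(\brho,\brho_t),
\]
which is the \emph{same} $(2+\epsilon)\,\mathrm{dist}(\brho,\brho_t)$ bound obtained for $\|\tilde{\brho}-\tilde{\brho}_t\|$, so $(1+\delta_1)$ factors out cleanly and $c=(1+\epsilon)(2+\epsilon)(1+\delta_1)$ drops out. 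In short: do not route through $\|\tilde{\mathbf{e}}\|_F$ for the $\bdelta$ term; go straight from the $\brho_e$-based rank-1 decomposition to $\delta_1(2+\epsilon)\,\mathrm{dist}(\brho,\brho_t)$.
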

\begin{proof}
See Appendix \ref{sec:App7}.
\end{proof}

We finally utilize the following lemma to obtain an alternative form of the restricted strong convexity condition \cite{zhang2015restricted}.   
\begin{lemma}\label{lem:Lemma8}
For the objective function $\mathcal{J}$ in \eqref{eq:objfun_repr}, the condition in \eqref{eq:RegCond_new} is satisfied if
\begin{equation}\label{eq:RegConditionFinal}
\mathcal{J}(\brho) \geq \frac{\eta}{2}  \mathrm{dist}^2 (\brho, {\brho}_t)
\end{equation}
where $\eta = \frac{1}{\alpha} +  \frac{c^2}{\beta}$.
\end{lemma}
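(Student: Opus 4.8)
The plan is to reduce the target restricted secant inequality \eqref{eq:RegCond_new} to a bound involving only the \emph{value} of $\mathcal{J}$, by exploiting that in the lifted domain $\mathcal{J}$ is a convex quadratic. Writing $\tilde\brho=\brho\brho^H$, $\tilde\brho_t=\brho_t\brho_t^H$ and $\hat\brho_t=e^{\mathrm{i}\Phi(\brho)}\brho_t$, I would start from the lifted gradient representation \eqref{eq:GradRep}--\eqref{eq:repsS6}, namely $\nabla\mathcal{J}(\brho)=\frac{1}{M}\mathcal{P}_S\big(\mathcal{F}^H\mathcal{F}(\tilde\brho-\tilde\brho_t)\big)\brho$, together with $\mathcal{J}(\brho)=\frac{1}{2M}\|\mathcal{F}(\tilde\brho-\tilde\brho_t)\|^2$. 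The last expression is a quadratic form in the lifted variable whose global minimizer is $\tilde\brho_t$ with value $0$, so along the lifted displacement the secant is exact: $\frac{1}{M}\langle\mathcal{F}(\tilde\brho-\tilde\brho_t),\mathcal{F}(\tilde\brho-\tilde\brho_t)\rangle=2\mathcal{J}(\brho)$.

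The first concrete step is to expand $\mathrm{Re}\langle\nabla\mathcal{J}(\brho),\brho-\hat\brho_t\rangle$ in Frobenius form. Setting $A:=\mathcal{P}_S(\mathcal{F}^H\mathcal{F}(\tilde\brho-\tilde\brho_t))$, which is Hermitian, one rewrites $\mathrm{Re}\langle\nabla\mathcal{J}(\brho),\brho-\hat\brho_t\rangle=\frac{1}{M}\mathrm{Re}\langle A,\,\tfrac12(\brho(\brho-\hat\brho_t)^H+(\brho-\hat\brho_t)\brho^H)\rangle_F$, where the projection $\mathcal{P}_S$ may be dropped against the Hermitian second argument. Using $\tilde\brho_t=\hat\brho_t\hat\brho_t^H$, the second argument decomposes as $\tfrac12\big((\tilde\brho-\tilde\brho_t)+(\brho-\hat\brho_t)(\brho-\hat\brho_t)^H\big)$, i.e. the exact lifted displacement plus a rank-one positive semidefinite correction. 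This exhibits the signal-domain secant as the lifted secant $2\mathcal{J}(\brho)$ modified by the correction pairing $\tfrac1M\mathrm{Re}\langle\mathcal{F}(\tilde\brho-\tilde\brho_t),\mathcal{F}((\brho-\hat\brho_t)(\brho-\hat\brho_t)^H)\rangle$.

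The key step is then to establish the restricted strong convexity bridge $\mathrm{Re}\langle\nabla\mathcal{J}(\brho),\brho-\hat\brho_t\rangle\ge 2\mathcal{J}(\brho)$ for every $\brho\in E(\epsilon)$, in the spirit of \cite{zhang2015restricted}. Granting this, the conclusion is immediate: the hypothesis \eqref{eq:RegConditionFinal} gives $\mathrm{Re}\langle\nabla\mathcal{J}(\brho),\brho-\hat\brho_t\rangle\ge 2\mathcal{J}(\brho)\ge 2\cdot\tfrac{\eta}{2}\,\mathrm{dist}^2(\brho,\brho_t)=\eta\,\mathrm{dist}^2(\brho,\brho_t)$, which is precisely \eqref{eq:RegCond_new} with $\eta=\tfrac1\alpha+\tfrac{c^2}{\beta}$, and by Lemma \ref{lem:Lemma7} this already suffices for the full regularity condition.

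The main obstacle is the bridge inequality itself, because the signal-domain secant is \emph{not} literally the lifted secant $2\mathcal{J}(\brho)$: it differs by the rank-one correction pairing identified above, which is cubic in $\brho-\hat\brho_t$ and hence of order $\epsilon^{3}$ on $E(\epsilon)$. I therefore expect the delicate part to be controlling the sign and magnitude of this correction, for which I would combine the local RIP-2 estimate of Lemma \ref{lem:Lemma6} with the norm equivalence of Lemma \ref{lem:Lemma5} to bound $\|\mathcal{F}(\tilde\brho-\tilde\brho_t)\|$ and $\|\mathcal{F}((\brho-\hat\brho_t)(\brho-\hat\brho_t)^H)\|$ in terms of $\mathrm{dist}(\brho,\brho_t)$, and then absorb the resulting higher-order term into the slack between the hypothesis constant $\tfrac{\eta}{2}$ and the curvature $\tfrac{h(\delta_1)}{2}$ that the lifted quadratic actually supplies. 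Making this absorption quantitative, rather than simply identifying the two secants, is where the real work of the lemma lies.
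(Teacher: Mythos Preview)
Your route is genuinely different from the paper's, but it contains a gap that prevents it from proving the lemma as stated.

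\textbf{What the paper does.} The paper never touches the lifted decomposition or any correction term. It introduces the auxiliary function $g(\brho)=\mathcal{J}(\brho)-\tfrac{\eta}{2}\|\brho\|^2$ and observes that \eqref{eq:RegCond_new} is exactly the monotone-gradient inequality $\mathrm{Re}\langle\nabla g(\brho)-\nabla g(\hat\brho_t),\brho-\hat\brho_t\rangle\ge 0$, while the hypothesis \eqref{eq:RegConditionFinal} is exactly the first-order tangent inequality $g(\brho)\ge g(\hat\brho_t)+\mathrm{Re}\big(\nabla g(\hat\brho_t)^H(\brho-\hat\brho_t)\big)$ (this reduction uses only $\nabla\mathcal{J}(\hat\brho_t)=0$ and $\mathcal{J}(\hat\brho_t)=0$). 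The lemma is then argued as the passage between two first-order characterizations of local convexity of $g$ on the convex $\epsilon$-ball. No RIP, no Lemma~\ref{lem:Lemma5} or~\ref{lem:Lemma6}, no correction bookkeeping.

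\textbf{Where your argument breaks.} Your decomposition is correct, but the arithmetic is off by a factor of two: pairing $A=\mathcal{P}_S\big(\mathcal{F}^H\mathcal{F}(\tilde\brho-\tilde\brho_t)\big)$ against $\tfrac12(\tilde\brho-\tilde\brho_t)$ gives $\tfrac{1}{2M}\|\mathcal{F}(\tilde\brho-\tilde\brho_t)\|^2=\mathcal{J}(\brho)$, not $2\mathcal{J}(\brho)$. The exact identity is
\[
\mathrm{Re}\langle\nabla\mathcal{J}(\brho),\brho-\hat\brho_t\rangle=\mathcal{J}(\brho)+\tfrac{1}{2M}\,\mathrm{Re}\big\langle\mathcal{F}(\tilde\brho-\tilde\brho_t),\mathcal{F}\big((\brho-\hat\brho_t)(\brho-\hat\brho_t)^H\big)\big\rangle.
\]
Hence your ``bridge'' $\mathrm{Re}\langle\nabla\mathcal{J}(\brho),\brho-\hat\brho_t\rangle\ge 2\mathcal{J}(\brho)$ would require the correction term to be at least $\mathcal{J}(\brho)$, which is false in general (the correction is $O(\epsilon^3)$ while $\mathcal{J}$ is $O(\epsilon^2)$, and its sign is not controlled). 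With the correct factor, the hypothesis \eqref{eq:RegConditionFinal} only delivers $\tfrac{\eta}{2}\,\mathrm{dist}^2$, a factor of two short of \eqref{eq:RegCond_new}. Your proposed fix---absorbing the shortfall into the slack between $\tfrac{\eta}{2}$ and $\tfrac{h(\delta_1)}{2}$ via Lemmas~\ref{lem:Lemma5}--\ref{lem:Lemma6}---no longer proves the lemma as an implication from \eqref{eq:RegConditionFinal}; it bypasses the hypothesis and reproves the regularity condition directly, which is precisely what the lemma is meant to isolate and avoid.
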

\begin{proof}
See Appendix \ref{sec:App8}.
\end{proof}

Writing the objective function explicitly in terms of the lifted terms, and applying the lower bound from the local RIP-2 condition of the lifted forward model, we can express the regularity condition simply as
$$
\frac{1}{2} \| \mathcal{F} [\brho \brho^H - \brho_t \brho^H_t] \|^2 \geq \frac{(1-\delta_2)}{2} \| \brho \brho^H - \brho_t \brho^H_t \|_F^2 \geq \frac{\eta}{2}  \mathrm{dist}^2 (\brho, {\brho}_t),
$$
where $\delta_2$ is as defined in Lemma \ref{lem:Lemma6}. 
From Lemma \ref{lem:Lemma5}, the regularity condition is then satisfied by identifying $\alpha, \beta$ with $\alpha \beta > 4$ such that
$$
(1 - \delta_2) \| \brho \brho^H - \brho_t \brho^H_t \|_F^2 \geq {(1-\delta_2)}  (1- \epsilon) (2- \epsilon ) \text{dist}^2 (\brho, {\brho}_t) \geq  ( \frac{1}{\alpha} +  \frac{c^2}{\beta}) \mathrm{dist}^2 (\brho, {\brho}_t),
$$
where, from Lemma \ref{lem:Lemma7}, $c = (1+\epsilon)(2+\epsilon) (1+\delta_1)$. 

\subsection{Proof of Theorem \ref{thm:Theorem2}}\label{sec:Proof_Thm2}

\begin{lemma}{Expectation of Spectral Matrix.}\label{lem:Lemma9}
Let measurement vectors $\mathbf{L}_{i}^m, \mathbf{L}_{j}^m$ be statistically independent and distributed according to the complex Gaussian model as $\mathbf{L}_{i}, \mathbf{L}_{j}^m \sim \mathcal{N}(0, \frac{1}{2} \mathbf{I} ) + \mathrm{i} \mathcal{N}(0, \frac{1}{2} \mathbf{I})$. 
Let $\brho_t$ be independent of the measurement vectors, and $\mathbf{Y}$ denote the backprojection estimate of the lifted signal generated by the spectral method given as 
$$
\mathbf{Y}= \frac{1}{M} \mathcal{F}^H \mathcal{F} ( \brho_t \brho_t^H ),
$$
where $\mathcal{F}$ is the lifted forward map in \eqref{eq:LiftedLinMod}. Then,
$$
\mathbb{E}[ \mathbf{Y} ] = \brho_t \brho_t^H. 
$$
\end{lemma}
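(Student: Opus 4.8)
The plan is to reduce the claim to a single second-moment computation and then exploit the independence of $\mathbf{L}_i^m$ and $\mathbf{L}_j^m$ to factor the expectation. First I would expand $\mathbf{Y} = \frac{1}{M}\mathcal{F}^H\mathcal{F}(\brho_t\brho_t^H)$ using the backprojection definition \eqref{eq:BackProject} together with $\mathbf{d} = \mathcal{F}(\brho_t\brho_t^H)$, recovering the explicit form already recorded in \eqref{eq:CrossCorrSpectral},
\begin{equation}
\mathbf{Y} = \frac{1}{M}\sum_{m=1}^M d_{ij}^m\, \mathbf{L}_i^m (\mathbf{L}_j^m)^H, \qquad d_{ij}^m = (\mathbf{L}_i^m)^H \brho_t \brho_t^H \mathbf{L}_j^m.
\end{equation}
By linearity of expectation and the fact that the pairs $(\mathbf{L}_i^m, \mathbf{L}_j^m)$ are identically distributed across $m$, it suffices to evaluate $\mathbb{E}[d_{ij}^m\, \mathbf{L}_i^m(\mathbf{L}_j^m)^H]$ for a single index $m$; the prefactor $\frac{1}{M}$ and the sum of $M$ equal contributions then reproduce the same value.

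The key step is to write the scalar measurement as a product of two factors, $d_{ij}^m = \big((\mathbf{L}_i^m)^H\brho_t\big)\big(\brho_t^H \mathbf{L}_j^m\big)$, and regroup so that each factor depends on only one of the measurement vectors:
\begin{equation}
d_{ij}^m\, \mathbf{L}_i^m(\mathbf{L}_j^m)^H = \big[\mathbf{L}_i^m (\mathbf{L}_i^m)^H \brho_t\big]\big[\brho_t^H \mathbf{L}_j^m (\mathbf{L}_j^m)^H\big].
\end{equation}
Since $\mathbf{L}_i^m$ and $\mathbf{L}_j^m$ are statistically independent (and both independent of $\brho_t$), the first bracket is a function of $\mathbf{L}_i^m$ alone and the second of $\mathbf{L}_j^m$ alone, so the expectation of this outer product factors into the product of expectations,
\begin{equation}
\mathbb{E}\big[d_{ij}^m\, \mathbf{L}_i^m(\mathbf{L}_j^m)^H\big] = \Big(\mathbb{E}\big[\mathbf{L}_i^m (\mathbf{L}_i^m)^H\big]\,\brho_t\Big)\Big(\brho_t^H\, \mathbb{E}\big[\mathbf{L}_j^m (\mathbf{L}_j^m)^H\big]\Big).
\end{equation}

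To finish, I would invoke the first/second-moment properties of the complex Gaussian model: with entries distributed as $\mathcal{N}(0,\tfrac12)+\mathrm{i}\,\mathcal{N}(0,\tfrac12)$, each coordinate satisfies $\mathbb{E}[|L_k|^2]=1$ and distinct coordinates are uncorrelated and zero-mean, so $\mathbb{E}[\mathbf{L}_i^m (\mathbf{L}_i^m)^H] = \mathbb{E}[\mathbf{L}_j^m (\mathbf{L}_j^m)^H] = \mathbf{I}$. Substituting gives $\mathbb{E}[d_{ij}^m\,\mathbf{L}_i^m(\mathbf{L}_j^m)^H] = \brho_t\,\brho_t^H$, and averaging over $m$ yields $\mathbb{E}[\mathbf{Y}] = \brho_t\brho_t^H$. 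I do not expect a genuine technical obstacle; the computation is elementary and needs only the covariance $\mathbf{I}$, not the vanishing pseudo-covariance. The conceptually important point, which I would emphasize, is that the factorization above is precisely what is unavailable in the auto-correlation case $i=j$: there $\mathbf{L}_i^m = \mathbf{L}_j^m$ forces a genuine fourth-moment computation, which produces the diagonal bias $\|\brho_t\|^2\mathbf{I}$ appearing in \eqref{eq:WF_init}. The independence of the two sensing processes is therefore the sole mechanism that removes this bias and makes $\brho_t\brho_t^H$ itself the (unbiased) expectation of the GWF spectral matrix.
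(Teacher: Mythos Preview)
Your proof is correct and follows essentially the same approach as the paper: both regroup the summand as $[\mathbf{L}_i^m(\mathbf{L}_i^m)^H\brho_t][\mathbf{L}_j^m(\mathbf{L}_j^m)^H\brho_t]^H$ and exploit the independence of $\mathbf{L}_i^m$ and $\mathbf{L}_j^m$ to factor the expectation. The only difference is cosmetic---you invoke $\mathbb{E}[\mathbf{L}\mathbf{L}^H]=\mathbf{I}$ at the matrix level, whereas the paper carries out the identical computation entry-by-entry.
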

\begin{proof}
See Appendix \ref{sec:App9}.
\end{proof}

\begin{lemma}{Concentration Around Expectation.}\label{lem:Lemma10}
In the setup of Lemma \ref{lem:Lemma9}, assume that the number of measurements is $M = C(\delta) \cdot  N \log N$, where $C$ is a constant that depends on $\delta$. Then,
\begin{equation}\label{eq:lem10eqn}
\| \mathbf{Y} - \mathbb{E}[\mathbf{Y}] \| \leq \delta
\end{equation}
holds with probability at least $p = 1 - 8e^{-\gamma N} - 5N^{-2}$ where $\gamma$ is a fixed positive constant. 
\end{lemma}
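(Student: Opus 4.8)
The plan is to follow the concentration-of-measure machinery used for the spectral initialization in \cite{candes2015phase}, modified to exploit the \emph{bilinear} structure induced by $i \neq j$. Writing the spectral matrix as an empirical average $\mathbf{Y} = \frac{1}{M} \sum_{m=1}^M \mathbf{Z}_m$ with $\mathbf{Z}_m := d_{ij}^m \, \mathbf{L}_i^m (\mathbf{L}_j^m)^H = f_i^m \overline{f_j^m} \, \mathbf{L}_i^m (\mathbf{L}_j^m)^H$, where $f_i^m = (\mathbf{L}_i^m)^H \brho_t$ and $f_j^m = (\mathbf{L}_j^m)^H \brho_t$, I obtain an i.i.d. sequence of rank-one (non-Hermitian) matrices whose common mean is $\brho_t \brho_t^H$ by Lemma \ref{lem:Lemma9}. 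The goal is thus an operator-norm deviation bound on the centered average $\mathbf{Y} - \mathbb{E}[\mathbf{Y}]$.

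First I would reduce the spectral norm to a discrete supremum: taking $\mathcal{N}$ a $\tfrac14$-net of the unit sphere of $\mathbb{C}^N$ (so $|\mathcal{N}| \leq 9^{2N}$), a standard covering argument gives $\| \mathbf{Y} - \mathbb{E}[\mathbf{Y}] \| \leq 2 \sup_{u,v \in \mathcal{N}} | u^H (\mathbf{Y} - \mathbb{E}[\mathbf{Y}]) v |$, reducing the problem to at most $9^{4N}$ scalar deviations. The crucial simplification from the cross-correlated model appears here: for fixed unit vectors $u,v$,
\begin{equation}
u^H \mathbf{Z}_m v = \big( f_i^m \, (u^H \mathbf{L}_i^m) \big) \cdot \overline{\big( f_j^m \, (v^H \mathbf{L}_j^m) \big)},
\end{equation}
which factorizes into a quadratic form in $\mathbf{L}_i^m$ times a quadratic form in $\mathbf{L}_j^m$, and these two factors are \emph{independent} because $\mathbf{L}_i^m$ and $\mathbf{L}_j^m$ are independent. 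Each factor is a product of two jointly Gaussian scalars, hence sub-exponential, so $u^H \mathbf{Z}_m v$ is sub-Weibull of order $1/2$ with mean $u^H \brho_t \brho_t^H v$.

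The technical core, and the main obstacle, is that these summands are heavy-tailed: neither $\|\mathbf{Z}_m\|$ nor the summand variances are uniformly bounded, so matrix Bernstein cannot be applied verbatim. I would resolve this by truncation. Introduce the good event $\mathcal{G}_m = \{ |f_i^m|, |f_j^m| \lesssim \sqrt{\log N}, \ \|\mathbf{L}_i^m\|, \|\mathbf{L}_j^m\| \lesssim \sqrt{N} \}$ and split $\mathbf{Z}_m = \mathbf{Z}_m \mathbf{1}_{\mathcal{G}_m} + \mathbf{Z}_m \mathbf{1}_{\mathcal{G}_m^c}$. A union bound over the $M = \mathcal{O}(N\log N)$ samples, using standard $\chi^2$ and Gaussian-norm tail estimates, shows that $\cap_m \mathcal{G}_m$ fails only with polynomially small probability, which is the source of the $5 N^{-2}$ term; on $\cap_m \mathcal{G}_m$ the truncated summands are bounded and have controlled variance. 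For the truncated part, a scalar Bernstein inequality applied to $\frac{1}{M}\sum_m ( u^H \mathbf{Z}_m v \, \mathbf{1}_{\mathcal{G}_m} - \mathbb{E}[ u^H \mathbf{Z}_m v \, \mathbf{1}_{\mathcal{G}_m} ] )$ yields a per-direction deviation $\leq \delta/2$ except with probability $e^{-\Omega(M/\mathrm{polylog}\,N)}$, while a separate estimate bounds the truncation bias $\| \mathbb{E}[\mathbf{Z}_m \mathbf{1}_{\mathcal{G}_m^c}] \|$ below $\delta/2$.

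Finally I would assemble the pieces. Union bounding the per-direction tail over the $9^{4N}$ net pairs and choosing $M = C(\delta)\, N\log N$ with $C(\delta)$ large enough makes the exponent $\Omega(M/\mathrm{polylog}\,N)$ dominate $4N\log 9$, leaving a residual failure probability $8 e^{-\gamma N}$; combining with the truncation bookkeeping above gives the claimed $1 - 8e^{-\gamma N} - 5 N^{-2}$. The numerical constants $8$ and $5$ are merely the result of tallying the several tail and net estimates, paralleling the bookkeeping of \cite{candes2015phase} but with two independent Gaussian factors per measurement rather than a fourth moment of a single factor. The delicate point throughout is calibrating the truncation level against $M$ so that the induced bias stays $\mathcal{O}(\delta)$ while the per-direction concentration exponent still beats the $e^{\mathcal{O}(N)}$ net cardinality.
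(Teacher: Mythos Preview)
Your overall plan (net reduction plus per-direction concentration, with a truncation to tame heavy tails) is the right template, and it is the one the paper follows. There is, however, a genuine gap in the concentration step. The truncation event $\mathcal{G}_m$ you propose controls $\|\mathbf{L}_{i}^m\|,\|\mathbf{L}_{j}^m\|$ only at the level $\sqrt{N}$, so for an arbitrary net pair $(u,v)$ the only \emph{uniform} bound available on the truncated summand is
\[
|u^H \mathbf{Z}_m v|\,\mathbf{1}_{\mathcal{G}_m}\ \le\ |f_i^m|\,|f_j^m|\,\|\mathbf{L}_i^m\|\,\|\mathbf{L}_j^m\|\ \lesssim\ N\log N,
\]
not $\mathrm{polylog}\,N$. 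Feeding $B=\mathcal{O}(N\log N)$, $\sigma^2=\mathcal{O}(1)$ and $M=C(\delta)N\log N$ into scalar Bernstein gives a per-direction failure probability of only $e^{-\mathcal{O}(1)}$, which is obliterated by the $9^{\mathcal{O}(N)}$ net pairs; the exponent $\Omega(M/\mathrm{polylog}\,N)$ you assert does not follow from the ingredients as written.

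The paper repairs this not by sharpening the truncation but by a structural decomposition: after reducing via unitary invariance to $\brho_t=\e_1$, it splits the \emph{test vector} as $\y=(\y_1,\tilde{\y})$ along the signal direction. The quadratic form $\y^H(\mathbf{Y}-\mathbb{E}\mathbf{Y})\y$ then breaks into four pieces in which the only heavy scalars are $(\mathbf{L}_i^m)_1,(\mathbf{L}_j^m)_1$; these are controlled by Chebyshev on their empirical moments and by $\max_m|(\mathbf{L}_{i,j}^m)_1|\lesssim\sqrt{\log M}$ via a union bound (this is where the $5N^{-2}$ term originates), while the remaining factors $\tilde{\y}^H\tilde{\mathbf{L}}_i^m$ and $(\tilde{\mathbf{L}}_j^m)^H\tilde{\y}$ are standard Gaussians \emph{independent} of those first coordinates and are handled directly by Hoeffding and sub-exponential Bernstein. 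The effective boundedness parameter entering Bernstein is therefore $\mathcal{O}(\log N)$ rather than $N\log N$, and the $N\log N$ sample complexity goes through. The missing idea in your argument is precisely this localization of the heavy-tailed behavior to the single signal-aligned coordinate; a norm-level truncation of $\mathbf{L}_{i,j}^m$ is too coarse to survive the union bound over the net.
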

\begin{proof}
See Appendix \ref{sec:App10}.
\end{proof}

Plugging in the expectation from Lemma \ref{lem:Lemma9} into \eqref{eq:lem10eqn}, and using the definition of $\bdelta$ from Lemma \ref{cor:Corr1}, the Lemmas \ref{lem:Lemma9} and \ref{lem:Lemma10} culminate to
\begin{equation}\label{eq:repPert}
\| \bdelta ( \brho_t \brho_t^H) \| \leq \delta
\end{equation} 
for \emph{any} $\brho_t$ with $\| \brho_t \| = 1$ with probability at least $p$. 
From the definition of the spectral norm on $\mathbb{C}^{N \times N}$, we can write the left-hand-side of \eqref{eq:repPert} equivalently as
\begin{equation}
\underset{\brho \in C^N, \| \brho \| = 1}{\text{max}} | \brho^H \bdelta(\brho_t \brho_t^H) \brho |  \leq \delta.
\end{equation}
Since the spectral norm corresponds to the maximum over the unit sphere in $\mathbb{C}^N$, we have
\begin{equation}
| \brho_t^H \bdelta(\brho_t \brho_t^H) \brho_t | \leq \| \bdelta ( \brho_t \brho_t^H) \|  \leq  \delta.
\end{equation}
Observe that the left-hand-side can equivalently be represented as a Frobenius inner product via lifting as 
\begin{equation}
| \brho_t^H \bdelta(\brho_t \brho_t^H) \brho_t |  = | \langle \bdelta(\brho_t \brho_t^H), \brho_t \brho_t^H \rangle_F |. 
\end{equation}
Having Lemmas \ref{lem:Lemma9} and \ref{lem:Lemma10} hold for any element $\brho_t \in \mathbb{C}^N$ with $\| \brho_t \| = 1$ via unitary invariance, for any $\brho \in \mathbb{C}^N$ we obtain
\begin{equation}
 \frac{| \langle (\frac{1}{M}\mathcal{F}^H \mathcal{F} - \mathbf{I}) (\brho \brho^H), {\brho}\brho^H \rangle_F |}{\| {\brho}\brho^H \|_F^2} \leq \delta,
\end{equation}
which yields
\begin{equation}
(1-\delta){\| {\brho}\brho^H \|_F^2}  \leq \frac{1}{M} \| \mathcal{F}(\brho \brho^H) \|^2 \leq (1 + \delta) {\| {\brho}\brho^H \|_F^2}
\end{equation}
for any $\brho \in \mathbb{C}^N$. 
Therefore, RIP$_+$ with RIC-$\delta$ is established with probability at least $p$ for mapping $\frac{1}{\sqrt{M}}\mathcal{F}$ with $\mathbf{L}_{i}^m, \mathbf{L}_{j}^m \sim \mathcal{N}(0, \frac{1}{2} \mathbf{I} ) + \mathrm{i} \mathcal{N}(0, \frac{1}{2} \mathbf{I})$ where $M = C(\delta) \cdot  N \log N$. 

Furthermore, we know that the true lifted unknown $\brho_t \brho_t^H$ lies in the positive semi-definite cone, which is a convex set. 
Since the spectral matrix $\hat{\mathbf{X}}$ is the projection of $\frac{1}{M}\mathcal{F}^H \mathcal{F}(\brho_t \brho_t^H)$ onto the set of Hermetian symmetric matrices, from the non-expansiveness property of projections onto convex sets we have
$$
\| \hat{\mathbf{X}} - \brho_t \brho_t^H \| \leq \| \frac{1}{M}\mathcal{F}^H \mathcal{F}(\brho_t \brho_t^H) - \brho_t \brho_t^H \| \leq \delta,
$$
which completes the proof of Theorem \ref{thm:Theorem2}.

\section{Numerical Simulations}\label{sec:NumSim}
\subsection{Signal Recovery From Random Gaussian Measurements}\label{sec:Nsim1}
We begin by considering recovery of random signals from cross-correlations of complex random Gaussian measurements, $\mathbf{L}^m_i, \mathbf{L}^m_j \sim \mathcal{N}(\mathbf{0}, \frac{1}{2} \mathbf{I}) + \mathrm{i} \mathcal{N}(\mathbf{0}, \frac{1}{2} \mathbf{I})$. 
For our numerical evaluations of the Gaussian model, we conduct an experiment similar to that of \cite{candes2015phase}.
We set $N = 128$, and run $100$ instances of interferometric inversion by GWF with independently sampled Gaussian measurement vectors on two types of signals: random low-pass signals, $\brho^{LP}$, and random Gaussian signals, $\brho^G$.
The entries of the signals are generated independently of the measurement vectors at each instance by  
\begin{equation}
\rho^{LP}_l = \sum_{p = - \frac{P}{2} +1}^{\frac{P}{2}} ( X_l + \mathrm{i} Y_l ) e^{\frac{2\pi \mathrm{i}(p-1)(l-1)}{N}}, \quad \rho^G_l = \sum_{p = - \frac{N}{2} +1}^{\frac{N}{2}} \frac{1}{\sqrt{8}}( X_l + \mathrm{i} Y_l ) e^{\frac{2\pi \mathrm{i}(p-1)(l-1)}{N}},
\end{equation}
where $P = N/8$, and $X_l$ and $Y_l$ are i.i.d. $\mathcal{N}(0,1)$. 
%
As described in \cite{candes2015phase}, random low-pass signal corresponds to a bandlimited version of this random model and variances are adjusted so that the expected signal power is the same. 

\begin{figure}
\centering
\includegraphics[scale=0.315]{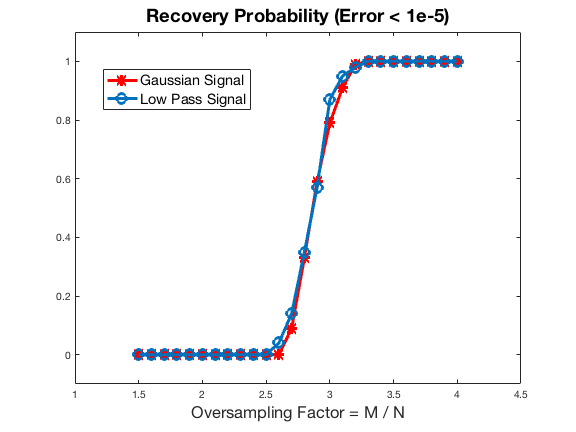}
\includegraphics[scale=0.32]{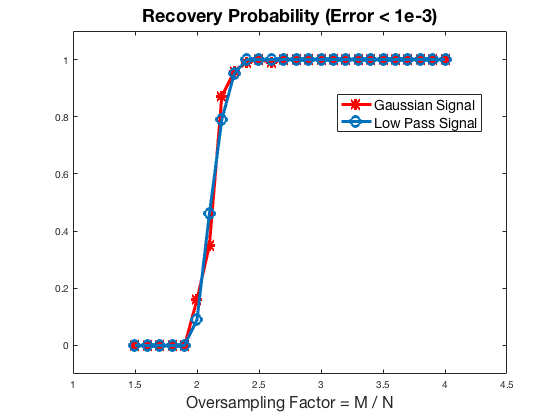}
\caption{\emph{Empirical recovery probabilities based on $100$ random trials vs. oversampling factor of the number of measurements $M/N$.} The red curves correspond to empirical recovery probability of the Gaussian signal, whereas the blue curve corresponds to that of realization of the random low-pass signal model. The two figures vary with respect to the values of success criterion assumed for successful recovery, with a relative error of $10^{-5}$ and $10^{-3}$, respectively.}
\label{fig:Figure2}
\end{figure}

We implement the GWF algorithm with the learning rate heuristic of WF in \cite{candes2015phase} such that the descent algorithm takes smaller steps initially due to higher inaccuracy of the iterates. The step size is gradually increased such that $\mu_k  =  \text{min}( 1 - \mathrm{e}^{k/\tau_0}, \mu_{max})$, where $\tau_0 = 33000$, and $\mu_{max} = 0.2$. For $2500$ iterations, the learning parameter corresponds to a nearly linear regime and attains the maximum value of $0.073$. 

In the experimentation, we compute the empirical probability of success after $2500$ iterations by counting the exact recovery instances of GWF recovery from different realizations of Gaussian measurements for $\{ \mathbf{L}_{i}^m, \mathbf{L}_{j}^m\}_{m=1}^M$.  
We evaluate the exact recovery by the relative normalized error of the final estimate, $\brho_{GWF}$, such that $\text{dist}(\brho_{GWF} , \brho_t) / \| \brho_t \| \leq  err = 10^{-5}$. In addition, we evaluate the probability of moderately precise recovery by setting $err = 10^{-3}$.
As shown in Figure \ref{fig:Figure2}, our experimentation indicates that beginning with $3N$ interferometric Gaussian measurements, GWF achieves exact recovery with high probabilities. 
Furthermore, the method provides robust recovery with as low as $2.3N$ interferometric Gaussian measurements, with a relative error of $10^{-3}$ and below at over $95$ percent. 

\subsection{Multistatic Passive Radar Imaging}

An interferometric inversion problem of great interest is multistatic passive radar imaging. 
We consider an imaging set-up in which several static, terrestrial receivers are placed in a circle of radius around the scene of interest, which is illuminated by a transmitter of opportunity.
An exemplary multistatic imaging geometry is illustrated in Figure \ref{fig:Figure3}.
At receiver $i$, the back-scattered signal is collected at a fixed location by a linear map $\mathbf{L}_i$ parameterized by the temporal frequency variable $\omega$. 
The linear measurements collected at two receivers $i$ and $j$ are then pairwise correlated in time to yield the cross-correlation model \eqref{eq:interferom} defined in the temporal frequency domain. 

The key advantage of the interferometric model is the elimination of the dependence of measurements on the transmitter location and phase of the transmitted waveform, both of which are unknown in the passive scenario. 
In prior studies, the interferometric wave-based imaging was approached by low rank recovery methods \cite{son2015passive, Mason2015, yonel2018generalized}. 
We postulate that GWF framework provides a computationally and memory-wise efficient alternative to LRMR based passive radar imaging. 

\subsubsection{Received Signal Model}
Let $\a_i^r\in\mathbb{R}^3$ denote the spatial locations of the receivers, and assume $S$ number of receivers such that $i = 1, 2, \cdots S$.
We assume that scattered signals are due to a single source of opportunity located at $\a^t$.
The location on the surface of the earth is denoted by $\x = (\bi x,\psi(\bi x)) \in \mathbb{R}^3$, where $\bi x = (x_1,x_2) \in \Rb^2$ and $\psi: \mathbb{R}^2\rightarrow \mathbb{R}$ is a known ground topography; and ${\rho} : \mathbb{R}^2 \rightarrow \mathbb{R}$ denotes the ground reflectivity.

\begin{figure}
\centering
\includegraphics[width=2.5in]{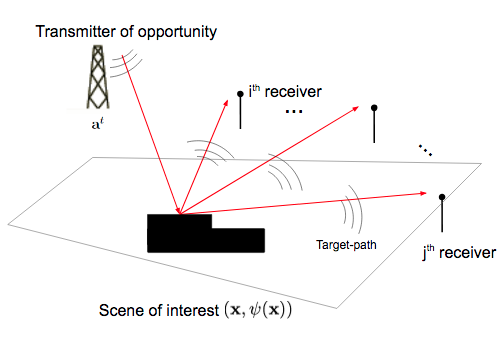}
\includegraphics[width=2.5in]{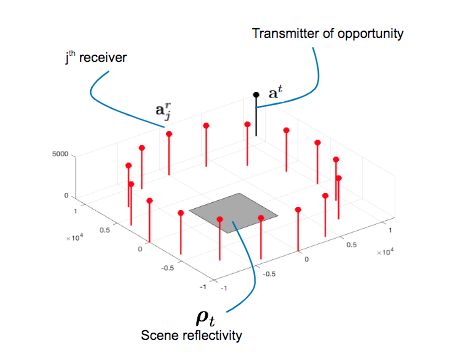}
\caption{\emph{An illustration of a multistatic imaging configuration. A scene is illuminated by a stationary illuminator of opportunity, located at $\mathbf{a}^t$. The backscattered signal is measured by a collection of stationary receivers, encircling the scene of interest at locations $\mathbf{a}^r_j$.}}
\label{fig:Figure3}
\end{figure}

Under the flat topography assumption and Born approximation, and assuming waves propagate in free space, the fast-time temporal Fourier transform of the received signal at the $i$-th receiver can be modeled as \cite{Yarman08}
\begin{equation}
f_i(\omega) \approx \Lc_i[{\rho}](\omega,s) := p(\o, s) \int_D 
  \mathrm{e}^{-\mathrm{i}\omega\frac{\phi_i(\x)}{c_0}}\alpha_i(\bi {x}, \a^t)\rho(\bi {x}) d \bi{x}, 
\label{eq:received_signal}
\end{equation}
where $\o$ is the temporal frequency variable,$c_0$ is the speed-of-light in free space, $p(\o, s)$ is the transmitted waveform, $\alpha_i(\bi x, \a^t)$ is the azimuth beam pattern, and
\begin{equation}
\phi_{i}(\bi x) = |\x-\a_i^r| + |\x-\a^t |,
\label{eq:range}
\end{equation}
is the bi-static delay term.

\begin{figure}
\centering
\includegraphics[scale=0.325]{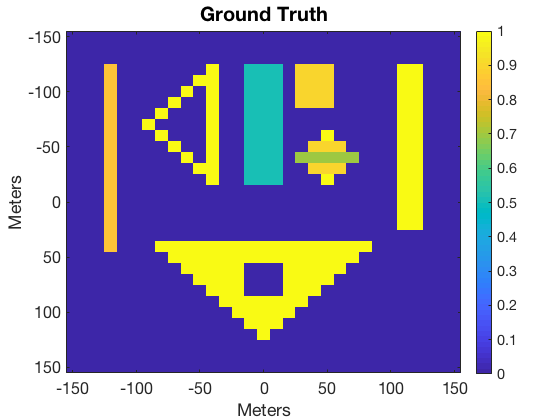}
\includegraphics[scale=0.325]{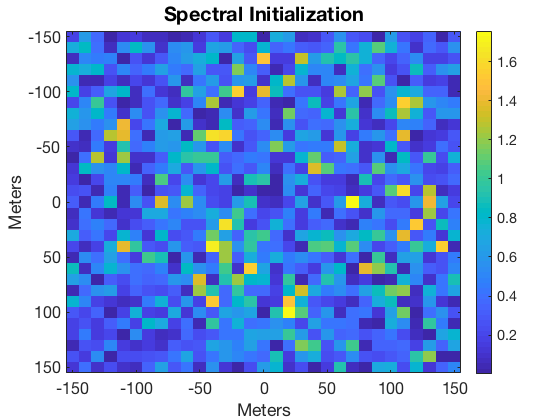}
\caption{\emph{The ground truth image and the initial estimate by the spectral method of GWF by Equations \eqref{eq:spectralInitGWF} and \eqref{eq:GWF_initp}}.}
\label{fig:Figure4}
\end{figure}

Following the derivations in \cite{Mason2015} and discretizing the domain $D$ of $\rho$ into $N$ samples, the cross-correlated data model for the multi-static scenario is obtained as
\begin{equation}
d_{ij}(\o) =  \sum_{k = 1}^N  e^{- \mathrm{i} \o  (|\x_k - \a_i^r | + \hat{\a}^t \cdot \x_k) / c_0}  \brho_k \sum_{k' = 1}^N e^{\mathrm{i} \o (|{\x}_{k'} -  \a_j^r| + \hat{\a}^t \cdot \x_{k'})}  \overline{\brho_{k'}}
\label{eq:discrete_datamod}
\end{equation}
where $\brho_k =\rho(\x_k)$ is the $k^{th}$ entry of the discretized scene reflectivity vector $\brho \in \mathbb{C}^N$, and $\hat{\a}^t$ it the unit vector in the direction of the $\a^t$. 
We next discretize the temporal frequency domain $\Omega$ into $M'$ samples and define the measurement vectors with entries
\begin{equation}
[ \mathbf{L}^{m'}_{i}]_k =  e^{\mathrm{i} \o_{m'} (|\x_k  - \a_{i}^r | - \hat{\a}^t \cdot \x_k) / c_0} \quad k = 1,...,N.
\label{eq:discreteL_Def2}
\end{equation}
Rewriting (\ref{eq:discrete_datamod}) using (\ref{eq:discreteL_Def2}), we obtain the interferometric measurement model as:
\begin{equation} \label{eq:discreteL_Def3}
d^{ij}(\o_{m'}) = \langle \mathbf{L}^{m'}_i, \ \brho \rangle \overline{\langle \mathbf{L}^{m'}_j, \ \brho \rangle}  \quad m' = 1,...,M', \ i = 1,...,S, \ j \neq i,
\end{equation}
which corresponds to total of $M = M' \binom{S}{2}$ cross-correlated measurements\footnote{Note that since we have multiple receivers to cross-correlate, the $m = 1, \cdots M$ summation of the objective function in \eqref{eq:GWF_opt} now consists of two sums, one over all unique cross-correlations $i \neq j$, and one over the temporal frequency index $m'$.} 
In \cite{son2019exact}, we show that the model in \eqref{eq:discrete_datamod} satisfies the sufficient condition of Theorem \ref{thm:Theorem1}, and hence GWF can provide exact image reconstruction for multi-static passive radar.  


\subsubsection{Simulation Setup and Results}
We assume isotropic transmit and receive antennas, and simulate a transmitted signal with $20$MHz bandwidth and $1$GHz center frequency.
We assume isotropic transmit and receive antennas, and simulate a transmitted signal with $20$MHz bandwidth and $1$GHz center frequency.
We place the center of the scene at the origin of the coordinate system, and generate a phantom, which consists of multiple point and extended targets as depicted in Figure \ref{fig:Figure4}. 
The transmitter is fixed and located at coordinates $\a^t = [11.5, 11.5, 0.5]$ km. 
We simulate a flat spectrum waveform, and sample the temporal frequency domain into $32$ samples. 

\emph{i) GWF Reconstruction:} To evaluate the performance of GWF for interferometric multistatic radar imaging, we simulate a $300\times 300 \mathrm{m}^2$ scene, and discretize it by $10$m, which corresponds to $31\times 31$ pixels, hence, $N = 961$. 
We use $16$ receiver antennas that are placed in a circle of radius $10$km around the scene at height of $0.5$km, which corresponds to $120$ unique cross-correlations at each temporal frequency sample. 
We generate the back-scattered signals at each receiver by the linear measurement map of \eqref{eq:received_signal}, and correlate linear measurements of the each pairwise combination of receivers to generate interferometric data.   
In reconstruction, we deploy the approximate measurement vectors in which the transmitter distance is removed as defined in \eqref{eq:discrete_datamod} and \eqref{eq:discreteL_Def2}, hence only the transmitter look-direction is used at recovery by GWF. 


\begin{figure}
\centering
\includegraphics[scale=0.325]{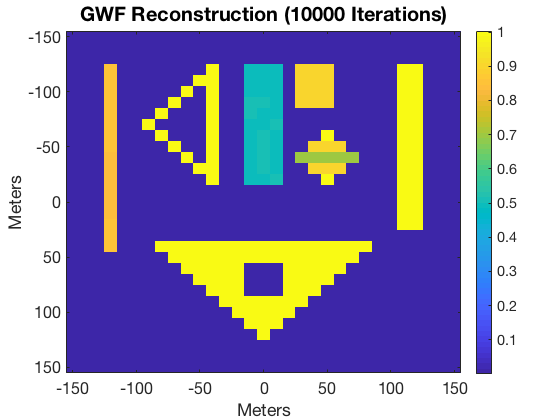}
\includegraphics[scale=0.325]{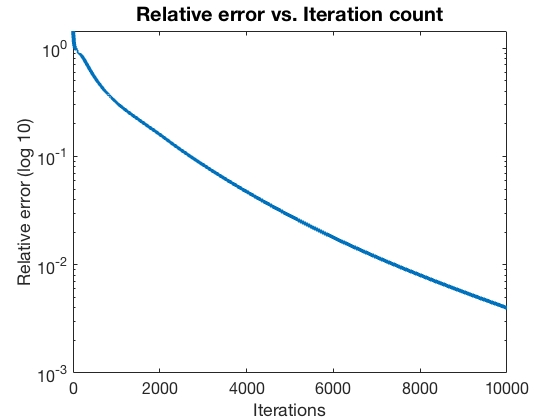}
\caption{\emph{Reconstructed image by GWF after 10000 iterations of \eqref{eq:GWF_Updates} and the relative error with respect to the ground truth in $\log$-scale vs. iterations.}}
\label{fig:Figure5}
\end{figure}

In Figure \ref{fig:Figure5}, we demonstrate the reconstruction obtained by GWF after $10000$ iterations, using the gradually increasing step size heuristic described in Section \ref{sec:Nsim1}. 
In addition, we plot the relative error of GWF iterates, which displays a geometric rate of convergence as stated in Theorem \ref{thm:Theorem1}. 
The numerical results on our phantom image indicate GWF has the capacity to form accurate imagery of complex extended scenes. 

\emph{ii) Comparison to LRMR by Uzawa's Method:} We next perform smaller sized experiments to compare with Uzawa's method for LRMR by lifting. Namely, we compare GWF to three different formulations that yield Uzawa's iterations of the form in \eqref{eq:SVT}. These are the convex-relaxed trace regularization problem solved by \cite{Mason2015}, the rank-1 constrained non-convex version with projections defined in \eqref{eq:Uzaw_proj}, and the convex formulation obtained by dropping the rank constraint fully, i.e., a projected gradient descent analogue of \cite{Demanet14}\footnote{Note that the original proposed method in \cite{Demanet14} follows the Douglas-Rachford splitting framework to solve the non-trace regularized convex program in the lifted domain, which has different exact recovery arguments to Uzawa's method. To avoid the computational burden required to solve a linear system in the lifted domain, we use the projected gradient approach under the identical problem formulation.}. Notably, we assess the reconstruction performance with respect to amount of computations, and run the algorithms for identical number of flops.
For our problem size, we run GWF for $8000$ iterations, which roughly corresponds to $56$ iterations of Uzawa's method. 
As complexity is only known upto an order, to avoid from positive bias towards GWF we run the variations of the Uzawa's method for $110$ iterations, i.e., double of what the mismatch in computation from lifting the unknown indicates. 

The simulated scene corresponds to $144 \times 144$m, discritized uniformly by $12$m, to yield a $12 \times 12$ unknown, hence $N = 144$. 
We use $12$ receiver antennas that are placed in a circle of radius $10$km around the scene at height of $0.5$km, yielding $66$ unique cross-correlations at each temporal frequency sample. 
The transmitted waveform bandwidth is set at $10$MHz, around $1.9$GHz central frequency, which corresponds to imaging below the range resolution limit of the system \cite{son2019exact}. 
We provide mean squared error with respect to number of flops in computation over the lifted and the signal domains, and the final images reconstructed by each algorithm in Figures \ref{fig:Figure6} and \ref{fig:Figure7}, respectively. 
Overall, our results indicate that GWF achieves exact reconstruction considerably faster than LRMR by Uzawa's method when compared at identical number of flops. 
Furthermore, the convergent behavior of the convex LRMR methods under consideration supports our observation that their exact recovery guarantees are more stringent than those of GWF. 
Thereby, our numerical results demonstrate its superior performance at identical number of flops to lifting based approaches, and promote GWF as a favorable alternative to state of the art interferometric wave-based imaging methods. 

\begin{figure}[!t]
\centering
\includegraphics[scale=0.325]{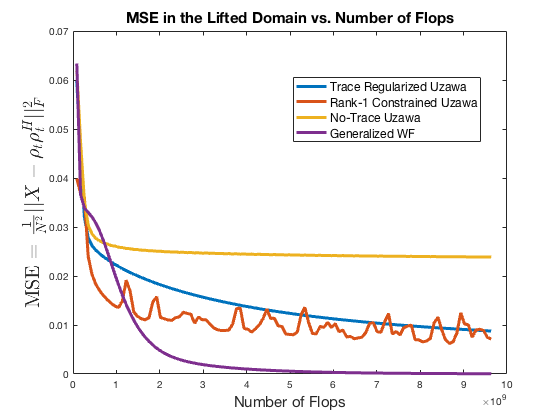}
\includegraphics[scale=0.325]{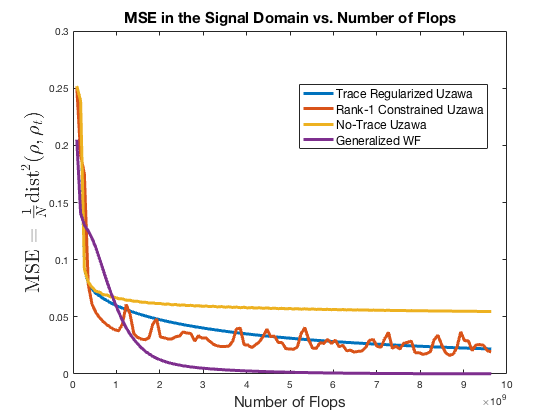}
\caption{Mean squared error curves in the lifted and signal domain vs. number of flops, respectively.}
\label{fig:Figure6}
\end{figure}

\begin{figure}[!t]
\centering
\subfloat[Ground truth scene.]{\includegraphics[scale=0.2]{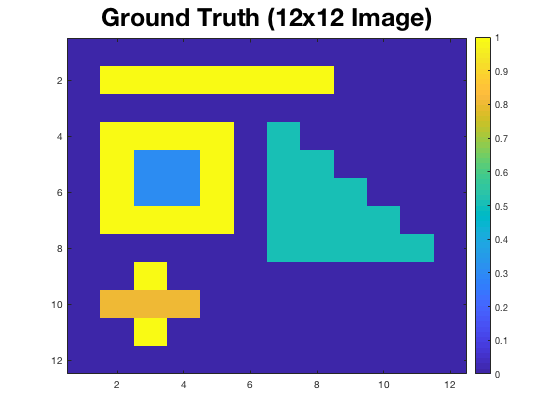}}\hfil
\subfloat[Spectral initialization.]{\includegraphics[scale=0.2]{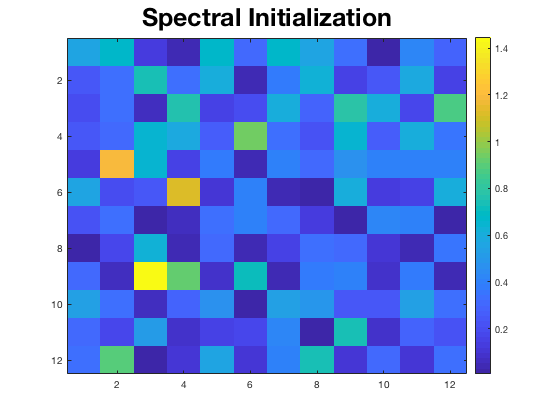}}\hfil 
\subfloat[GWF Reconstruction.]{\includegraphics[scale=0.2]{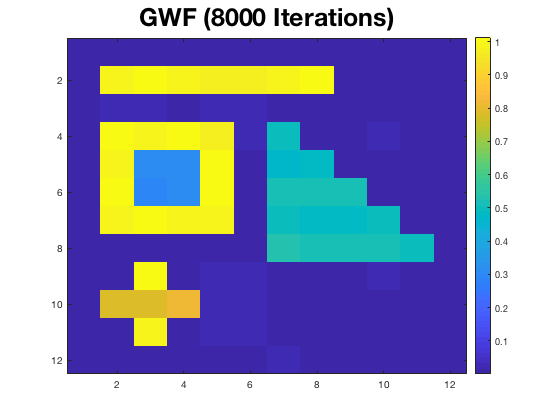}} 

\subfloat[Uzawa's Method with trace regularization.]{\includegraphics[scale=0.2]{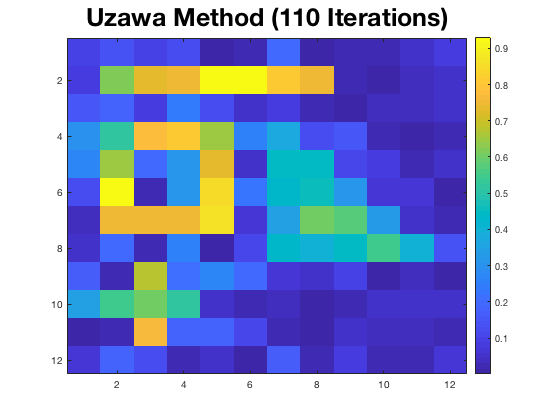}}\hfil   
\subfloat[Uzawa's Method with rank-1 constraint.]{\includegraphics[scale=0.2]{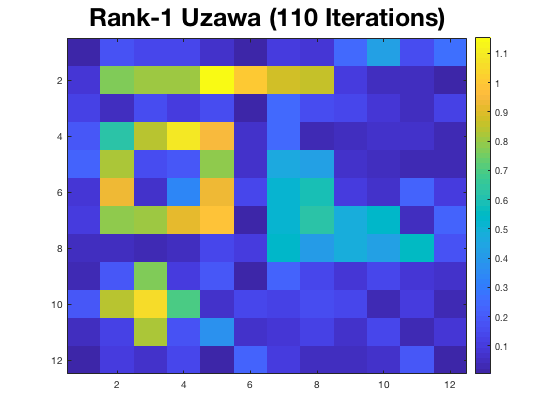}}\hfil
\subfloat[Uzawa's Method with only PSD constraint.]{\includegraphics[scale=0.2]{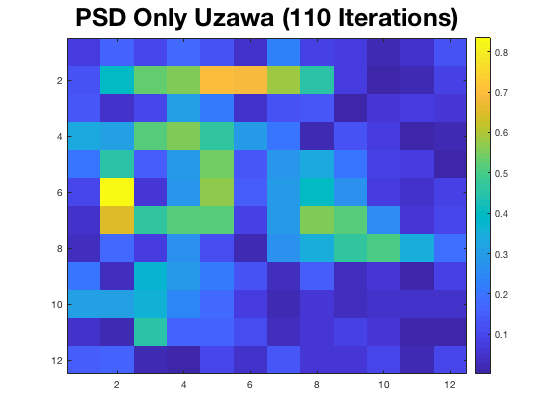}}
\caption{Reconstruction results of GWF and variations of Uzawa's method at identical number of flops ($8000$ iterations for GWF, $110$ iterations for Uzawa's methods.}
\label{fig:Figure7}
\end{figure}





\section{Conclusion}\label{sec:Conc}

In this paper, we present a novel framework for exact interferometric inversion. 
We approach the interferometric inversion problem from the perspective of phase retrieval techniques. 
We examine two of the most prominent phase retrieval methods, namely LRMR based PhaseLift, and non-convex optimization based WF, and bridge the theory between the two frameworks. 
We then generalize WF and formulate the GWF framework for interferometric inversion, and extend the exact recovery guarantees to arbitrary measurement maps with properties that are characterized in the equivalent lifted problem. 
Thereby, we establish exact recovery conditions for a larger class of problems than that of standard WF. 

We identify the sufficient conditions for exact interferometric inversion on the lifted forward model as the restricted isometry property on rank-1, PSD matrices, with a restricted isometry constant of $\delta \leq 0.214$. 
In developing our theory, we use the special structure of the rank-1, PSD set of matrices to show that the RIP directly implies the regularity condition of WF. 
Furthermore, we show
that the concentration bound of the spectral matrix directly implies the RIP over rank-1, PSD matrices for cross-correlations of the complex Gaussian model. 
Hence, in generalizing the theory of WF for interferometric inversion in the complex Gaussian case, we demonstrate that the regularity condition becomes redundant. 
We illustrate that the empirical probability of exact interferometric inversion by GWF require smaller oversampling factors than that of phase retrieval in the Gaussian model. 
Finally, we demonstrate the applicability of GWF in a deterministic, passive multi-static radar imaging problem using realistic imaging parameters. 
In conclusion, our paper shows that the computational and theoretical advantages promote GWF as a practical technique in real-world imaging applications. 


\section*{Acknowledgement}
This work was supported by the Air Force Office of Scientific Research (AFOSR) under the agreement FA9550-16-1-0234, Office of Naval Research (ONR) under the agreement N0001418-1-2068 and by the National Science Foundation (NSF) under Grant No ECCS-1809234.

\bibliographystyle{IEEEtran}
\bibliography{ref1}

\appendix
\section{Derivations}\label{sec:AppDerivs}

\subsection{Derivation of $\nabla \mathcal{J}$}\label{sec:App0}

Recall the objective function $\mathcal{J}$ in \eqref{eq:GWF_opt}
\begin{equation} \label{eq:ObjFunct}
\mathcal{J}(\brho) = \frac{1}{2M} \sum_{m = 1}^M | \left(\mathbf{L}_i^m\right)^H \brho \brho^H \mathbf{L}_j^m - d^{ij}_m |^2.
\end{equation}
Letting $e^m = \left(\mathbf{L}_i^m \right)^H \brho \brho^H \mathbf{L}_j^m - d_{ij}^m$, we write 
\begin{equation}\label{eq:gradTerm}
\frac{\partial \mathcal{J}}{\partial \brho} = \frac{1}{2M} \sum_{ m = 1}^M \frac{\partial}{\partial \brho} (e^m \bar{e}^m) = \frac{1}{2M} \sum_{ m = 1}^M \frac{\partial e^m}{\partial \brho} \bar{e}^m +\frac{\partial \bar{e}^m}{\partial \brho} e^m.
\end{equation}
Having $\brho$ and $\brho^H$ independent by properties of Wirtinger derivatives, we compute the partial derivatives in \eqref{eq:gradTerm} as
\begin{equation}\label{eq:GaussUpdt}
\frac{\partial \mathcal{J}}{\partial \brho} = \frac{1}{2M} \sum_{ m = 1}^M \bar{e}^m (\brho^H \mathbf{L}_j^m \left(\mathbf{L}_i^m \right)^H) + e^m (\brho^H \mathbf{L}_i^m \left(\mathbf{L}_j^m \right)^H).
\end{equation}
Using the definition of complex gradient provided in \eqref{eq:compGrad}, we finally get
\begin{equation}
\nabla \mathcal{J}  = \frac{1}{2M} \sum_{ m = 1}^M \bar{e}^m \mathbf{L}_j^m \left(\mathbf{L}_i^m \right)^H \brho + e^m \mathbf{L}_i^m \left(\mathbf{L}_j^m \right)^H \brho.
\end{equation}

\subsection{Proof Of Lemma \ref{cor:Corr1}}\label{sec:App1}
Assuming the RIP on $\mathcal{F}$ over rank-1, PSD matrices with RIC-$\delta$, we write
\begin{equation}\label{eq:App1_1}
(1-\delta) \| {\brho} \brho^H \|_F^2 \leq \| \mathcal{F}({\brho}\brho^H ) \|^2 \leq (1+\delta) \| {\brho} \brho^H \|_F^2.
\end{equation}
Equivalently, from the definition of the Frobenius inner product and the adjoint operator $\mathcal{F}^H$, we re-express \eqref{eq:App1_1} as
\begin{eqnarray}\label{eq:App1_2}
(1-\delta) \langle {\brho} \brho^H, {\brho}\brho^H \rangle_F \leq \langle \mathcal{F}^H \mathcal{F} ({\brho}\brho^H ), {\brho}\brho^H \rangle_F & \leq & (1+\delta) \langle {\brho}\brho^H, {\brho}\brho^H \rangle_F, \\ 
| \langle (\mathcal{F}^H \mathcal{F} - \mathbf{I})({\brho}\brho^H ) , {\brho}\brho^H \rangle_F | &  \leq  & \delta \| {\brho}\brho^H \|_F^2,
\end{eqnarray}
hence for any $\brho \in \mathbb{C}^N$ we have 
\begin{equation}
\frac{| \langle (\mathcal{F}^H \mathcal{F} - \mathbf{I})(\brho \brho^H ) , \brho \brho^H \rangle_F | }{ \| \brho{\brho}^H \|_F^2} \leq \delta.
\end{equation}
Now consider the definition of the spectral norm, with
\begin{equation}\label{eq:App1_3}
\| \bdelta(\brho \brho^H) \| = \underset{\| \v \| = 1}{\text{max}} | \v^H \bdelta(\brho \brho^H) \v | = \underset{\| \v \| = 1}{\text{max}} | \langle \bdelta(\brho \brho^H ) , \v \v^H \rangle_F |. 
\end{equation}
From the Hermitian property of $\bdelta$, we have
\begin{equation}\label{eq:App1_3_1}
\| \bdelta(\brho \brho^H) \|  = \underset{\| \v \| = 1}{\text{max}} | \langle \sqrt{\bdelta}(\brho \brho^H ) , \sqrt{\bdelta}^H (\v \v^H) \rangle_F | \leq \underset{\| \v \| = 1}{\text{max}}  \|  \sqrt{\bdelta}(\brho \brho^H )  \|_F \| \sqrt{\bdelta} (\v \v^H) \|_F,
\end{equation}
where $\sqrt{\bdelta}: \mathbb{C}^{N \times N} \rightarrow \mathbb{C}^{N \times N}$ with $\sqrt{\bdelta} \sqrt{\bdelta} = \bdelta$, and the inequality follows from matrix Cauchy-Schwartz property, and the fact that Frobenius norm is unaltered by conjugation. 
Observe that using the Hermitian property of $\bdelta$ via Definition 6.1 in \cite{foucart2013mathematical}, for any $\brho \in \mathbb{C}^N$, we have $ \| \sqrt{\bdelta}(\brho \brho^H) \|_F^2/ \| \brho \brho \|_F^2 \leq \delta$, hence
\begin{equation}
\|  \sqrt{\bdelta}(\brho \brho^H )  \|_F \underset{\| \v \| = 1}{\text{max}} \| \sqrt{\bdelta} (\v \v^H) \|_F \leq  \sqrt{\delta} \|  \sqrt{\bdelta}(\brho \brho^H )  \|_F \leq \delta \| \brho \brho^H \|_F. 
\end{equation}
Using the definition of the spectral norm, we simply obtain
\begin{equation}\label{eq:App1_4}
\frac{\| \bdelta( \brho \brho^H ) \|}{\| \brho \brho^H \|} \leq \delta,
\end{equation}
for any $\brho \in \mathbb{C}^N$ if \eqref{eq:App1_2} holds. 
Furthermore since $\delta$ is the smallest constant such that \eqref{eq:App1_2} is satisfied, let
\begin{equation}
\delta := \underset{\mathbf{\brho \in \mathbb{C}^N \setminus \{0 \}}}{\text{max}} \frac{| \langle \bdelta(\brho \brho^H ) , \brho \brho^H \rangle_F |}{ \| \brho \brho^H \|_F^2} =  \underset{\mathbf{\brho \in \mathbb{C}^N \setminus \{0 \}}}{\text{max}} \frac{\| \sqrt{\bdelta}(\brho \brho^H) \|_F^2}{\| \brho \brho^H \|_F^2},
\end{equation} 
Revisiting \eqref{eq:App1_3_1}, since from \eqref{eq:App1_4} we know for any $\brho$, $\| \bdelta (\brho \brho^H) \| \leq \delta \| \brho \brho^H \|_F$, the maximal value of $\delta$ is reached at $\v = \brho/\| \brho \|$, as
\begin{equation}
\begin{split}
\underset{\mathbf{\brho \in \mathbb{C}^N \setminus \{0 \}}}{\text{max}} \| \bdelta( \brho \brho^H ) \| 
& \leq 
\underset{\mathbf{\brho \in \mathbb{C}^N \setminus \{0 \}}}{\text{max}}
 \ \underset{\| \v \| = 1}{\text{max}}  \|  \sqrt{\bdelta}(\brho \brho^H )  \|_F \| \sqrt{\bdelta} (\v \v^H) \|_F \\
& = \underset{\mathbf{\brho \in \mathbb{C}^N \setminus \{0 \}}}{\text{max}} \frac{\|  \sqrt{\bdelta}(\brho \brho^H )\|_F^2}{\| \brho \brho^H \|_F} = \delta \| \brho \brho^H \|_F,
\end{split}
\end{equation}
and by definition
\begin{equation}
\begin{split}
\underset{\mathbf{\brho \in \mathbb{C}^N \setminus \{0 \}}}{\text{max}} \frac{\| \bdelta( \brho \brho^H ) \|}{\| \brho \brho^H \|_F } & \geq \underset{\mathbf{\brho \in \mathbb{C}^N \setminus \{0 \}}}{\text{max}}  \underset{\| \v \| = 1}{\text{max}} \frac{| \langle \bdelta(\brho \brho^H ) , \v \v^H \rangle_F |}{\| \brho \brho^H \|_F} \\
& \geq \underset{\mathbf{\brho \in \mathbb{C}^N \setminus \{0 \}}}{\text{max}} \frac{| \langle \bdelta(\brho \brho^H ) , \brho \brho^H \rangle_F |}{ \| \brho \brho^H \|_F^2} = \delta.
\end{split}
\end{equation}
Hence the proof is complete. 

%


\section{Lemmas for Theorem \ref{thm:Theorem1}}\label{sec:AppThm1}
\subsection{Proof of Lemma \ref{lem:Lemma2}}\label{sec:App2}
Since $\hat{\brho}_t  = e^{\mathrm{i} \Phi(\v_0)} \brho_t$, we have
\begin{equation}
\text{dist}^2 (\v_0, \brho_t) = \| \v_0 - \hat{\brho}_t \|^2 = \| \v_0 \|^2 + \| \hat{\brho}_t \|^2 - 2 \text{Re} \langle \hat{\brho}_t, \v_0 \rangle,
\end{equation}
Knowing that $\Phi(\v_0)$ achieves $\text{Re}\langle \hat{\brho}_t, \v_0 \rangle = | \langle {\brho}_t, \v_0 \rangle |$, we have
\begin{equation}
\text{Re}\langle \hat{\brho}_t, \v_0 \rangle  =  \langle \hat{\brho}_t, \v_0 \rangle = | \langle e^{ \mathrm{i} \Phi(\v_0)} {\brho}_t, \v_0 \rangle |. 
\end{equation}
Since $\hat{\brho}_t$ and $\v_0$ are unit norm, the geometric angle between them can be written as 
\begin{equation}
\cos(\theta)  =  \langle \hat{\brho}_t, \v_0 \rangle.
\end{equation}
Invoking the representation theorem in Hilbert spaces, there exists a unit vector $\v_0^{\perp}$ that lies in the plane whose normal is $\v_0$ such that
\begin{eqnarray}
\langle \hat{\brho}_t, \v_0 \rangle = \cos(\theta) \langle \v_0, \v_0 \rangle + \sin(\theta) \langle  \v_0^{\perp} ,\v_0 \rangle, \\
\langle \hat{\brho}_t - (\cos(\theta)\v_0 +  \sin(\theta) \v_0^{\perp}) , \v_0 \rangle = 0. 
\end{eqnarray}
The inner product is zero only when: 1) $\hat{\brho}_t = \cos(\theta)\v_0 +  \sin(\theta) \v_0^{\perp}$, 2) $\hat{\brho}_t - (\cos(\theta)\v_0 +  \sin(\theta) \v_0^{\perp}) $ is perpendicular to $\v_0$. The latter case occurs iff $\hat{\brho}_t - \cos(\theta)\v_0 = 0$, which is true only for $\theta = 0$, which indicates the identical solution as the former hence we have the unique representation
\begin{equation}
\hat{\brho}_t = \cos(\theta)\v_0 +  \sin(\theta) \v_0^{\perp}.
\end{equation}
Using the same representation, it is straightforward to see that the unit length element $\hat{\brho}_t^{\perp} =  -\sin(\theta)\v_0 +  \cos(\theta) \v_0^{\perp}$ satisfies
\begin{equation}
\langle  \hat{\brho}_t^{\perp}, \hat{\brho}_t,  \rangle = \langle  -\sin(\theta)\v_0 +  \cos(\theta) \v_0^{\perp},  \cos(\theta)\v_0 +  \sin(\theta) \v_0^{\perp}  \rangle = 0
\end{equation}
and hence lies, in the plane whose normal is $\hat{\brho}_t$. 

\subsection{Proof of Lemma \ref{lem:Lemma3}}\label{sec:App3}
Recalling the representation of the spectral estimate in the lifted problem we have
\begin{equation}
\hat{\mathbf{X}} = \mathcal{P}_{S} ( \mathcal{F}^H \mathcal{F} (\brho_t \brho_t^H ) ) 
\end{equation}
where $\mathcal{P}_{S}$ is the projection onto the set of symmetric matrices. 
$\hat{\mathbf{X}}_{PSD}$ is similarly written as
\begin{equation}
\hat{\mathbf{X}}_{PSD} = \mathcal{P}_{PSD} (\hat{\mathbf{X}}) = \mathcal{P}_{PSD}( \mathcal{F}^H \mathcal{F} (\brho_t \brho_t^H ) ) 
\end{equation}
since the $PSD$-cone is a subset of the set of symmetric matrices $S$, and $\mathcal{P}_{PSD}$ projection consists of projection onto $S$ by $\mathcal{P}_{S}$, followed by suppression of negative eigenvalues.  
From Lemma \ref{cor:Corr1}, denoting $\tilde{\brho}_t = \brho_t \brho_t^H$ for the solution $\brho_t$ obeying $\| \brho_t \| = 1$ we have
\begin{eqnarray}
\| \hat{\mathbf{X}} - \tilde{\brho}_t \| & \leq & \| \frac{1}{2} \mathcal{F}^H \mathcal{F}(\tilde{\brho}_t) + \frac{1}{2} (\mathcal{F}^H \mathcal{F}(\tilde{\brho}_t))^H - \tilde{\brho}_t\| \\
& \leq & \frac{1}{2} \| \mathcal{F}^H \mathcal{F}(\tilde{\brho}_t) -\tilde{\brho}_t \| +  \frac{1}{2} \| (\mathcal{F}^H \mathcal{F}(\tilde{\brho}_t))^H -\tilde{\brho}_t \|  \\ \nonumber
& \leq & \frac{1}{2} \delta + \frac{1}{2} \| ( (\tilde{\brho}_t) + \bdelta(\tilde{\brho_t}) )^H -\tilde{\brho}_t \|. \nonumber
\end{eqnarray}
And since $\tilde{\brho}_t$ is Hermitian symmetric and that the spectral norm is unaffected by adjoint operation, we have
\begin{equation}
\frac{1}{2} \| ( (\tilde{\brho}_t) + \bdelta(\tilde{\brho_t}) )^H -\tilde{\brho}_t \| =  \frac{1}{2} \| (\bdelta(\tilde{\brho_t}) )^H  \|  =  \frac{1}{2} \| \bdelta(\tilde{\brho_t})   \| \leq \frac{1}{2} \delta.
\end{equation}
Hence the spectral norm of the lifted error is upper bounded as follows: 
\begin{equation}
\| \hat{\mathbf{X}} - \tilde{\brho}_t \| \leq \delta.
\end{equation}
For the maximal eigenvalue $\lambda_0$ of $\hat{\mathbf{X}}$, we can write
\begin{equation}
\lambda_0 \geq \brho_t^H \hat{\mathbf{X}} \brho_t = \brho_t^H (\hat{\mathbf{X}} - \tilde{\brho}_t) \brho_t  +  1 \geq  1 - \delta.
\end{equation}
On the other hand, we have
\begin{eqnarray}
| \v_0^H (\hat{\mathbf{X}} - \tilde{\brho}_t) \v_0 | & \leq & \| \hat{\mathbf{X}} - \tilde{\brho}_t \| \leq \delta \ \rightarrow \ | \lambda_0 - | \v_0^H \brho_t |^2 | \leq  \delta, \\
\lambda_0 \leq \delta + | \v_0^H \brho_t |^2 | & \leq & 1 + \delta. \nonumber
\end{eqnarray}
Hence we obtain
\begin{equation}
1 - \delta \leq \lambda_0 \leq 1 + \delta.
\end{equation}
Since the positive semi-definite estimate $\hat{\mathbf{X}}_{PSD}$ only differs from $\hat{\mathbf{X}}$ by the suppression of negative eigenvalues, and since spectral initialization only preserves the leading eigenvalue-eigenvector pair $\lambda_0$, $\v_0$ and $\lambda_0 \geq 1-\delta$ where $\delta > 0$, we have the identical $\brho_0 = \sqrt{\lambda_0} \v_0$. 

\subsection{Proof of Lemma \ref{lem:Lemma4}}\label{sec:App4}
We consider the case where the spectral estimate \eqref{eq:spectralInitGWF} is projected onto the positive semi-definite cone as in LRMR. 
In this case, the estimate obtained in spectral initialization is projected onto the positive semi-definite cone to obtain $\hat{\mathbf{X}}_{PSD}$, which yields the identical initial point $\brho_0$ per Lemma \ref{lem:Lemma3}. 
Since its a symmetric, PSD matrix, we can decompose $\hat{\mathbf{X}}_{PSD}$ such that
$$
\hat{\mathbf{X}}_{PSD} = \mathbf{S}_0 \mathbf{S}_0,
$$
where $\mathbf{S}_0$ is a positive semi-definite matrix with its eigenvalues as $\sqrt{\lambda_{\hat{\mathbf{X}}_{PSD}}}$. 
Since
\begin{equation}
\v_0 := \underset{\| \v \| = 1}{\text{argmax}} \ \v^H \hat{\mathbf{X}}_{PSD} \v, 
\end{equation}
where $\v^H \hat{\mathbf{X}}_{PSD} \v = \v^H \mathbf{S}_0 \mathbf{S}_0 \v = \| \mathbf{S}_0 \v \|^2$. 
Then using the definitions of $\hat{\brho}_t$ and $\hat{\brho}_t^{\perp}$ from Lemma \ref{lem:Lemma2}, we have
\begin{eqnarray}
 \mathbf{S}_0 \hat{\brho}_t &=& \cos(\theta) \mathbf{S}_0 \v_0  + \sin(\theta)  \mathbf{S}_0 \v_0^{\perp},\\
 \mathbf{S}_0 \hat{\brho}_t^{\perp} &=& - \sin(\theta) \mathbf{S}_0 \v_0 + \cos(\theta)  \mathbf{S}_0, \v_0^{\perp}
\end{eqnarray}
and from Pythogorian theorem, since we have orthogonal components, we get
\begin{eqnarray}
 \| \mathbf{S}_0 \hat{\brho}_t \|^2 &=& \cos^2(\theta) \| \mathbf{S}_0 \v_0 \|^2  + \sin^2(\theta) \| \mathbf{S}_0 \v_0^{\perp} \|^2, \\
\| \mathbf{S}_0 \hat{\brho}_t^{\perp} \|^2 &=& \sin^2(\theta) \| \mathbf{S}_0 \v_0 \|^2 + \cos^2(\theta) \| \mathbf{S}_0 \v_0^{\perp} \|^2.
\end{eqnarray}
Consider the expression $f = \sin^2(\theta)  \| \mathbf{S}_0 \hat{\brho}_t \|^2  - \| \mathbf{S}_0 \hat{\brho}_t^{\perp} \|^2$. Following the algebra in \cite{wang2018solving}, we have
\begin{eqnarray}
 f & = & \sin^2(\theta) \left( \cos^2(\theta) \| \mathbf{S}_0 \v_0 \|^2  + \sin^2(\theta) \| \mathbf{S}_0 \v_0^{\perp} \|^2 \right) - \left(\sin^2(\theta) \| \mathbf{S}_0 \v_0 \|^2 + \cos^2(\theta) \| \mathbf{S}_0 \v_0^{\perp} \|^2 \right) \nonumber \\
& = & \sin^2(\theta) \left( \cos^2(\theta) \| \mathbf{S}_0 \v_0 \|^2  - \| \mathbf{S}_0 \v_0 \|^2 + \sin^2(\theta) \| \mathbf{S}_0 \v_0^{\perp} \|^2 \right) - \cos^2(\theta) \| \mathbf{S}_0 \v_0^{\perp} \|^2, \nonumber
\end{eqnarray}
which finally yields
\begin{equation}\label{eq:Lem4_fin1}
f = \sin^4(\theta) \left(  \| \mathbf{S}_0 \v_0^{\perp} \|^2 - \| \mathbf{S}_0 \v_0 \|^2  \right) - \cos^2(\theta) \| \mathbf{S}_0 \v_0^{\perp} \|^2. 
\end{equation}
Since $\v_0$ is the unit vector that maximizes $\| \mathbf{S}_0 \v \|^2$, with the fact that $\v_0^{\perp}$ is also a unit vector, this expression is always non-positive, hence
$$
 \sin^2(\theta)  \| \mathbf{S}_0 \hat{\brho}_t \|^2  - \| \mathbf{S}_0 \hat{\brho}_t^{\perp} \|^2 \leq 0,
$$
\begin{equation}
 \sin^2(\theta)  \leq \frac{\| \mathbf{S}_0 \hat{\brho}_t^{\perp} \|^2 }{\| \mathbf{S}_0 \hat{\brho}_t \|^2 }.
\end{equation}
Equivalently, expressing the upper bound with the spectral estimate, we obtain
\begin{equation}\label{eq:uppBound}
\sin^2(\theta)  \leq \frac{(\hat{\brho}_t^{\perp})^H \hat{\mathbf{X}}_{PSD} \hat{\brho}_t^{\perp}}{ \hat{\brho}_t^H \hat{\mathbf{X}}_{PSD} \hat{\brho}_t}.
\end{equation}

Now we consider the spectral estimate $\hat{\mathbf{X}}_{PSD}$. We know that $\hat{\mathbf{X}}_{PSD}$ is obtained by projecting the intermediate estimate $\mathbf{Y} = \mathcal{F}^H \mathcal{F} (\tilde{\brho}_t)$ onto the feasible set of PSD matrices as defined in Lemma \ref{lem:Lemma3}. 
From Lemma \ref{cor:Corr1}, we know that $\mathcal{F}^H \mathcal{F}$ is approximately an identity on the domain of rank-1, PSD matrices, hence we can write the perturbation model
\begin{equation}
\hat{\mathbf{X}}_{PSD} = \mathcal{P}_{PSD} \left( \tilde{\brho}_t + \bdelta [\tilde{\brho}_t] \right)
\end{equation}
Since projection onto the PSD cone is non-expansive under the spectral norm, we have the following
\begin{equation}
\| \hat{\mathbf{X}}_{PSD} - \tilde{\brho}_t \| \leq \| \mathbf{Y} - \tilde{\brho}_t \|.
\end{equation}
Setting $\hat{\mathbf{X}}_{PSD} = \tilde{\brho}_t  + \tilde{\e}$, the upper bound in (\ref{eq:uppBound}) can be written as
\begin{equation}
\sin^2(\theta)  \leq \frac{(\hat{\brho}_t^{\perp})^H \tilde{\brho}_t \hat{\brho}_t^{\perp} + (\hat{\brho}_t^{\perp})^H \tilde{\e} \hat{\brho}_t^{\perp} }{ \hat{\brho}_t^H \tilde{\brho}_t \hat{\brho}_t + \hat{\brho}_t^H \tilde{\e}\hat{\brho}_t }.
\end{equation}
Since we have that $\tilde{\brho}_t = \brho_t \brho_t^H$, the bound reduces to
\begin{equation}
\sin^2(\theta)  \leq  \frac{(\hat{\brho}_t^{\perp})^H \tilde{\e} \hat{\brho}_t^{\perp} }{ 1 + \hat{\brho}_t^H \tilde{\e} \hat{\brho}_t }
\end{equation}
as $\hat{\brho}_t^{\perp}$ is orthogonal to $\hat{\brho}_t$ and the global phase component in $\hat{\brho}_t = e^{j \Phi(v_0)} \brho_t$ vanishes in the quadratic form. Moreover, we know that $(\hat{\brho}_t^{\perp})^H \tilde{\e} \hat{\brho}_t^{\perp}$ is non-negative from positive semi-definitivity of $\hat{\mathbf{X}}_{PSD}$.  
Hence we further upper bound the numerator by the spectral norm of $\tilde{\e}$, and follow with the series of bounds
\begin{equation}
\sin^2(\theta)  \leq  \frac{ \| \tilde{\e} \|}{ 1 + \hat{\brho}_t^H \tilde{\e} \hat{\brho}_t } \leq \frac{ \| \bdelta(\tilde{\brho}_t) \|}{ 1 + \hat{\brho}_t^H \tilde{\e} \hat{\brho}_t } \leq \frac{ \delta \| \tilde{\brho}_t \|_F}{ 1 + \hat{\brho}_t^H \tilde{\e} \hat{\brho}_t } 
\end{equation}
where the last two inequalities follow from the non-expansiveness property of the projection operator onto the PSD cone, and the definition of $\bdelta$ from Lemma \ref{cor:Corr1}, such that the spectral norm of $\bdelta$ is upper bounded by the RIC-$\delta$ of $\mathcal{F}$.  
For the term in the denominator, since we established that $\| \tilde{\e} \|_2 \leq \delta $, we have that  $ - \delta \leq \hat{\brho}_t^H \tilde{\e} \hat{\brho}_t \leq \delta$ to finally obtain
\begin{equation}
\sin^2(\theta) \leq \frac{\delta}{ 1 + \hat{\brho}_t^H \tilde{\e} \hat{\brho}_t } \leq \frac{\delta}{ 1 - \delta }.
\end{equation}

\subsection{Proof of Lemma \ref{lem:Lemma5}}\label{sec:App5}
\subsubsection{The Upper Bound}
Noting that $\hat{\brho}_t$ is the closest solution in $\mathit{P}$ to an estimate $\brho$, we define $\brho_e = \brho - \hat{\brho}_t$. Since $\brho \in E(\epsilon)$, $\text{dist}(\brho, \brho_t ) = \| \brho_e \| \leq \epsilon$, and from reverse triangle inequality we have
$$
| \| \brho \| - \| \hat{\brho}_t \| | \leq \| \brho - \hat{\brho}_t \| \leq \epsilon
$$
Since $\| \hat{\brho}_t \| = \|{\brho}_t \| = 1$, we have that $ 1-\epsilon \leq \| {\brho} \| \leq 1+\epsilon$.
Having $\hat{\brho}_t \hat{\brho}_t^H = {\brho}_t {\brho}_t^H$, we let $\tilde{\mathbf{e}} = \brho \brho^H - \hat{\brho}_t \hat{\brho}_t^H$ denote the error in the lifted problem. Then, writing the {lifted} error $\tilde
{\brho}_e = \brho_e \brho_e^H$ as
\begin{eqnarray}
\brho_e \brho_e^H & = & (\brho - \hat{\brho}_t)(\brho - \hat{\brho}_t)^H = \brho \brho^H - \brho \hat{\brho}_t^H - \hat{\brho}_t \brho^H + \hat{\brho}_t \hat{\brho}_t^H + (2\hat{\brho}_t \hat{\brho}_t^H - 2\hat{\brho}_t \hat{\brho}_t^H) \nonumber \\
& = &\tilde{\e} - (\brho - \hat{\brho}_t) \hat{\brho}_t^H  - \hat{\brho}_t (\brho^H - \hat{\brho}_t^H),
\end{eqnarray}
finally yields the expression
\begin{equation}\label{eq:lifted_Error_fin_app5}
\tilde{\e} = \brho_e \brho_e^H + \brho_e \hat{\brho}_t^H + \hat{\brho}_t \brho_e^H
\end{equation}
for the error in the lifted domain. 
We can then write the upper bound for $\| \tilde{\e} \|_F$ as 
\begin{equation}
\| \tilde{\e} \|_F \leq \| \brho_e \brho_e^H  \|_F + \| \brho_e \hat{\brho}_t^H \|_F + \| \hat{\brho}_t \brho_e^H \|_F.
\end{equation}
Since all the arguments of the Frobenius norm in the right-hand-side have rank-1, we have $\| \cdot \|_2 = \| \cdot \|_F$. Knowing that $\| \brho_t \| = 1$ we get
\begin{equation}
\| \tilde{\e} \|_F \leq 2 \|  \hat{\brho}_t \| \| \brho_e \| + \| \brho_e \|^2 \leq (2 + \epsilon) \| \brho_e \|.
\end{equation}
Hence we obtain the upper bound
\begin{equation}
\| \brho \brho^H - {\brho}_t {\brho}_t^H \|_F \leq (2+\epsilon) \text{dist} (\brho , {\brho}_t ).
\end{equation}

\subsubsection{The Lower Bound}

Expanding the error in the lifted domain, we get
\begin{equation}\label{eq:lifterr_app5_1}
\| \tilde{\e} \|_F^2 = \| \brho \brho^H \|_F^2 + \| \brho_t \brho_t^H \|_F^2 - 2 \text{Re} \langle \brho \brho^H, \brho_t \brho_t^H \rangle_F.
\end{equation}
Since we have the rank-1 lifted signals, the Frobenius norms and the inner product reduce to
\begin{eqnarray}
 \| \tilde{\e} \|_F^2 & = & \| \brho \|^4 + \| \brho_t \|^4 - 2 | \langle \brho , \brho_t \rangle |^2 = (\| \brho \|^4 -  | \langle \brho , \brho_t \rangle |^2 ) +  (\| \brho_t \|^4 - | \langle \brho , \brho_t \rangle |^2 ) \nonumber \\
& = & (\| \brho \|^2 +  | \langle \brho , \brho_t \rangle |)(\| \brho \|^2 -  | \langle \brho , \brho_t \rangle |) + (\| \brho_t \|^2 + | \langle \brho , \brho_t \rangle |)(\| \brho_t \|^2 - | \langle \brho , \brho_t \rangle |). \label{eq:lifterr_ap5}
\end{eqnarray}
Having $\text{dist}^2(\brho, \brho_t) = \| \brho \|^2 + \| \brho_t \|^2 - 2 | \langle \brho, \brho_t \rangle | = \| \brho - \hat{\brho}_t \| \geq 0$, we can lower bound \eqref{eq:lifterr_app5_1} using \eqref{eq:lifterr_ap5} as
\begin{equation}\label{eq:lifterr_lbound}
\| \tilde{\e} \|_F^2 \geq \text{min} \left( (\| \brho \|^2 +  | \langle \brho , \brho_t \rangle |), (\| \brho_t \|^2 + | \langle \brho , \brho_t \rangle |)\right) \left( \| \brho \|^2 + \| \brho_t \|^2 - 2 | \langle \brho, \brho_t \rangle | \right).
\end{equation}

Knowing that $\text{dist}^2(\brho, \brho_t) \leq \epsilon^2$ and the result from the reverse triangle inequality on $\| \brho \|$, the terms within the minimization are further lower bounded using
\begin{eqnarray}
 2 | \langle \brho, \brho_t \rangle |  & \geq & \| \brho \|^2 + \| \brho_t \|^2 - \epsilon^2 \nonumber \\
 | \langle \brho, \brho_t \rangle |  & \geq & (1 - \epsilon). 
\end{eqnarray}
We then get the bound on the scalar multiplying $\text{dist}^2(\brho, \brho_t)$ as
\begin{equation}
\text{min} \left( (\| \brho \|^2 +  | \langle \brho , \brho_t \rangle |), (\| \brho_t \|^2 + | \langle \brho , \brho_t \rangle |)\right) \geq  (1 - \epsilon)^2 + (1-\epsilon),
\end{equation}
which yields the final lower bound as
\begin{equation}
 \| \brho \brho^H - {\brho}_t {\brho}_t^H \|_F \geq \sqrt{(1-\epsilon)(2-\epsilon)} \ \text{dist} (\brho , {\brho}_t ).
\end{equation}

\subsection{Proof of Lemma \ref{lem:Lemma6}}\label{sec:App6}
From the adjoint property of the inner product we have
$$ \| \mathcal{F} ( \brho \brho^H - \brho_t \brho_t^H ) \|^2 = \langle \brho \brho^H - \brho_t \brho_t^H , \mathcal{F}^H \mathcal{F} ( \brho \brho^H - \brho_t \brho_t^H ) \rangle_F.$$ 
Splitting the linear operator $\mathcal{F}^H \mathcal{F}$ over the rank-1 inputs, we can use the perturbation model from Lemma  \ref{cor:Corr1} such that
\begin{eqnarray}
& = & \langle \brho \brho^H - \brho_t \brho_t^H , \mathcal{F}^H \mathcal{F} (\brho \brho^H ) \rangle_F -  \langle \brho \brho^H - \brho_t \brho_t^H , \mathcal{F}^H \mathcal{F} (\brho_t \brho_t^H ) \rangle_F \nonumber \\
& = & \| \brho \brho^H - \brho_t \brho_t^H \|_F^2 + \langle \brho \brho^H - \brho_t \brho_t^H, \bdelta(\brho \brho^H) - \bdelta( \brho_t \brho_t^H) \rangle_F. \label{eq:App6_eq1}
\end{eqnarray}

Using the representation of $\tilde{\mathbf{e}} = \brho \brho^H - \brho_t \brho_t^H$ in \eqref{eq:lifted_Error_fin_app5}, and the linearity of $\bdelta$ we get 
\begin{equation}\label{eq:App6_eq2_v1}
\bdelta(\brho \brho^H) - \bdelta(\brho_t \brho_t^H) = \bdelta(\brho_e \brho_e^H) + \bdelta(\brho_e \hat{\brho}_t^H + \hat{\brho}_t \brho_e^H). 
\end{equation}
Notably, the domain of $\bdelta$ is by definition the set of rank-1, PSD matrices. 
Having $\brho_e \hat{\brho}_t^H + \hat{\brho}_t \brho_e^H$, a symmetric matrix of at most rank-2 in the argument of $\bdelta$ on the right-hand-side, we can represent \eqref{eq:App6_eq2_v1} by an eigenvalue decomposition and use the linearity of $\bdelta$ to obtain
\begin{equation}\label{eq:App6_eq2}
\bdelta(\brho \brho^H) - \bdelta(\brho_t \brho_t^H)  = \bdelta(\brho_e \brho_e^H) + \lambda_1 \bdelta(\v_1 \v_1^H) + \lambda_2 \bdelta(\v_2 \v_2^H),
\end{equation}
where $\brho_e \hat{\brho}_t^H + \hat{\brho}_t \brho_e^H = \sum_{i =1}^2 \lambda_i \v_i \v_i^H$, with $\| \v_i \| = 1$. 
Plugging \eqref{eq:App6_eq2} into \eqref{eq:App6_eq1} and applying the triangle inequality, we have
\begin{equation}\label{eq:App6_eq3}
\begin{split}
 | \langle \tilde{\mathbf{e}}, \bdelta(\brho \brho^H) - \bdelta( \brho_t \brho_t^H) \rangle_F |  \leq \ & | \langle \tilde{\mathbf{e}}, \bdelta(\brho_e \brho_e^H) \rangle_F | \\
& + |\lambda_1| | \langle \tilde{\mathbf{e}}, \bdelta(\v_1 \v_1^H) \rangle_F | + |\lambda_2| | \langle \tilde{\mathbf{e}}, \bdelta(\v_2 \v_2^H) \rangle_F |. 
\end{split}
\end{equation}
Furthermore, knowing that $\tilde{\mathbf{e}}$ is symmetric and at most rank-2, let $\tilde{\mathbf{e}} = \sum_{i = 1}^2 \sigma_i \mathbf{u}_i \mathbf{u}_i^H$, where $\| \mathbf{u}_i \| = 1$. Then, using the triangle inequality on the right-hand-side terms in \eqref{eq:App6_eq3}, and the outcome of Lemma \ref{cor:Corr1} to RIP over rank-1, PSD matrices with RIC-$\delta_1$, we have
\begin{equation}\label{eq:App6_eq3v2}
 | \langle \tilde{\mathbf{e}}, \bdelta(\brho \brho^H) - \bdelta( \brho_t \brho_t^H) \rangle_F |  \leq  (\sum_{i = 1}^2 | \sigma_i | ) \delta_1 (\| \brho_e \brho_e^H \|_F +  |\lambda_1| + |\lambda_2| ). 
\end{equation}
Furthermore, since $\tilde{\mathbf{e}}$ is rank-2 by definition, and $\sum_{i = 1}^2 | \sigma_i | = \| \tilde{\mathbf{e}} \|_{*} \leq \sqrt{2} \| \tilde{\mathbf{e}} \|_F$, we obtain
\begin{eqnarray}
 | \langle \tilde{\mathbf{e}}, \bdelta(\brho \brho^H) - \bdelta( \brho_t \brho_t^H) \rangle_F |  & \leq &\delta_1 \sqrt{2} \| \brho \brho^H - \brho_t \brho_t^H \|_F \left(\| \brho_e \brho_e^H \|_F +  |\lambda_1| + |\lambda_2| \right) \nonumber \\
& \leq &  \delta_1 \sqrt{2} \| \brho \brho^H - \brho_t \brho_t^H \|_F \left(\| \brho_e \brho_e^H \|_F +  \|  \brho_e \hat{\brho}_t^H \|_* + \| \hat{\brho}_t \brho_e^H  \|_* \right), \label{eq:App6_eq4}
\end{eqnarray}
where again we use the fact that $\| \brho_e \hat{\brho}_t^H   + \hat{\brho}_t \brho_e^H \|_* =  | \lambda_1 | + | \lambda_2 |$ by definition, from on which the triangle inequality follows.  
Invoking the rank-1 property on the terms inside the parenthesis in \eqref{eq:App6_eq4}, the Frobenius norm and the nuclear norms can be computed by the spectral norm, since $\| \cdot \| = \| \cdot \|_F = \| \cdot \|_*$ for a rank-1 argument. 
Then, having $ \| \brho_t \| = 1$, for the right-hand-side we obtain
\begin{eqnarray}
| \langle \brho \brho^H - \brho_t \brho_t^H, \bdelta(\brho \brho^H) - \bdelta( \brho_t \brho_t^H) \rangle_F | &\leq & \delta_1 \sqrt{2} \| \brho \brho^H - \brho_t \brho_t^H \|_F \left(2 \| \brho_t \| \| \brho_e \| + \| \brho_e \|^2 \right) \nonumber \\
& \leq & \delta_1 \sqrt{2} (2 + \epsilon)  \| \brho \brho^H - \brho_t \brho_t^H \|_F \ \text{dist}( \brho, \brho_t ). \label{eq:App6_eq5}
\end{eqnarray}
Using the bound 
\begin{equation}
\sqrt{(1-\epsilon)(2-\epsilon)} \text{dist}(\brho, \brho_t) \leq \| \brho \brho^H - \brho_t \brho_t^H \|_F,
\end{equation}
from Lemma \ref{lem:Lemma5}, we finally obtain the upper bound on the perturbation on $ \| \brho \brho^H - \brho_t \brho_t^H \|_F^2 $ as
\begin{equation}
| \langle \brho \brho^H - \brho_t \brho_t^H, \bdelta(\brho \brho^H) - \bdelta( \brho_t \brho_t^H) \rangle_F |  \leq \delta_1 \frac{(2+\epsilon)\sqrt{2}}{\sqrt{(1-\epsilon)(2-\epsilon)}} \| \brho \brho^H - \brho_t \brho_t^H \|_F^2.
\end{equation}
Hence, setting $\delta_2 = \delta_1 \frac{(2+\epsilon)\sqrt{2}}{\sqrt{(1-\epsilon)(2-\epsilon)}}$, we have the local RIP-2 condition satisfied with RIC-$\delta_2$.

\subsection{Proof of Lemma \ref{lem:Lemma7}}\label{sec:App7}
Having $\tilde{\brho}_t = \brho_t \brho_t^H = \hat{\brho}_t \hat{\brho}_t^H$, we re-write the gradient term in \eqref{eq:updt_term} as
$$
\nabla \mathcal{J}(\brho) = \| \brho \|^2 \brho - (\hat{\brho}_t^H \brho) \hat{\brho_t}  +  \mathcal{P}_{S}(\bdelta(\tilde{\e})) \brho
$$
where $\tilde{\e} = \brho \brho^H - \brho_t \brho_t^H$. 
Then, simply from the triangle inequality and the definition of projection operator on the set of symmetric matrices, we get
\begin{eqnarray}
\| \nabla \mathcal{J}(\brho) \| &\leq & \| \| \brho \|^2 \brho - (\hat{\brho_t}^H \brho) \hat{\brho}_t \| + \frac{1}{2} \| \bdelta(\tilde{\mathbf{e}}) \|_2 \| \brho \| + \frac{1}{2} \| \bdelta(\tilde{\mathbf{e}})^H \|_2 \| \brho \| \nonumber \\
\| \nabla \mathcal{J}(\brho) \| & \leq & \| \| \brho \|^2 \brho - (\hat{\brho_t}^H \brho) \hat{\brho}_t \| + \| \bdelta(\tilde{\mathbf{e}}) \|_2 \| \brho \|, 
\end{eqnarray}
since the spectral norm is unchanged by Hermitian transpose operation. 
For the spectral norm of $\bdelta(\tilde{\mathbf{e}})$, we use the representation in \eqref{eq:App6_eq2}, and apply the triangle inequality such that
\begin{equation}
\| \bdelta(\tilde{\mathbf{e}}) \| \leq \| \bdelta(\brho_e \brho_e^H) \|  + |\lambda_1| \| \bdelta(\v_1 \v_1^H) \| + |\lambda_2|  \| \bdelta(\v_2 \v_2^H)\|,
\end{equation}
where again, $\brho_e \hat{\brho}_t^H + \hat{\brho}_t \brho_e^H = \sum_{i =1}^2 \lambda_i \v_i \v_i^H$, with $\| \v_i \| = 1$. We obtain the identical form in \eqref{eq:App6_eq5} such that
\begin{equation}
\| \bdelta(\tilde{\mathbf{e}}) \| \leq \delta_1 (\| \brho_e \|^2 + 2 \| \brho_t \| \| \brho_e \| ) \leq \delta_1 (2 + \epsilon) \| \brho_e \|,
\end{equation}
which overall yields
\begin{eqnarray}
\| \nabla \mathcal{J}(\brho) \| & \leq & \| \| \brho \|^2 \brho - (\hat{\brho_t}^H \brho) \hat{\brho}_t \| +  \delta_1 (2 + \epsilon) \| \brho_e \| \| \brho \| \\
& \leq & \| \brho \| \left( (\| \brho \| + \| \brho_t \|) \| \brho - \hat{\brho}_t \|  + \delta_1 (2 + \epsilon) \| \brho_e \| \right). 
\end{eqnarray}
Using the fact that $\brho \in E(\epsilon)$, and $\hat{\brho}_t = e^{\mathrm{i} \phi(\brho)} \brho_t$,  we know that $ 1- \epsilon \leq \| {\brho} \| \leq 1+\epsilon$ from triangle and reverse triangle inequalities. 
Hence,
\begin{eqnarray}
 \| \nabla \mathcal{J}(\brho) \| & \leq & (1 + \epsilon) \left( (2 + \epsilon) \| \brho_e \|  + \delta_1 (2 + \epsilon) \| \brho_e \|  \right) \\
 & \leq & (1 + \epsilon) (1 + \delta_1) (2 + \epsilon) \| \brho - \hat{\brho}_t \|.
\end{eqnarray}
Thereby, setting $c =  (1 + \epsilon) (1 + \delta_1) (2 + \epsilon)$ the right-hand-side of the regularity condition is upper bounded as
\begin{equation}
\frac{1}{\alpha} \| \brho - \hat{\brho}_t \|^2 + \frac{1}{\beta} \|  \nabla \mathcal{J}(\brho) \|^2 \leq  (\frac{1}{\alpha} + \frac{c^2}{\beta}) \| \brho - \hat{\brho}_t \|^2 
\end{equation}
and the regularity condition \eqref{eq:RegularityCond} is established if the following condition holds:
\begin{equation}
\text{Re} \left( \langle \nabla \mathcal{J}(\brho),  (\brho - e^{\mathrm{i} \Phi(\brho)} \brho_t) \rangle \right) \geq  (\frac{1}{\alpha} +  \frac{c^2}{\beta}) \ \text{dist}^2( \brho, \brho_t ). 
\end{equation}

\subsection{Proof of Lemma \ref{lem:Lemma8}}\label{sec:App8}
Indeed, the form that the regularity condition is nothing but the \emph{restricted strong convexity condition.} Since $\nabla \mathcal{J}(\brho_t e^{\mathrm{i} \Phi(\brho)}) = 0 $ for any $\Phi(\brho) = \Phi_0 \in [0, 2\pi)$, by definition, one can equivalently write (\ref{eq:RegCond_new}) as
\begin{equation}
\text{Re} \left( \langle \left(\nabla \mathcal{J}(\brho) - \nabla \mathcal{J} (\brho_t e^{\mathrm{i} \Phi(\brho) }) \right),  (\brho - {\brho}_t e^{\mathrm{i} \Phi(\brho)}) \rangle \right) \geq \eta \| \brho -  {\brho}_t e^{\mathrm{i} \Phi(\brho)} \|^2,
\end{equation} 
where $\eta =  (\frac{1}{\alpha} +  \frac{c^2}{\beta})$. 
Reorganizing the terms, we have
\begin{equation}\label{eq:App8_eqn1}
\text{Re} \left( \langle \left(\nabla \mathcal{J}(\brho) - \nabla \mathcal{J} (\brho_t e^{\mathrm{i} \Phi(\brho) }) \right) - \eta (\brho -  {\brho}_t e^{\mathrm{i} \Phi(\brho)}),  (\brho - {\brho}_t e^{\mathrm{i} \Phi(\brho)}) \rangle \right) \geq 0. 
\end{equation}
Letting $ g(\brho) = \mathcal{J}(\brho) - \frac{\eta}{2} \| \brho \|^2$, we can write \eqref{eq:App8_eqn1} as:
\begin{equation}\label{eq:App8_eqn2}
\text{Re} \left( \langle \left(\nabla g(\brho) - \nabla g(\brho_t e^{\mathrm{i} \Phi(\brho) }) \right),  (\brho - {\brho}_t e^{\mathrm{i} \Phi(\brho)}) \rangle \right) \geq 0.
\end{equation}
For any $\brho$ in the $\epsilon$-ball of $\brho_t e^{\mathrm{i} \Phi(\brho)}$, \eqref{eq:App8_eqn2}
is merely the local convexity condition for $g$ at point $\hat{\brho}_t = \brho_t e^{\mathrm{i} \Phi(\brho) }$. Since the $\epsilon$-ball around $\hat{\brho}_t$ is a convex set, we can use the equivalent condition
\begin{equation}
g(\brho) \geq g( \brho_t e^{\mathrm{i} \Phi(\brho) } ) + \text{Re} \left( \nabla g(\brho_t e^{\mathrm{i} \Phi(\brho) })^H ( \brho - \brho_t e^{\mathrm{i} \Phi(\brho) } ) \right).
\end{equation}
Plugging in the definition for $g$, we obtain
\begin{equation}
\mathcal{J}(\brho) \geq \mathcal{J}(\brho_t e^{\mathrm{i} \Phi(\brho) } ) + \text{Re} \left( \nabla \mathcal{J}(\brho_t e^{\mathrm{i} \Phi(\brho) })^H ( \brho - \brho_t e^{\mathrm{i} \Phi(\brho) } ) \right) + \frac{\eta}{2} \|  \brho - \brho_t e^{\mathrm{i} \Phi(\brho) } \|^2.
\end{equation}
Since $\brho_t e^{\mathrm{i} \Phi(\brho) }$ is a global minimizer of $\mathcal{J}$, it satisfies the first order optimality condition with $\nabla \mathcal{J}(\brho_t e^{\mathrm{i} \Phi(\brho)}) = 0$, and the minimum it attains is $0$. 
Hence the condition reduces to
\begin{equation}
\mathcal{J}(\brho) \geq \frac{\eta}{2}  \text{dist}^2 (\brho, {\brho}_t).
\end{equation}


\section{Lemmas for Theorem \ref{thm:Theorem2}}\label{sec:AppThm2}

\subsection{Proof of Lemma \ref{lem:Lemma9}}\label{sec:App9}
For the intermediate stage$, \mathbf{Y}$, of the spectral estimate, we write
\begin{equation}\label{eq:App9_eqn1}
\mathbf{Y} = \frac{1}{M}\mathcal{F}^H \mathcal{F} (\brho_t \brho_t) = \frac{1}{M} \sum_{m = 1}^M ( (\mathbf{L}_{i}^m )^H \brho_t \brho_t^H \mathbf{L}_j^m ) \mathbf{L}_{i}^m (\mathbf{L}_j^m)^H. 
\end{equation}
Reorganizing the terms in \eqref{eq:App9_eqn1} and taking the expectation, we have
\begin{equation}\label{eq:App9_eqn2}
\mathbb{E}[\mathbf{Y}] =  \frac{1}{M} \sum_{m = 1}^M \mathbb{E} [ (\mathbf{L}_{i}^m (\mathbf{L}_{i}^m )^H \brho_t ) (\mathbf{L}_j^m (\mathbf{L}_j^m)^H \brho_t)^H].
\end{equation}
For fixed $m$, and having $\brho_t$ is independent of sampling vectors, the $N \times N$ matrix inside the summation has the entries of the form,
\begin{equation}\label{eq:App9_eqn5}
\sum_{n = 1}^N \sum_{{n'} = 1}^N \mathbb{E}[(\mathbf{L}_{i}^m)_k (\overline{\mathbf{L}_{i}^m})_n  (\overline{\mathbf{L}_{j}^m})_l (\mathbf{L}_{j}^m)_{n'}] {\brho_t}_n \overline{\brho_t}_{n'},
\end{equation}
where $k, l$ denote the row and column indexes, respectfully. 
Since $\mathbf{L}_i^m$ and $\mathbf{L}_j^m$ are independent of each other and have i.i.d. entries, the $4^{th}$ moments of Gaussian entries are removed as $\mathbb{E}[(\mathbf{L}_{i}^m)_k (\overline{\mathbf{L}_{i}^m})_n  (\overline{\mathbf{L}_{j}^m})_l (\mathbf{L}_{j}^m)_{n'}] = \mathbb{E}[(\mathbf{L}_{i}^m)_k (\overline{\mathbf{L}_{i}^m})_n] \mathbb{E}[(\overline{\mathbf{L}_{j}^m})_l (\mathbf{L}_{j}^m)_{n'}]$, in which the the expectations are only non-zero for $n = k$, $n' = l$, yielding 
\begin{equation}
= \mathbb{E}[|(\mathbf{L}_{i}^m)_k|^2] \mathbb{E}[| ({\mathbf{L}_{j}^m})_l |^2] {\brho_t}_k \overline{\brho_t}_{l}, 
\end{equation}
where $\mathbf{L}_{i}^m, \mathbf{L}_{j}^m \sim \mathcal{N}(0, \frac{1}{2} \mathbf{I} ) + \mathrm{i} \mathcal{N}(0, \frac{1}{2} \mathbf{I})$ have unit variance. 
Hence,
\begin{equation}
\mathbb{E} [ \mathbf{Y}  ]_{k,l} = {\brho_t}_k \overline{\brho_t}_{l},
\end{equation}
which is precisely equal to $\mathbb{E} [ \mathbf{Y} ] = \brho_t \brho_t^H$. 

\subsection{Proof of Lemma \ref{lem:Lemma10}}\label{sec:App10}
We use the machinery in the proof of the concentration bound of the Hessian in \cite{candes2015phase}.\footnote{Corresponds to Lemma 7.4 in the source material.}
By unitary invariance, we take $\brho_t = \mathbf{e}_1$, where $\mathbf{e}_1$ is the first standard basis vector. We want to establish that
\begin{equation}\label{eq:App10_eqn1}
\| \frac{1}{M} \sum_{m = 1}^M \overline{({\mathbf{L}_i^m})_1} ({\mathbf{L}_j^m})_1 \mathbf{L}_i^m (\mathbf{L}_j^m)^H - \e_1 \e_1^T \| \leq \delta.
\end{equation}
The inequality in \eqref{eq:App10_eqn1} is equivalent to
\begin{eqnarray}
I_0(\y) & := & \big| \y^H \left(\frac{1}{M} \sum_{m = 1}^M \overline{({\mathbf{L}_i^m})_1} ({\mathbf{L}_j^m})_1 \mathbf{L}_i^m (\mathbf{L}_j^m)^H - \e_1 \e_1^T  \right) \y \big| \\
& = & \big| \frac{1}{M} \sum_{m = 1}^M \overline{({\mathbf{L}_i^m})_1} ({\mathbf{L}_j^m})_1 (\y^H \mathbf{L}_i^m  (\mathbf{L}_j^m)^H \y) - | \y_1 |^2  \big| \leq \delta,
\end{eqnarray}
for any $\y \in \mathbb{C}^N$ obeying $\| \y \|  = 1$. 
Letting $\y = [\y_1, \tilde{\y}]$ where $\tilde{\y} \in \mathbb{C}^{N-1}$, and similarly partitioning the sampling vectors as $\mathbf{L}_{i}^m = [(\mathbf{L}_{i}^m)_1, \tilde{\mathbf{L}}_{i}^m]$ we have
\begin{equation}
\begin{split}
\y^H \mathbf{L}_i^m  (\mathbf{L}_j^m)^H \y & = (\mathbf{L}_i^m )_1 \overline{\y_1} \overline{(\mathbf{L}_j^m )_1} \y_1 +  (\mathbf{L}_i^m )_1 \overline{\y_1} (\tilde{\mathbf{L}}_{j}^m)^H \tilde{\y} \\
& + \tilde{\y}^H \tilde{\mathbf{L}}_{i}^m \overline{(\mathbf{L}_j^m )_1} \y_1 +  \tilde{\y}^H \tilde{\mathbf{L}}_{i}^m (\tilde{\mathbf{L}}_{j}^m)^H \tilde{\y}.
\end{split}
\end{equation}
This yields 
\begin{eqnarray}
I_0(\y) & = & \big| \frac{1}{M} \sum_{m = 1}^M ( |(\mathbf{L}_{i}^m)_1|^2 | (\mathbf{L}_{j}^m)_1|^2 -1 ) | \y_1 |^2 + |(\mathbf{L}_{i}^m)_1|^2 (\mathbf{L}_{j}^m)_1 \overline{\y_1} (\tilde{\mathbf{L}}_{j}^m)^H \tilde{\y} \cdots \nonumber \\
\cdots & + &\  |(\mathbf{L}_{j}^m)_1|^2 \overline{(\mathbf{L}_i^m )_1} \y_1 \tilde{\y}^H \tilde{\mathbf{L}}_{i}^m + \overline{({\mathbf{L}_i^m})_1} ({\mathbf{L}_j^m})_1 \tilde{\y}^H \tilde{\mathbf{L}}_{i}^m (\tilde{\mathbf{L}}_{j}^m)^H \tilde{\y} \big| \nonumber \\
& \leq & \big| \frac{1}{M} \sum_{m = 1}^M ( |(\mathbf{L}_{i}^m)_1|^2 | (\mathbf{L}_{j}^m)_1|^2 -1 ) | \y_1 |^2 \big| + \big| \frac{1}{M} \sum_{m = 1}^M |(\mathbf{L}_{i}^m)_1|^2 (\mathbf{L}_{j}^m)_1 \overline{\y_1} (\tilde{\mathbf{L}}_{j}^m)^H \tilde{\y} \big| \nonumber \\
& + &\big| \frac{1}{M} \sum_{m = 1}^M |(\mathbf{L}_{j}^m)_1|^2 \overline{(\mathbf{L}_i^m )_1} \y_1 \tilde{\y}^H \tilde{\mathbf{L}}_{i}^m \big| + \big| \frac{1}{M} \sum_{m = 1}^M \overline{({\mathbf{L}_i^m})_1} ({\mathbf{L}_j^m})_1 \tilde{\y}^H \tilde{\mathbf{L}}_{i}^m (\tilde{\mathbf{L}}_{j}^m)^H \tilde{\y} \big|. 
\end{eqnarray}
Due to the independence $\mathbf{L}_i^m$ and $\mathbf{L}_j^m$, and from the fact that they are zero mean, unit variance i.i.d. Gaussian entry vectors, all four terms on the right-hand-side are measures of distance to expected values. 

For the second and third terms, we have 
\begin{equation}
\begin{split}
\big| \frac{1}{M} \sum_{m = 1}^M |(\mathbf{L}_{i}^m)_1|^2 (\mathbf{L}_{j}^m)_1 \overline{\y_1} (\tilde{\mathbf{L}}_{j}^m)^H \tilde{\y} \big|  \leq \frac{1}{M} \sqrt{\sum_{m = 1}^M \big| |(\mathbf{L}_{i}^m)_1|^2 (\mathbf{L}_{j}^m)_1 \overline{\y_1} \big|^2 } \sqrt{\sum_{m = 1}^M \big| (\tilde{\mathbf{L}}_{j}^m)^H \tilde{\y} \big|^2}  \\
\leq | {\y_1} |  \frac{1}{M}  \sqrt{N \sum_{m = 1}^M  |(\mathbf{L}_{i}^m)_1|^4 |(\mathbf{L}_{j}^m)_1|^2 } \left( \frac{1}{\sqrt{N}}\sum_{m = 1}^M \big| (\tilde{\mathbf{L}}_{j}^m)^H \tilde{\y} \big| \right)
\end{split}
\end{equation}
where the inequalities follow from Cauchy-Schwartz and $\ell_2 \leq \ell_1$. We then invoke Hoeffding's inequality from Proposition 10 in \cite{vershynin2010introduction}. For any $\delta_0$ and $\gamma$, there exists a constant $C(\delta_0, \gamma)$ such that $M \geq C(\delta_0, \gamma) \sqrt{N \sum_{m = 1}^M  |(\mathbf{L}_{i}^m)_1|^4 |(\mathbf{L}_{j}^m)_1|^2 }$ where
\begin{equation}
\big| \frac{1}{M} \sum_{m = 1}^M |(\mathbf{L}_{i}^m)_1|^2 (\mathbf{L}_{j}^m)_1 \overline{\y_1} (\tilde{\mathbf{L}}_{j}^m)^H \tilde{\y} \big|   \leq \delta_0 \| \tilde{\y} \| | {\y_1} | \leq \delta_0,
\end{equation}
with probability $1 - 3 e^{- 2 \gamma N}$. For the final term, we invoke the Bernstein-type inequality of Proposition 16 in \cite{vershynin2010introduction} per \cite{candes2015phase}, such that for any positive constants $\delta_0$, $\gamma$, there exists the constant $C(\delta_0, \gamma)$ with
\begin{equation}
M \geq C(\delta_0, \gamma) \left( \sqrt{N \sum_{m = 1}^M  |(\mathbf{L}_{i}^m)_1|^2 |(\mathbf{L}_{j}^m)_1|^2 } + N \underset{m = 1 \cdots M}{max} \ |(\mathbf{L}_{i}^m)_1| |(\mathbf{L}_{j}^m)_1| \right)
\end{equation}
where
\begin{equation}
\big| \frac{1}{M} \sum_{m = 1}^M \overline{({\mathbf{L}_i^m})_1} ({\mathbf{L}_j^m})_1 \tilde{\y}^H \tilde{\mathbf{L}}_{i}^m (\tilde{\mathbf{L}}_{j}^m)^H \tilde{\y} \big| \leq \delta_0 \| \tilde{\y} \|^2 \leq \delta_0
\end{equation}  
with probability $1 - 2 e^{-2 \gamma N}$. 

To control the remaining terms, we use Chebyshev's inequality, 
per \cite{candes2015phase}. 
For any $\epsilon_0 > 0$ there exists a constant $C$ with $M \geq C \cdot N$ such that the following hold
\begin{equation}
\big| \frac{1}{M} \sum_{m = 1}^M ( |(\mathbf{L}_{i}^m)_1|^2 | (\mathbf{L}_{j}^m)_1|^2 -1 ) | \y_1 |^2 \big| \leq  \epsilon_0 | \y_1 |^2,
\end{equation}
\begin{equation}
\big| \frac{1}{M} \sum_{m = 1}^M ( |(\mathbf{L}_{i}^m)_1|^4 | (\mathbf{L}_{j}^m)_1|^2 -1 )\big| \leq \epsilon_0,  \ \big| \frac{1}{M} \sum_{m = 1}^M ( |(\mathbf{L}_{i}^m)_1|^2 | (\mathbf{L}_{j}^m)_1|^4 -1 )\big| \leq \epsilon_0 
\end{equation}
with probability at least $1 - 3N^{-2}$. Moreover, from union bound we have 
\begin{equation}
\underset{m = 1 \cdots M}{\text{max}} \ |(\mathbf{L}_{i,j}^m)_1| \leq \sqrt{10 \log M}
\end{equation}
with probability at least $1 - 2N^{-2}$. 
As in \cite{candes2015phase}, we denote the event that the results from Chebyshev's inequality hold by $E_0$. Then, on event $E_0$, combining all the terms, the inequality
\begin{equation}
I_0(\y)  \leq \epsilon_0 | \y_1 |^2 + \delta_0 | y_1|  \| \tilde{\y} \| + \delta_0 \| \tilde{\y} \|^2 \leq \epsilon_0 + 2 \delta_0,
\end{equation}
holds with probability at least $1- 8e^{-2 \gamma N}$. We then follow by the $\epsilon$-net argument of \cite{candes2015phase} via Lemma 5.4 in \cite{vershynin2010introduction} to bound the operator norm such that
\begin{equation}
\underset{\y \in \mathcal{S}_{\mathbb{C}^N}}{\text{max}} I_0(\y) \leq \underset{\y \in \mathcal{N}}{\text{max}} I_0(\y) \leq  2 \epsilon_0 + 4 \delta_0,
\end{equation}
where $\mathcal{S}_{\mathbb{C}^N}$ is the unit sphere in $\mathbb{C}^N$ and $\mathcal{N}$ is an $1/4$-net of $\mathcal{S}_{\mathbb{C}^N}$. 
Then, choosing appropriate $\epsilon_0$, $\delta_0$ and $\gamma$, and applying the union bound we have
\begin{equation}
\| \frac{1}{M} \sum_{m = 1}^M \overline{({\mathbf{L}_i^m})_1} ({\mathbf{L}_j^m})_1 \mathbf{L}_i^m (\mathbf{L}_j^m)^H - \e_1 \e_1^T \| \leq \delta
\end{equation}
with probability $1 - 8e^{- \gamma N}$ for $\delta = 2 \epsilon_0 + 4 \delta_0$, and 
\begin{equation}
\begin{split}
M \geq C' \Biggl( \sqrt{N \sum_{m = 1}^M  |(\mathbf{L}_{i}^m)_1|^4 |(\mathbf{L}_{j}^m)_1|^2 } & + \sqrt{N \sum_{m = 1}^M  |(\mathbf{L}_{i}^m)_1|^2 |(\mathbf{L}_{j}^m)_1|^2 } \\
& + N \underset{m = 1 \cdots M}{max} \ |(\mathbf{L}_{i}^m)_1| |(\mathbf{L}_{j}^m)_1| \Biggr) .
\end{split}
\end{equation}
From $E_0$, we have $M \geq C \cdot N$ which gives $M = \mathcal{O}(N \log N)$, where overall event holds with probability at least $1 - 8e^{- \gamma N} - 5N^{-2}$, hence, the proof is complete. 




\end{document}